\documentclass[12pt,a4paper,twoside]{article}
\usepackage{hyperref}
\usepackage{upgreek}
\usepackage{textgreek}
\usepackage{amssymb,amsmath,amsthm}
\usepackage{cite}
\usepackage{tikz}
\usepackage[utf8]{inputenc}
\usetikzlibrary{arrows,backgrounds,decorations.pathmorphing,decorations.pathreplacing,positioning,fit}
\usepackage{enumerate}
\usepackage{multicol}
\usepackage{accents}
\usepackage{stackengine}
\usepackage[conditional,light,first,bottomafter]{draftcopy}
\draftcopyName{DRAFT\space\today}{130}
\draftcopySetScale{65}
\usepackage{lmodern}
\usepackage{microtype}
\usepackage{graphicx}
\usepackage{subcaption}
\usepackage{dsfont}
\usepackage{colonequals}
\usepackage{fullpage}
\usepackage{setspace}
\singlespacing
\makeatletter
\renewcommand{\section}{\@startsection%
{section}%
{1}%
{0em}%
{1.7em}%
{1.2em}%
{\normalfont\large\centering\bfseries}}
\renewcommand{\@seccntformat}[1]%
{\csname the#1\endcsname.\hspace{0.5em}}
\makeatother


\numberwithin{equation}{section}
\newtheorem{theorem}{Theorem}[section]
\newtheorem{proposition}[theorem]{Proposition}
\newtheorem{lemma}[theorem]{Lemma}
\newtheorem{corollary}[theorem]{Corollary}
\newtheorem*{conjecture}{Conjecture}
\theoremstyle{definition}
\newtheorem{definition}[theorem]{Definition}
\newtheorem{remark}[theorem]{Remark}

\newtheorem*{acknowledgments}{Acknowledgments}

\newcommand{\abs}[1]{\left|#1\right|}
\newcommand\rEu[1]{_{\upharpoonright_{#1}}}
\newcommand\dS{\displaystyle}
\newcommand{\llb}{\left\lbrace}
\newcommand{\rrb}{\right\rbrace}
\newcommand\N{{\mathbb N}}
\newcommand\C{{\mathbb C}}
\renewcommand{\H}{\mathcal{H}}
\newcommand{\ceq}{\colonequals}
\newcommand\cc[1]{\overline {#1}}
\newcommand\vb[1]{\langle {#1}|}
\newcommand\vk[1]{|{#1}\rangle}
\newcommand\vba[1]{\left\langle {#1} \right |}
\newcommand\vka[1]{\left |{#1}\right\rangle}
\newcommand\ip[2]{\left\langle {#1},{#2} \right\rangle}

\newcommand\oP[2]{{#1}\oplus {#2} }
\newcommand\oM[2]{{#1}\ominus {#2} }
\newcommand\cA[1]{\mathcal {#1}}

\DeclareMathOperator{\ran}{ran}
\DeclareMathOperator{\Span}{span\,}
\DeclareMathOperator{\mo}{mod}
\DeclareMathOperator{\eff}{eff}
\DeclareMathOperator{\tr}{tr}

\makeatletter
\newcommand{\mathleft}{\@fleqntrue\@mathmargin0pt}
\newcommand{\mathcenter}{\@fleqnfalse}
\makeatother
\begin{document}
\begin{titlepage}
\title{Transition maps between Hilbert subspaces and quantum energy transport
\footnotetext{Mathematics Subject Classification(2010):
82C70, 
43A32, 
47D07. 
}
\footnotetext{Keywords: Discrete Fourier transform,  Markov generators, Invariant states.}}
\author{
\normalsize\textbf{Jorge R. Bola\~nos-Serv\'in, Roberto Quezada and Josu\'e I. Rios-Cangas}
\\
\small Departamento de Matem\'aticas \\[-1.6mm] 
\small Universidad Aut\'onoma Metropolitana, Iztapalapa Campus \\[-1.6mm] 
\small San Rafael Atlixco 186, 09340 Iztapalapa, Mexico City.\\[-1.6mm]
\small \texttt{jrbs@xanum.uam.mx}, \;
\small \texttt{roqb@xanum.uam.mx}, \; 
\small \texttt{jottsmok@xanum.uam.mx}
}
\date{\today}
\maketitle

\begin{center}
\begin{minipage}{5in}
\centerline{{\bf Abstract}} \medskip
We use  a natural generalization of the discrete Fourier transform to define transition maps between Hilbert subspaces and the global transport operator $Z$. By using  these transition maps as Kraus (or noise) operators, an extension of the quantum energy transport model of \cite{MR3860251} describing the dynamics of an open quantum system of $N$-levels is presented. We deduce the structure of the invariant states which can be recovered by transporting states supported on the first level.\end{minipage}
\end{center}
\thispagestyle{empty}
\end{titlepage}

\section{Introduction}\label{sec:intro}
In \cite{MR3399653} a two-level quantum model of the excitation energy transfer in quantum many-particle systems with a dipole interaction with a quantum field, was studied using the stochastic limit method to describe the dynamics, i.e., approximating the quantum dynamics with quantum random processes. Motivated by this work, a variation of this model that incorporates maximally entangled states and a transport operator, was studied in \cite{MR3860251,MR4107240} by framing it into the family of weak coupling limit type (WCLT) Gorini Kossakowski Su\-dhar\-san Linblad (GKSL) generators, with degenerate reference Hamiltonian.  In \cite{MR3860251} a characterization of all invariant states belonging to the commutant of the  system Hamiltonian was given, while in \cite{MR4107240} a structure of invariant states of the semigroup was shown to be a convex combination of two states, one supported on the interaction-free subspace $W$ and the other one on its orthogonal complement. In both cases a particular operator $Z$ was seen to govern the transport dynamics of the model and is fundamental in determining the structure of invariant states, reason why we regard it now as the transport operator.

The aim of this paper is to study an extension of the aforementioned models in where the system involves $N$-levels. It is worth pointing out that the transition maps used in those models are unitarily equivalent to a natural generalization of the discrete Fourier transform (DFT) between two Hilbert spaces, which is called transition map (see Definition \ref{def:transition-operator}). We emphasize that our approach generalizes the original model seen in \cite{MR3399653} as well as its variation in \cite{MR3860251}  (q.v. \cite{MR4107240}). Fixing an orthonormal basis of the Hilbert space, the transport operator $Z$ is defined in terms of the transition operators $Z_{k,k+1}$ between consecutive levels. By defining a suitable entangled basis $\{\varphi_{a_k} \,  : \, 0\leq a\leq n_{k}-1\}$ at each level $k$, a subspace $V$ is singled out to contain harmonic projections of the associated GKSL generator. We recall that the support projections of invariant states are subharmonic (see for instance \cite{MR1878987}). Thus, knowledge of subharmonic projections gives valuable insight on nontrivial invariant states. On the other hand, since the dynamics of  the invariant states supported on the interaction-free subspace $W$ is trivial, we restrict our attention to states supported on $\oM{V}{W}$ to give a full characterization of invariance. Namely, we show that a state $\rho$ supported on $\oM{V}{W}$  is invariant if and only if there exists a state $\tau$ supported on $P_1\oM{V}{W}$ such that $\rho$ may be recovered by moving $\tau$, via a CP map implemented by $Z$, throughout the whole subspace $V$ (see Theorems~\ref{th:characterisation-invariant-state} and~\ref{th:invariant-states-from-statesV1}). In sum, states supported on $P_1V$ completely determine the invariant states supported on $V$,  by means of  the transport operator $Z$.

The paper has a simple structure: Transition maps are defined in Section~\ref{sec:transition-maps} while the quantum transport model is described in Section~\ref{sec:N-energy-levels} To be more thorough, in Subsection~\ref{subs:two-levels} we define the transition operator between two mutually orthogonal subspaces (levels) of a Hilbert space by means of a rectangular DFT and derive some properties used in the sequel. We then restrict the transitions to consecutive levels and define the transport operator $Z$ in Subsection~\ref{subs:N-levels} In Subsection~\ref{sub-WCLT} we define the weak coupling limit type GKSL generator of the N-level extension of the model in \cite{MR3399653,MR3860251}. The main results of the paper are contained in Subsections~\ref{sub-harmonic} and~\ref{sub-invstates}, where some harmonic projections onto  suitable subspaces are studied and the structure of invariant states therein supported is deduced.

\section{Transition maps}
\label{sec:transition-maps}\setcounter{equation}{0}

\subsection{Two levels}
\label{subs:two-levels}

Let $\H$ be a finite dimensional complex Hilbert space and for $k, k'\in\N$, let $E_{k}, E_{k'}$ be mutually orthogonal subspaces of $\H$. Set $n_{k}\ceq\dim E_{k}$. We call $k, k'$ levels and denote by $\{\vk{a_k}\,:\,0\leq a\leq n_{k}-1\}$ the canonical orthonormal basis (onb) of the subspace $E_{k}$.

At each level, we consider the \emph{discrete Fourier transform},   
\begin{gather*}
F_k\ceq\frac{1}{\sqrt{n_k}}\sum_{a,a'=0}^{n_k-1}\zeta_k^{aa'}\vk{a'_k}\vb{a_k}\,,\quad k\geq 1
\end{gather*} 
where $\zeta_k\ceq e^{2\pi i/n_k}$ and $\vk{a'_k}\vb{a_k}$ denotes the rank-one operator from $E_{k}$ into itself defined as $\vk{a'_k}\vb{a_k}u = \langle a_{k}, u\rangle a'_{k}$. Coefficients $\zeta_k$ satisfy 
\begin{gather}
\label{eq:property-zetas}
\sum_{a=0}^{n_k-1}\zeta_k^{ja}=n_k\delta_{(j\mo n_k)\, 0}\,, \quad j=0,1,\dots
\end{gather} 
where $\delta_{ij}$ denotes the \emph{Dirac delta function}. 

Recall that $F_{k}$ is unitary and con-involutory on $E_{k}$: i.e., $F_{k}\cc{F}_{k}=I$, where 
\begin{gather*}
\cc{F}_k\ceq \frac{1}{\sqrt{n_k}}\sum_{a,a'=0}^{n_k-1}\cc{\zeta}_k^{aa'}\vk{a'_k}\vb{a_k}\,,\quad k\geq 1
\end{gather*}
 (cf.\cite[Probl.\, 2.2.P10]{MR2978290}).

Motivated by the above, the following is a natural generalization of the discrete Fourier transform.

\begin{definition}\label{def:transition-operator} The transition operator $Z_{k, k'}$ from $E_{k}$ into $E_{k'}$ is given  by 
\begin{gather}\label{eq:Gen-discreteFF}
Z_{k, k'}=\frac{1}{\sqrt{n_k}}\sum_{b=0}^{n_{k'}-1}\sum_{a=0}^{n_k-1}\zeta_k^{b a}\vk{b_{k'}}\vb{a_k}\,,\quad k, k' \geq 1
\end{gather} being $\{\vk{b_{k'}}\,:\,0\leq b\leq n_{k'}-1\}$ the canonical onb of $E_{k'}$.  
\end{definition}
Clearly, $Z_{k,k}=F_{k}$, $Z_{k, k'}^{2}=0$, $Z_{k,k'}^{*2}=0$, $P_{k'}Z_{k,k'}=Z_{k,k'}$ and $Z_{k,k'}P_{k}=Z_{k,k'}$,
where $P_{k}$ is  the orthogonal projection onto $E_{k}$. Besides, taking adjoints, $Z_{k,k'}^{*}P_{k'}=Z_{k,k'}^{*}$, $P_{k}Z_{k,k'}^{*}=Z_{k,k'}^{*}$. These relations will be used freely along this work. 
\begin{theorem}\label{properties-Zk-kprime}
For $n_{k'} \leq n_k$, the following statements hold true: 
\begin{enumerate}
\item\label{eq:p-zk-k01} $Z_{k,k'} Z_{k,k'}^{*}=P_{k'}$.
\item\label{eq:p-zk-k02} $Z_{k,k'}^{*}Z_{k,k'}$ is a projection and $Z_{k,k'}^{*}Z_{k,k'}\leq P_{k}$, with equality if and only if $k=k'$, viz. $ Z_{k,k'}^{*}Z_{k,k'}$ is a sub-projection of $P_{k}$.
\item\label{eq:p-zk-k03} $P_{k}-Z_{k,k'}^{*}Z_{k,k'}= \displaystyle \frac{1}{n_k}\sum_{a,a'=0}^{n_k-1}\sum_{b=n_{k+1}}^{n_k-1}\zeta_k^{b(a-a')}\vk{a'_k}\vb{a_k}\,.$
\item\label{eq:p-zk-k04} 
$Z_{k,k'}F_k^*=\displaystyle \sum_{a=0}^{n_{k'}-1}\vk{a_{k'}}\vb{a_k}\,.$
\end{enumerate}
\end{theorem}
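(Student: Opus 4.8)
The whole statement reduces to bookkeeping with the orthogonality relation \eqref{eq:property-zetas} and the support identities $P_{k'}Z_{k,k'}=Z_{k,k'}=Z_{k,k'}P_{k}$ recorded just above. First I would write the adjoint from \eqref{eq:Gen-discreteFF}, namely $Z_{k,k'}^{*}=\frac{1}{\sqrt{n_k}}\sum_{b=0}^{n_{k'}-1}\sum_{a=0}^{n_k-1}\cc{\zeta}_k^{\,ba}\vk{a_k}\vb{b_{k'}}$, and compose. For \ref{eq:p-zk-k01}, forming $Z_{k,k'}Z_{k,k'}^{*}$ and using $\langle a_k,a'_k\rangle=\delta_{aa'}$ leaves $\frac{1}{n_k}\sum_{b,b'=0}^{n_{k'}-1}\big(\sum_{a=0}^{n_k-1}\zeta_k^{(b-b')a}\big)\vk{b_{k'}}\vb{b'_{k'}}$; by \eqref{eq:property-zetas} the inner sum equals $n_k\delta_{((b-b')\mo n_k)\,0}$, and here the hypothesis $n_{k'}\le n_k$ is decisive: since $0\le b,b'\le n_{k'}-1\le n_k-1$ we have $|b-b'|<n_k$, so the delta forces $b=b'$ and the expression collapses to $\sum_{b=0}^{n_{k'}-1}\vk{b_{k'}}\vb{b_{k'}}=P_{k'}$.

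Item \ref{eq:p-zk-k02} is then essentially formal: using \ref{eq:p-zk-k01} together with $P_{k'}Z_{k,k'}=Z_{k,k'}$ gives $(Z_{k,k'}^{*}Z_{k,k'})^{2}=Z_{k,k'}^{*}(Z_{k,k'}Z_{k,k'}^{*})Z_{k,k'}=Z_{k,k'}^{*}P_{k'}Z_{k,k'}=Z_{k,k'}^{*}Z_{k,k'}$, so $Z_{k,k'}^{*}Z_{k,k'}$ is self-adjoint and idempotent, hence a projection, while $Z_{k,k'}P_{k}=Z_{k,k'}$ places its range inside $E_k$, making it a sub-projection of $P_k$. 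For the equality clause I would count traces: $\tr(Z_{k,k'}^{*}Z_{k,k'})=\tr(Z_{k,k'}Z_{k,k'}^{*})=\tr P_{k'}=n_{k'}$ against $\tr P_k=n_k$, and a sub-projection of $P_k$ equals $P_k$ iff it has trace $n_k$, that is iff $n_{k'}=n_k$ — the case $k=k'$, where directly $Z_{k,k}=F_k$ is unitary on $E_k$ so $F_k^{*}F_k=P_k$. For item \ref{eq:p-zk-k03} I would instead contract $Z_{k,k'}^{*}Z_{k,k'}$ over $\langle b_{k'},b'_{k'}\rangle$ to obtain $\frac{1}{n_k}\sum_{a,a'=0}^{n_k-1}\big(\sum_{b=0}^{n_{k'}-1}\zeta_k^{b(a-a')}\big)\vk{a'_k}\vb{a_k}$, write $P_k$ in the same shape as $\frac{1}{n_k}\sum_{a,a'}\big(\sum_{b=0}^{n_k-1}\zeta_k^{b(a-a')}\big)\vk{a'_k}\vb{a_k}$ via \eqref{eq:property-zetas}, and subtract: the two inner sums differ exactly by the tail $\sum_{b=n_{k'}}^{n_k-1}$, which with $k'=k+1$ is the stated formula.

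For item \ref{eq:p-zk-k04}, writing $F_k^{*}=\frac{1}{\sqrt{n_k}}\sum_{c,c'=0}^{n_k-1}\cc{\zeta}_k^{\,cc'}\vk{c_k}\vb{c'_k}$ and composing with \eqref{eq:Gen-discreteFF}, the contraction over $\langle a_k,c_k\rangle$ produces $\frac{1}{n_k}\sum_{b=0}^{n_{k'}-1}\sum_{c'=0}^{n_k-1}\big(\sum_{a=0}^{n_k-1}\zeta_k^{a(b-c')}\big)\vk{b_{k'}}\vb{c'_k}$, which by \eqref{eq:property-zetas} and once more $n_{k'}\le n_k$ (so that $0\le b\le n_{k'}-1<n_k$ and $0\le c'\le n_k-1$ force $b=c'$ whenever $(b-c')\mo n_k=0$) equals $\sum_{b=0}^{n_{k'}-1}\vk{b_{k'}}\vb{b_k}$. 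None of the steps is genuinely difficult; the one point that recurs and needs care is legitimizing the collapse of the $\mo n_k$ Kronecker delta to an honest equality of indices, valid precisely because every summation range lies inside $\{0,\dots,n_k-1\}$ — which is exactly where $n_{k'}\le n_k$ is used — and, relatedly, reading the ``equality'' in \ref{eq:p-zk-k02} as the dimension condition $n_{k'}=n_k$.
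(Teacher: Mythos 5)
Your proposal is correct and follows essentially the same route as the paper: direct composition of the kernels and collapse of the geometric sums via \eqref{eq:property-zetas} for items \ref{eq:p-zk-k01}, \ref{eq:p-zk-k03} and \ref{eq:p-zk-k04}, with the support identities giving idempotency in item \ref{eq:p-zk-k02}. The only (harmless) variation is in item \ref{eq:p-zk-k02}, where you settle $Z_{k,k'}^{*}Z_{k,k'}\leq P_k$ by range containment and the equality clause by comparing traces ($n_{k'}$ versus $n_k$), whereas the paper checks that $P_k-Z_{k,k'}^{*}Z_{k,k'}$ is itself a projection; your trace count in fact pins down the equality condition at least as precisely as the paper's argument.
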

\begin{proof}
In view of \eqref{eq:property-zetas}, 
\begin{align*}
Z_{k, k'}Z_{k,k'}^*&=\frac{1}{n_k}\sum_{b,b'=0}^{n_{k'}-1}\sum_{a,a'=0}^{n_k-1}\zeta_k^{ba-b'a'}\vk{b_{k'}}\delta_{a_ka'_k}\vb{b'_{k'}}\\&=\frac{1}{n_k}\sum_{b,b'=0}^{n_{k'}-1}\sum_{a=0}^{n_k-1}\zeta_k^{(b-b')a}\vk{b_{k'}}\vb{b'_{k'}}=\frac{1}{n_k}\sum_{b,b'=0}^{n_{k'}-1}n_k\delta_{bb'}\vk{b_{k'}}\vb{b'_{k'}}=P_{k'}\,.
\end{align*} This proves \ref{eq:p-zk-k01}.

Operator $Z_{k,k'}^*Z_{k,k'}$ is selfadjoint and by the above item,
\begin{gather*}
(Z_{k,k'}^*Z_{k,k'})^2=Z_{k,k'}^*Z_{k,k'}Z_{k,k'}^*Z_{k,k'}=Z_{k,k'}^*P_{k'}Z_{k,k'}=Z_{k,k'}^*Z_{k,k'}\,,
\end{gather*}
since $Z_{k,k'}^*P_{k'}= Z_{k,k'}^*$.
Thus, one has that $Z_{k,k'}^*Z_{k,k'}$ is a projection. Now, $P_k-Z_{k,k'}^*Z_{k,k'}$ is selfadjoint and 
\begin{align*}
(P_k-Z_{k,k'}^*Z_{k,k'})^2&=P_k^2+(Z_{k,k'}^*Z_{k,k'})^2-P_kZ_{k,k'}^*Z_{kk'}-Z_{k,k'}^*Z_{k,k'}P_k\\
&=P_k+Z_{k,k'}^*Z_{k,k'}-2Z_{k,k'}^*Z_{k,k'}=P_k-Z_{k,k'}^*Z_{k,k'}\,,
\end{align*}
which implies that $P_k-Z_{k,k'}^*Z_{k,k'}$ is a projection. Hence, $Z_{k,k'}^*Z_{k,k'} \leq P_k$ with equality if and only if $k'=k$. This asserts \ref{eq:p-zk-k02}. 

Now we have, 
\begin{align*}
Z_{k,k'}^*Z_{k,k'}&=\frac{1}{n_k}\sum_{a,a'=0}^{n_k-1}\sum_{b,b'=0}^{n_{k'}-1}\zeta_k^{ba-b'a'}\vk{a_{k}}\delta_{b_{k'}b'_{k'}}\vb{a'_{k}}\\
&=\frac{1}{n_k}\sum_{a,a'=0}^{n_k-1}\sum_{b=0}^{n_{k}-1}\zeta_k^{b(a-a')}\vk{a_{k}}\vb{a'_{k}}-\frac{1}{n_k}\sum_{a,a'=0}^{n_k-1}\sum_{b=n_{k'}}^{n_{k}-1}\zeta_k^{b(a-a')}\vk{a_{k}}\vb{a'_{k}}\,,
\end{align*}
whence by \eqref{eq:property-zetas}, item \ref{eq:p-zk-k03} holds true. 

Finally, one computes 
\begin{align*}
Z_{k,k'}F_k^*&=\frac{1}{n_k}\sum_{b=0}^{n_{k'}-1}\sum_{a,c,c'=0}^{n_k-1}\zeta_k^{ba-cc'}\vk{b_{k'}}\delta_{a_kc_k}\vb{c'_k}
\\&=\frac{1}{n_k}\sum_{b=0}^{n_{k'}-1}\sum_{c,c'=0}^{n_k-1}\zeta_k^{c(b-c')}\vk{b_{k'}}\vb{c'_k}=\frac{1}{n_k}\sum_{b=0}^{n_{k'}-1}\sum_{c'=0}^{n_k-1}n_k\delta_{bc'}\vk{b_{k'}}\vb{c'_k}\,,
\end{align*} which readily implies \ref{eq:p-zk-k04}.
\end{proof}
\begin{definition} The element $\varphi_{a_{k}}$ of the \emph{$k$-entangled} basis $\{\varphi_{a_k}\}_{a=0}^{n_k-1}$, is defined as the inverse discrete Fourier transform    
\begin{gather}\label{def-maximal-k-entangled}
\varphi_{a_k}\ceq F_k^*\vk{a_k}=\frac{1}{\sqrt{n_k}}\sum_{b=0}^{n_k-1}\zeta_k^{-ba}\vk {b_k}\,,
\end{gather}
 of the basic vector $\vk{a_{k}}\in E_{k},$ for $a=0,\dots,n_{k}-1$. 
\end{definition}

From the unitarity of $F_{k}$ the next result immediately follows.
\begin{corollary}\label{lem:max-entangled-basis}

The $k$-entangled basis is an orthonormal basis of $E_k$.
\end{corollary}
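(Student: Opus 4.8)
The plan is to read the statement off from the unitarity of $F_k$ recalled just above. Since $F_k$ is unitary on the finite-dimensional space $E_k$, so is its adjoint $F_k^*$, and any unitary operator is a linear isometry of $E_k$ onto itself; hence it carries orthonormal bases to orthonormal bases. Applying this to the canonical onb $\{\vk{a_k}\,:\,0\leq a\leq n_k-1\}$ of $E_k$ and to the unitary $F_k^*$ immediately yields that $\{\varphi_{a_k}=F_k^*\vk{a_k}\}_{a=0}^{n_k-1}$ is an orthonormal basis of $E_k$, which is the claim.

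If one prefers a self-contained verification that does not invoke the (already recalled) unitarity of $F_k$, it suffices to compute $\ip{\varphi_{a_k}}{\varphi_{a'_k}}$ from the explicit expansion \eqref{def-maximal-k-entangled}: by orthonormality of the $\vk{b_k}$ the double sum collapses to a geometric sum of the form $\frac1{n_k}\sum_{b=0}^{n_k-1}\zeta_k^{b(a-a')}$, which equals $\delta_{aa'}$ by \eqref{eq:property-zetas}, since $a-a'$ lies strictly between $-n_k$ and $n_k$, so that $(a-a')\mo n_k=0$ only when $a=a'$. Having exhibited $n_k=\dim E_k$ pairwise orthonormal vectors of $E_k$, one concludes that they span $E_k$ and hence form an onb.

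There is essentially no obstacle here: the statement is a direct corollary of Definition~\ref{def:transition-operator} (equivalently of \eqref{def-maximal-k-entangled}) together with the unitarity of $F_k$, and the only point worth a moment's attention is the elementary dimension count that upgrades an orthonormal set of the correct cardinality to a basis.
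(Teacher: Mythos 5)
Your argument is correct and matches the paper's: the corollary is stated there to follow immediately from the unitarity of $F_k$, which is exactly your first paragraph (the direct computation via \eqref{eq:property-zetas} is a fine, if redundant, extra check).
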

The $k$-canonical and $k$-entangled basis of $E_k$ are related by the discrete Fourier transform $\vk{a_k}=F_k\varphi_{a_k}$ and its inverse \eqref{def-maximal-k-entangled}.

By abuse of notation, we let $\abs Z_{k,k'}$ stand for $Z^*_{k,k'}Z_{k,k'}$. If $n_{k'} < n_{k}$, then $Z_{k,k'}$ has a nontrivial kernel. Indeed we have the following corollary. 
\begin{corollary}\label{cor:Ker-Zk-kprime} The transition operator satisfies 
\begin{gather}\label{eq:Ker-Zk-kprime}
\ker\abs Z_{k,k'}=\ker Z_{k,k'}=\Span\llb\varphi_{a_k}\,:\, a=n_{k'},\dots,n_k-1\rrb\,.
\end{gather}
\end{corollary}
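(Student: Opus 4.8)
The plan is to prove the two set equalities in \eqref{eq:Ker-Zk-kprime} in the order $\ker Z_{k,k'} \supseteq \Span\{\varphi_{a_k} : a = n_{k'},\dots,n_k-1\}$, then $\ker Z_{k,k'} = \ker\abs Z_{k,k'}$, and finally the reverse inclusion using a dimension count. For the first inclusion, I would compute $Z_{k,k'}\varphi_{a_k}$ directly. Since $\varphi_{a_k} = F_k^*\vk{a_k}$, we have $Z_{k,k'}\varphi_{a_k} = Z_{k,k'}F_k^*\vk{a_k}$, and by item \ref{eq:p-zk-k04} of Theorem~\ref{properties-Zk-kprime}, $Z_{k,k'}F_k^* = \sum_{a'=0}^{n_{k'}-1}\vk{a'_{k'}}\vb{a'_k}$. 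Applying this to $\vk{a_k}$ gives $\vk{a_{k'}}$ when $a \leq n_{k'}-1$ and $0$ when $a \geq n_{k'}$. Hence $\varphi_{a_k} \in \ker Z_{k,k'}$ precisely for $a = n_{k'},\dots,n_k-1$, which establishes the claimed inclusion and in fact already shows that $Z_{k,k'}$ acts as a partial isometry sending the entangled vectors $\varphi_{0_k},\dots,\varphi_{(n_{k'}-1)_k}$ to the canonical basis of $E_{k'}$.

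Next I would show $\ker\abs Z_{k,k'} = \ker Z_{k,k'}$. The inclusion $\ker Z_{k,k'} \subseteq \ker\abs Z_{k,k'}$ is immediate. For the converse, if $Z_{k,k'}^*Z_{k,k'}u = 0$ then $\ip{u}{Z_{k,k'}^*Z_{k,k'}u} = \no{Z_{k,k'}u}^2 = 0$, so $u \in \ker Z_{k,k'}$; this is the standard fact $\ker(T^*T)=\ker T$ and needs only a line.

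Finally, for the reverse inclusion $\ker Z_{k,k'} \subseteq \Span\{\varphi_{a_k} : a = n_{k'},\dots,n_k-1\}$, I would argue by dimensions. By item \ref{eq:p-zk-k02}, $\abs Z_{k,k'} = Z_{k,k'}^*Z_{k,k'}$ is a projection, so $\operatorname{rank}\abs Z_{k,k'} = \tr \abs Z_{k,k'}$; and by item \ref{eq:p-zk-k01}, $Z_{k,k'}Z_{k,k'}^* = P_{k'}$ has rank $n_{k'}$, so $\operatorname{rank}\abs Z_{k,k'} = \operatorname{rank} Z_{k,k'} = n_{k'}$ (the nonzero singular values of $Z_{k,k'}$ and its adjoint coincide). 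Therefore $\dim\ker Z_{k,k'} = \dim\ker\abs Z_{k,k'} = n_k - n_{k'}$, which is exactly the dimension of the span on the right-hand side. Since we have already shown that this span is contained in $\ker Z_{k,k'}$ and the $\varphi_{a_k}$ are linearly independent by Corollary~\ref{lem:max-entangled-basis}, the two subspaces coincide. I do not expect any genuine obstacle here; the only point requiring slight care is the bookkeeping that $\operatorname{rank} Z_{k,k'} = n_{k'}$ when $n_{k'} < n_k$, which follows cleanly from $Z_{k,k'}Z_{k,k'}^* = P_{k'}$ together with the equality of nonzero singular values of $Z_{k,k'}$ and $Z_{k,k'}^*$.
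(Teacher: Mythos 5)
Your proposal is correct and follows essentially the same route as the paper: the decisive step in both is applying item \ref{eq:p-zk-k04} of Theorem~\ref{properties-Zk-kprime} to compute $Z_{k,k'}\varphi_{a_k}=Z_{k,k'}F_k^*\vk{a_k}$, which kills exactly the entangled vectors with $a\geq n_{k'}$. The remaining differences are cosmetic: the paper obtains $\ker\abs Z_{k,k'}\subseteq\ker Z_{k,k'}$ from $Z_{k,k'}Z_{k,k'}^*=P_{k'}$ instead of your identity $\no{Z_{k,k'}u}^2=\ip{u}{\abs Z_{k,k'}u}$, and it reads the reverse inclusion directly off the action on the orthonormal entangled basis rather than via your rank--nullity count, both of which are equally valid.
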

\begin{proof}
From item \ref{eq:p-zk-k04} in Theorem~\ref{properties-Zk-kprime} one has
\begin{gather}\label{eq:isometry-Zk-kprime}
Z_{k,k'}\varphi_{a_k}=Z_{k,k'}F_k^*\vk{a_k}=\sum_{a'=0}^{n_{k'}-1}\vk{a'_{k'}}\delta_{a_{k}' a_k}=\begin{cases}
a_{k'}, & 0\leq a \leq n_{k'}-1\\
0, &  n_{k'}\leq a \leq n_{k}-1
\end{cases}
\end{gather}
which yields the second identity in \eqref{eq:Ker-Zk-kprime}. Clearly, $\ker Z_{k,k'}\subset \ker \abs Z_{k,k'}$ and if  $\abs Z_{k,k'}u=0$, then   
$0=Z_{k,k'}Z_{k,k'}^{*}Z_{k,k'}u=P_{k'}Z_{k,k'}u=Z_{k,k'}u$. Thus, $\ker \abs Z_{k,k'}\subset \ker Z_{k,k'}$, whence the proof is finished.  
\end{proof}

As a consequence of Corollary~\ref{cor:Ker-Zk-kprime} and since $\abs Z_{k,k'}$ is selfadjoint, we obtain the orthogonal decomposition  $E_{k}=\ran \abs Z_{k,k'}\oplus\ker\abs Z_{k,k'}$.
\begin{proposition}\label{Zk-isomorphism}
Operators ${Z_{k,k'}}\rEu{\ran \abs Z_{k,k'}}$ and $Z_{k,k'}^{*}$ are isometric isomorphisms between the subspaces $\ran \abs Z_{k,k'}\subset E_{k}$ and $E_{k'}$. 
\end{proposition}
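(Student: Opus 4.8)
The plan is to verify the three claimed properties of the restricted maps separately: that $Z_{k,k'}$ restricted to $\ran\abs Z_{k,k'}$ is an isometry, that it maps onto $E_{k'}$, and that $Z_{k,k'}^*$ is an isometric isomorphism between the same two subspaces.

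First I would handle $Z_{k,k'}^*$. By item~\ref{eq:p-zk-k01} of Theorem~\ref{properties-Zk-kprime}, $Z_{k,k'}Z_{k,k'}^*=P_{k'}$, so for any $v\in E_{k'}$ we have $\no{Z_{k,k'}^*v}^2=\ip{v}{Z_{k,k'}Z_{k,k'}^*v}=\ip{v}{P_{k'}v}=\no{v}^2$; hence $Z_{k,k'}^*$ is isometric on $E_{k'}$, and in particular injective there. Its range lies in $\ran\abs Z_{k,k'}$: indeed $Z_{k,k'}^*v=Z_{k,k'}^*P_{k'}v=Z_{k,k'}^*Z_{k,k'}Z_{k,k'}^*v=\abs Z_{k,k'}(Z_{k,k'}^*v)\in\ran\abs Z_{k,k'}$, using item~\ref{eq:p-zk-k01} again. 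Conversely, since $\abs Z_{k,k'}=Z_{k,k'}^*Z_{k,k'}$, every element of $\ran\abs Z_{k,k'}$ is of the form $Z_{k,k'}^*(Z_{k,k'}u)$ with $Z_{k,k'}u\in E_{k'}$, so $Z_{k,k'}^*$ maps $E_{k'}$ \emph{onto} $\ran\abs Z_{k,k'}$. This gives the isometric isomorphism $Z_{k,k'}^*:E_{k'}\to\ran\abs Z_{k,k'}$.

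Next I would treat ${Z_{k,k'}}\rEu{\ran\abs Z_{k,k'}}$. For isometry on $\ran\abs Z_{k,k'}$: given $w=\abs Z_{k,k'}u\in\ran\abs Z_{k,k'}$, note that $\abs Z_{k,k'}$ is a projection by item~\ref{eq:p-zk-k02}, so $\abs Z_{k,k'}w=w$, and therefore $\no{Z_{k,k'}w}^2=\ip{w}{Z_{k,k'}^*Z_{k,k'}w}=\ip{w}{\abs Z_{k,k'}w}=\ip{w}{w}=\no{w}^2$. Surjectivity onto $E_{k'}$: for $v\in E_{k'}$, item~\ref{eq:p-zk-k01} gives $v=P_{k'}v=Z_{k,k'}(Z_{k,k'}^*v)$, and by the previous paragraph $Z_{k,k'}^*v\in\ran\abs Z_{k,k'}$, so $v$ is in the image of the restriction. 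Thus ${Z_{k,k'}}\rEu{\ran\abs Z_{k,k'}}$ is an isometric isomorphism onto $E_{k'}$, and one checks it is the inverse of $Z_{k,k'}^*$ (both compositions act as the identity, again via item~\ref{eq:p-zk-k01} and the projection property).

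There is no serious obstacle here: everything reduces to the single algebraic identity $Z_{k,k'}Z_{k,k'}^*=P_{k'}$ together with the fact that $\abs Z_{k,k'}$ is a projection, both already established in Theorem~\ref{properties-Zk-kprime}. The only point requiring a little care is bookkeeping the domains and codomains correctly — in particular observing that $\ran\abs Z_{k,k'}$ is closed (automatic in finite dimensions) and that the two restricted maps are genuine two-sided inverses of one another rather than merely norm-preserving — but this is routine once the identities above are in place.
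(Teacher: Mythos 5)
Your proof is correct, but it follows a different route from the paper. You argue purely operator-algebraically: everything is deduced from the single identity $Z_{k,k'}Z_{k,k'}^{*}=P_{k'}$ (item~\ref{eq:p-zk-k01} of Theorem~\ref{properties-Zk-kprime}) together with the fact that $\abs Z_{k,k'}=Z_{k,k'}^{*}Z_{k,k'}$ is a projection — in effect the standard observation that $Z_{k,k'}$ is a partial isometry with initial projection $\abs Z_{k,k'}$ and final projection $P_{k'}$, so its restriction to $\ran\abs Z_{k,k'}$ and its adjoint are mutually inverse isometries. The paper instead works in coordinates: it uses the explicit computation $Z_{k,k'}\varphi_{a_k}=\vk{a_{k'}}$ from \eqref{eq:isometry-Zk-kprime}, multiplies by $Z_{k,k'}^{*}$ to get $Z_{k,k'}^{*}\vk{a_{k'}}=\varphi_{a_k}$, and concludes by noting that orthonormal bases are carried onto orthonormal bases, using $\ran\abs Z_{k,k'}=\Span\{\varphi_{a_k}:a=0,\dots,n_{k'}-1\}$. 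Your version is shorter, coordinate-free, and would apply verbatim to any operator whose product with its adjoint is the target projection; its only dependence on the specific structure enters through Theorem~\ref{properties-Zk-kprime}. What the paper's basis computation buys, and what your argument does not supply, are the explicit transition formulas $Z_{k,k'}\varphi_{a_k}=\vk{a_{k'}}$ and $Z_{k,k'}^{*}\vk{a_{k'}}=\varphi_{a_k}$ recorded in \eqref{eq:basis-transitions-Zk}, which the paper reuses repeatedly later (e.g.\ in Proposition~\ref{prop:Z-ak} and Proposition~\ref{cor:Ker-Z}); if one adopted your proof, those identities would still have to be extracted separately from \eqref{eq:isometry-Zk-kprime}.
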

\begin{proof}
From \eqref{eq:isometry-Zk-kprime} we get $Z_{k,k'}\varphi_{a_{k}}=|a_{k'}\rangle,$ for $a=0,\dots, n_{k'}-1$. Multiplication by $Z_{k,k'}^{*}$ on both sides yields $Z_{k,k'}^{*} \vk {a_{k'}} = \abs Z_{k,k'}\varphi_{a_{k}}= \varphi_{a_{k}}$, since $\abs Z_{k,k'}$ is a projection. Thus, orthonormal basis are sent into orthonormal basis. Hence, from the fact that $\ran \abs Z_{k,k'}=\Span \{\varphi_{a_{k}}\,:\, a=0, \dots, n_{k'}-1\}$, we conclude the assertion.
\end{proof}

The last proof shows that $Z_{k,k'}$ and $Z_{k,k'}^{*}$ perform transitions 
\begin{gather}\label{eq:basis-transitions-Zk}
Z_{k,k'}\varphi_{a_k}=\vk{a_{k'}}\qquad\mbox{and}\qquad 
Z_{k,k'}^*\vk{a_{k'}}=\varphi_{a_k}\,,
\end{gather}
between the entangled and canonical bases of  $\ran\abs Z_{k,k'}$ and $E_{k'}$, respectively. 
\begin{proposition}\label{prop:Z-ak}
The transition operator has the following properties, 
\begin{enumerate}
\item\label{itZ-ak01} $Z_{k,k'}\vk{a_{k}} = \displaystyle \frac{1}{\sqrt{n_k}}\sum_{a'=0}^{n_{k'} -1} \zeta_{k}^{a'a} |a_{k'}'\rangle$. Particularly, $
Z_{k,k'}|0_k\rangle = \Big(\frac{n_{k'}}{n_k}\Big)^{1/2} \varphi_{0_{k'}}$.
\item\label{itZ-ak02}$Z_{k', k''}Z_{k,k'}|a_{k}\rangle =\displaystyle \frac{1}{\sqrt{n_kn_{k'}}} \sum_{b=0}^{n_{k''} -1}\sum_{a'=0}^{n_{k'} -1} \zeta_{k}^{a'a} \zeta_{k'}^{ba'}|b_{k''}\rangle$. 
\end{enumerate}
\end{proposition}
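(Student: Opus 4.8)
The plan is to prove both items of Proposition~\ref{prop:Z-ak} by direct computation, expanding the definition~\eqref{eq:Gen-discreteFF} of the transition operator and applying it to the canonical basis vector $\vk{a_k}$. For item~\ref{itZ-ak01}, I would write
\[
Z_{k,k'}\vk{a_k}=\frac{1}{\sqrt{n_k}}\sum_{b=0}^{n_{k'}-1}\sum_{c=0}^{n_k-1}\zeta_k^{bc}\vk{b_{k'}}\vb{c_k}\,a_k=\frac{1}{\sqrt{n_k}}\sum_{b=0}^{n_{k'}-1}\sum_{c=0}^{n_k-1}\zeta_k^{bc}\,\delta_{ca}\,\vk{b_{k'}}\,,
\]
collapse the inner sum using $\vb{c_k}a_k=\delta_{c_k a_k}$, and rename $b\mapsto a'$ to match the stated formula. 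The particular case $a=0$ then reads $Z_{k,k'}\vk{0_k}=\frac{1}{\sqrt{n_k}}\sum_{a'=0}^{n_{k'}-1}\vk{a'_{k'}}$, and I would recognize the sum $\sum_{a'=0}^{n_{k'}-1}\vk{a'_{k'}}=\sqrt{n_{k'}}\,\varphi_{0_{k'}}$ directly from the definition~\eqref{def-maximal-k-entangled} of the entangled basis vector (taking $a=0$ there, so that $\zeta_{k'}^{-ba}=1$ for all $b$); combining the two gives the factor $(n_{k'}/n_k)^{1/2}$.

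For item~\ref{itZ-ak02}, the strategy is simply to compose: apply~\ref{itZ-ak01} once to get $Z_{k,k'}\vk{a_k}=\frac{1}{\sqrt{n_k}}\sum_{a'=0}^{n_{k'}-1}\zeta_k^{a'a}\vk{a'_{k'}}$, then apply~\ref{itZ-ak01} again (with the roles of the levels shifted by one, i.e.\ $k\mapsto k'$, $k'\mapsto k''$) to each vector $\vk{a'_{k'}}$, obtaining $Z_{k',k''}\vk{a'_{k'}}=\frac{1}{\sqrt{n_{k'}}}\sum_{b=0}^{n_{k''}-1}\zeta_{k'}^{ba'}\vk{b_{k''}}$. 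Substituting and pulling the two scalar prefactors together into $1/\sqrt{n_k n_{k'}}$ yields the claimed double sum. By linearity of $Z_{k',k''}$ this is rigorous; no issue with the order of summation arises since everything is finite-dimensional.

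There is essentially no serious obstacle here: the statement is a bookkeeping consequence of the definitions, and the only mild subtlety is making sure the index names in item~\ref{itZ-ak02} are those appearing in item~\ref{itZ-ak01} applied at the next level, so that the $\zeta_k$ and $\zeta_{k'}$ factors are attached to the correct summation variables. If anything needs care, it is confirming in the particular case of~\ref{itZ-ak01} that $\sum_{a'=0}^{n_{k'}-1}\vk{a'_{k'}}$ is indeed $\sqrt{n_{k'}}$ times the entangled vector $\varphi_{0_{k'}}$ rather than some other normalization — this follows because $\varphi_{0_{k'}}=F_{k'}^*\vk{0_{k'}}=\frac{1}{\sqrt{n_{k'}}}\sum_{b}\vk{b_{k'}}$ by~\eqref{def-maximal-k-entangled} — and keeping the convention $n_{k'}\le n_k$ (or not) consistent; in fact item~\ref{itZ-ak01} holds for all relative sizes of the levels since it does not use Theorem~\ref{properties-Zk-kprime}.
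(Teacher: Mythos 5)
Your proof is correct, but it follows a slightly different route than the paper. The paper proves item \ref{itZ-ak01} by first expanding the canonical vector in the entangled basis, $\vk{a_k}=n_k^{-1/2}\sum_{a'=0}^{n_k-1}\zeta_k^{a'a}\varphi_{a'_k}$ (the inverse of \eqref{def-maximal-k-entangled}), and then applying $Z_{k,k'}$ using the previously established transition relations \eqref{eq:basis-transitions-Zk} together with the kernel description \eqref{eq:Ker-Zk-kprime}, so that the terms with $a'\geq n_{k'}$ are annihilated and the rest are mapped to $\vk{a'_{k'}}$; you instead expand the definition \eqref{eq:Gen-discreteFF} directly against $\vk{a_k}$ and collapse the delta, never invoking Theorem~\ref{properties-Zk-kprime} or Corollary~\ref{cor:Ker-Zk-kprime}. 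Both are one-line computations; yours is more self-contained (and, as you note, valid without the standing convention $n_{k'}\leq n_k$), while the paper's reuses the entangled-basis machinery it has already built and makes the role of $\ker Z_{k,k'}$ visible. For item \ref{itZ-ak02} the two arguments coincide: the paper left-multiplies item \ref{itZ-ak01} by $Z_{k',k''}$, implicitly using item \ref{itZ-ak01} at the shifted levels exactly as you spell out, and your identification $\sum_{a'=0}^{n_{k'}-1}\vk{a'_{k'}}=\sqrt{n_{k'}}\,\varphi_{0_{k'}}$ for the particular case is the correct normalization.
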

\begin{proof}
It is immediately from \eqref{def-maximal-k-entangled} that 
$\vk{a_{k}} =n_k^{-1/2}\sum_{a'=0}^{n_{k} -1} \zeta_{k}^{a'a} \varphi_{a'_{k}}$. Hence, item \ref{itZ-ak01} follows after left multiplication by  $Z_{k,k'}$ and applying \eqref{eq:basis-transitions-Zk}, \eqref{eq:Ker-Zk-kprime}. 
Now, left multiplication of equation in item  \ref{itZ-ak01} by $Z_{k', k''}$ yields item \ref{itZ-ak02}. 
\end{proof}

The orthogonal projection of $\H$ onto $\ker\abs Z_{k,k'}$ in each level, is  
\begin{gather}
\label{eq:Orthogonal-Zk-kprime}
\abs Z_{k,k'}^{\perp}\ceq P_{k}-\abs Z_{k,k'}\,.
\end{gather}
\begin{proposition} \label{prop:explicit-rep-absZ} The following explicit representations hold true:
\begin{gather}\label{eq:explicit-rep-absZ}
\abs Z_{k,k'}= \sum_{b=0}^{n_{k'}-1}\vk{\varphi_{b_k}}\vb{\varphi_{b_k}}\qquad\mbox{and}\qquad  \abs Z_{k,k'}^\perp =\sum_{b=n_{k'}}^{n_{k}-1}\vk{\varphi_{b_k}}\vb{\varphi_{b_k}}\,.
\end{gather} 
\end{proposition}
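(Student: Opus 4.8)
The plan is to derive both representations directly from the fact, already established above, that $\abs Z_{k,k'}$ is an \emph{orthogonal projection} on $E_k$ whose action on the $k$-entangled basis is completely known. First I would recall from Corollary~\ref{lem:max-entangled-basis} that $\{\varphi_{b_k}\}_{b=0}^{n_k-1}$ is an orthonormal basis of $E_k$, so that the projection $P_k$ onto $E_k$ has the spectral expansion $P_k=\sum_{b=0}^{n_k-1}\vk{\varphi_{b_k}}\vb{\varphi_{b_k}}$. This is the only structural input needed besides the eigenvalue information.

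Next I would assemble the action of $\abs Z_{k,k'}$ on this basis. By Proposition~\ref{Zk-isomorphism}, more precisely by the transition relations \eqref{eq:basis-transitions-Zk}, one has $\abs Z_{k,k'}\varphi_{b_k}=Z_{k,k'}^*Z_{k,k'}\varphi_{b_k}=Z_{k,k'}^*\vk{b_{k'}}=\varphi_{b_k}$ for $0\le b\le n_{k'}-1$, while Corollary~\ref{cor:Ker-Zk-kprime} gives $\abs Z_{k,k'}\varphi_{b_k}=0$ for $n_{k'}\le b\le n_k-1$. Since $\abs Z_{k,k'}$ is selfadjoint and the $\varphi_{b_k}$ form an onb of $E_k$, expanding $\abs Z_{k,k'}=\sum_{b,b'}\ip{\varphi_{b'_k}}{\abs Z_{k,k'}\varphi_{b_k}}\vk{\varphi_{b'_k}}\vb{\varphi_{b_k}}$ and using orthonormality of the entangled basis collapses the double sum to $\abs Z_{k,k'}=\sum_{b=0}^{n_{k'}-1}\vk{\varphi_{b_k}}\vb{\varphi_{b_k}}$, which is the first claimed identity.

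The second representation then follows at once by subtraction: from the definition \eqref{eq:Orthogonal-Zk-kprime} of $\abs Z_{k,k'}^\perp$ together with the expansion of $P_k$ above, $\abs Z_{k,k'}^\perp=P_k-\abs Z_{k,k'}=\sum_{b=0}^{n_k-1}\vk{\varphi_{b_k}}\vb{\varphi_{b_k}}-\sum_{b=0}^{n_{k'}-1}\vk{\varphi_{b_k}}\vb{\varphi_{b_k}}=\sum_{b=n_{k'}}^{n_k-1}\vk{\varphi_{b_k}}\vb{\varphi_{b_k}}$.

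I do not expect any real obstacle here: all three ingredients — orthonormality of the $k$-entangled basis, the eigenvalue $1$ on the first $n_{k'}$ vectors, and the kernel description — are in place, so the argument is essentially a one-line spectral decomposition once they are combined. A computation-free variant would be to observe that $\ran\abs Z_{k,k'}=\Span\{\varphi_{b_k}:0\le b\le n_{k'}-1\}$ by Proposition~\ref{Zk-isomorphism}, note that an orthogonal projection is uniquely determined by its range, and check that the right-hand side of the first formula is the orthogonal projection onto exactly this span; the second formula is its orthogonal complement within $E_k$.
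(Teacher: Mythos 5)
Your proof is correct, but it takes a genuinely different route from the paper's. The paper proves the second identity first: it starts from the explicit formula for $P_k-Z_{k,k'}^*Z_{k,k'}$ in item \ref{eq:p-zk-k03} of Theorem~\ref{properties-Zk-kprime} (the double sum with coefficients $\zeta_k^{b(a-a')}$), regroups that sum so that each block is recognized, via \eqref{def-maximal-k-entangled}, as the rank-one operator $\vk{\varphi_{b_k}}\vb{\varphi_{b_k}}$, and then obtains the representation of $\abs Z_{k,k'}$ by subtraction from $P_k$. You go in the opposite direction and avoid that computation altogether: you use the previously established facts that $\abs Z_{k,k'}$ is a projection, that $\abs Z_{k,k'}\varphi_{b_k}=\varphi_{b_k}$ for $0\le b\le n_{k'}-1$ (from \eqref{eq:basis-transitions-Zk}) and that $\varphi_{b_k}\in\ker\abs Z_{k,k'}$ for $n_{k'}\le b\le n_k-1$ (Corollary~\ref{cor:Ker-Zk-kprime}), then expand in the $k$-entangled orthonormal basis and get $\abs Z_{k,k'}^\perp$ by subtraction. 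Both routes use only material preceding the proposition, so there is no circularity; yours is the more structural one (a projection is determined by its range and kernel), while the paper's is a direct algebraic rewriting of an already available matrix-coefficient formula. The only point you leave implicit is that $\abs Z_{k,k'}=P_k\abs Z_{k,k'}P_k$ (which follows from $Z_{k,k'}P_k=Z_{k,k'}$ and its adjoint relation); this is what justifies expanding the operator in an orthonormal basis of $E_k$ alone, and is worth one explicit line.
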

\begin{proof}
From item \ref{eq:p-zk-k03} of Theorem~\ref{properties-Zk-kprime} and \eqref{def-maximal-k-entangled}, the projection $\abs Z_{k,k'}^{\perp}$ has the explicit form  
\begin{align*}
\abs Z_{k,k'}^{\perp}= & \frac{1}{n_k}\sum_{a,a'=0}^{n_k-1}\sum_{b=n_{k+1}}^{n_k-1}\zeta_k^{b(a-a')}\vk{a'_k}\vb{a_k} \\ = &
 \sum_{b=n_{k+1}}^{n_k-1}\vka{\frac{1}{\sqrt{n_k}}\sum_{a'=0}^{n_k-1}\zeta_k^{-ba'} a'_k}\vba{\frac{1}{\sqrt{n_k}}\sum_{a=0}^{n_k-1}\zeta_k^{-ba} a_k }= \sum_{b=n_{k+1}}^{n_{k}-1}\vk{\varphi_{b_k}}\vb{\varphi_{b_k}}\,.   
\end{align*}
This also implies the left-hand side of \eqref{eq:explicit-rep-absZ}.
\end{proof}
\subsection{$N$-levels}
\label{subs:N-levels}
We now consider $N$-levels, with $N\in\N$. Denote by $E_{k}$ the associated subspaces with orthonormal canonical basis $\{| a_{k}\rangle : a=0, \dots, n_{k}-1\}$, where $n_{k}=\dim E_{k}$, and orthogonal projections $P_{k}=\sum_{a=0}^{n_{k}-1}|a_{k}\rangle\langle a_{k}|$ of $\H$ onto $E_{k}$, for all $k=1,\dots,N$. So, $\H$ has the canonical basis 
\begin{gather*}
\{\vk{a_k}\,:\,0\leq a\leq n_k-1,\,1\leq k\leq N\}\,.
\end{gather*}
 
The only transitions  which will be considered are those between con\-se\-cu\-ti\-ve levels $k,k+1$, namely, transitions induced by the maps $Z_{k, k+1}$ for all $k=1, \dots, N-1$. For abbreviation, we write $Z_{k}$ instead of $Z_{k, k+1}$ and we assume that $n_{k+1}\leq n_{k}$, for all  $k=1, \dots, N-1$.

The \textit{transport } operator $Z$ is defined as  
\begin{gather}\label{eq:oplus-Z} 
Z\ceq\bigoplus_{k=1}^{N-1}Z_k\,.
\end{gather}
Hence, one has that  $ZP_k=Z_{k}$. The operator \eqref{eq:oplus-Z} maps  $\bigoplus_{k=1}^{N-1} P_{k}\H$ into $\H$ and the adjoint $Z^{*}= \bigoplus_{k=2}^{N}Z_{k}^{*}$ maps $\bigoplus_{k=2}^{N}P_{k}\H$ into $\H$.

Theorem~\ref{properties-Zk-kprime} readily implies 
\begin{gather}\label{eq:Z-Zstar}
ZZ^*=\bigoplus_{k=1}^{N-1}Z_kZ_k^*= \bigoplus_{k=2}^{N}P_{k}\quad \mbox{and}\quad Z^*Z=\bigoplus_{k=1}^{N-1}Z_k^*Z_k= \bigoplus_{k=1}^{N-1}\abs{Z_k}\,.
\end{gather}
In this fashion,  $|Z|=Z^*Z$ is a subprojection of $P=\bigoplus_{k=1}^{N-1} P_{k}$, i.e., $|Z|\leq P$. Besides, if $\abs Z_k\ceq \abs ZP_k$, then it is clear that $\abs Z_k=\abs {Z_k}$, for $k=1,\dots,N-1$.
 
If $n_{k+1}< n_{k}$ for some $k=1, \dots, N-1$, then  $Z$ has a nontrivial kernel. This is a consequence of the following proposition.
\begin{proposition}\label{cor:Ker-Z} 
The transport operator satisfies 
\begin{enumerate}
\item\label{it.core-Z01} 
$ \ker Z=\ker |Z|=\dS\bigoplus_{k=1}^{N-1}\Span\llb\varphi_{a_k}\,:\, a=n_{k+1},\dots,n_k-1\rrb$.
\item \label{it.core-Z02} $Z\rEu{\ran|Z|}$ and $Z^*\rEu{ZZ^*\H}$ are isometric isomorphisms from $\ran|Z|$ onto $ZZ^{*}\H$.
\item \label{it.core-Z03} For $a=0,\dots,n_{k+1}-1$ and $ k=1,\dots,N-1$,
\begin{gather}\label{eq:basis-transitions-Z}
Z\varphi_{a_k}=\vk{a_{k+1}}\qquad\mbox{and}\qquad 
Z^*\vk{a_{k+1}}=\varphi_{a_k}\,.
\end{gather}
\end{enumerate}
\end{proposition}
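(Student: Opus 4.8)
The whole statement reduces to the two-level results via the direct-sum structure $Z=\bigoplus_{k=1}^{N-1}Z_k$, so the plan is to push that decomposition through each of the three assertions and invoke Corollary~\ref{cor:Ker-Zk-kprime} and Proposition~\ref{Zk-isomorphism} level by level. Since the maps $Z_k=Z_{k,k+1}$ act on the mutually orthogonal subspaces $E_k$, every vector $u$ in the domain $\bigoplus_{k=1}^{N-1}P_k\H$ of $Z$ decomposes uniquely as $u=\bigoplus_{k=1}^{N-1}u_k$ with $u_k\in E_k$, and then $Zu=\bigoplus_{k=1}^{N-1}Z_ku_k$; moreover $|Z|=Z^*Z=\bigoplus_{k=1}^{N-1}|Z_k|$ by~\eqref{eq:Z-Zstar}.

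For item~\ref{it.core-Z01} I would note that $Zu=0$ if and only if $Z_ku_k=0$ for every $k$, and likewise $|Z|u=0$ if and only if $|Z_k|u_k=0$ for every $k$; hence $\ker Z=\bigoplus_{k=1}^{N-1}\ker Z_k$ and $\ker|Z|=\bigoplus_{k=1}^{N-1}\ker|Z_k|$. Applying Corollary~\ref{cor:Ker-Zk-kprime} with $k'=k+1$ (legitimate since $n_{k+1}\leq n_k$) gives $\ker|Z_k|=\ker Z_k=\Span\{\varphi_{a_k}:a=n_{k+1},\dots,n_k-1\}$ for each $k$, and reassembling the summands yields the displayed formula.

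For item~\ref{it.core-Z03}, since $ZP_k=Z_k$ and $\varphi_{a_k}\in E_k$, the first identity is simply $Z\varphi_{a_k}=Z_k\varphi_{a_k}=\vk{a_{k+1}}$ for $0\leq a\leq n_{k+1}-1$, which is exactly~\eqref{eq:basis-transitions-Zk} (equivalently~\eqref{eq:isometry-Zk-kprime}); the second identity is obtained the same way, using that the restriction of $Z^*$ to $E_{k+1}$ equals $Z_k^*$, so $Z^*\vk{a_{k+1}}=Z_k^*\vk{a_{k+1}}=\varphi_{a_k}$.

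Finally, for item~\ref{it.core-Z02} I would use~\eqref{eq:Z-Zstar} to write $\ran|Z|=\bigoplus_{k=1}^{N-1}\ran|Z_k|$ and $ZZ^*\H=\bigoplus_{k=2}^{N}E_k=\bigoplus_{k=1}^{N-1}E_{k+1}$. Proposition~\ref{Zk-isomorphism} (with $k'=k+1$) tells us that $Z_k\rEu{\ran|Z_k|}$ is an isometric isomorphism of $\ran|Z_k|$ onto $E_{k+1}$, with inverse $Z_k^*\rEu{E_{k+1}}$; taking the orthogonal direct sum of these maps — which preserves isometry and bijectivity because the summands sit on mutually orthogonal subspaces — shows that $Z\rEu{\ran|Z|}$ and $Z^*\rEu{ZZ^*\H}$ are mutually inverse isometric isomorphisms between $\ran|Z|$ and $ZZ^*\H$. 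The only point requiring attention throughout is keeping straight the index shift $k\leftrightarrow k+1$ between a level and its image; there is no analytic obstacle, everything being finite-dimensional and diagonal with respect to the level decomposition.
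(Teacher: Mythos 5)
Your proposal is correct and follows essentially the same route as the paper: both reduce each item to the two-level results (Corollary~\ref{cor:Ker-Zk-kprime}, Proposition~\ref{Zk-isomorphism} and \eqref{eq:basis-transitions-Zk} with $k'=k+1$) via the direct-sum structure \eqref{eq:oplus-Z} and \eqref{eq:Z-Zstar}. The only difference is that you spell out the level-by-level decomposition of kernels and ranges a bit more explicitly than the paper does, which is harmless.
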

\begin{proof}
Item \ref{it.core-Z01} is a consequence of Corollary~\ref{cor:Ker-Zk-kprime}. Indeed, it is easily seen that $\ker Z\rEu {P_{k}\H}= \ker Z_{k}=\ker\abs Z_{k}$, for $ k=1, \dots, N-1$. Hence, item \ref{it.core-Z01} follows from  \eqref{eq:Ker-Zk-kprime}, afterwards replacing $k'=k+1$ and using  \eqref{eq:oplus-Z}. 

Now, one has by Proposition~\ref{Zk-isomorphism} that  ${Z_{k}}\rEu {\ran\abs Z_{k}}$ and $Z_{k}^{*}$ are isometric isomorphism between subspaces $\ran\abs Z_{k}$ and $E_{k+1}$, for $k=1, \dots, N-1$.  Hence, due to \eqref{eq:oplus-Z} and decomposition \eqref{eq:Z-Zstar}, item \ref{it.core-Z02} readily follows. 

Finally, item \ref{it.core-Z03} is a consequence of \eqref{eq:basis-transitions-Zk}, with $k'=k+1$.
\end{proof}

By virtue of  \eqref{eq:Z-Zstar} and \eqref{eq:Orthogonal-Zk-kprime}, we denote 
\begin{align}\label{eq:absZ-ortho}
\abs Z^{\perp}\ceq P-\abs Z = \bigoplus_{k=1}^{N-1} \abs Z_{k}^{\perp}\,.
\end{align}
\begin{corollary}
The following representations hold true, 
\begin{gather*}
\abs Z=\bigoplus_{k=1}^{N-1}\sum_{b=0}^{n_{k+1}-1}\vk{\varphi_{b_k}}\vb{\varphi_{b_k}}\qquad\mbox{and}\qquad |Z|^{\perp}=\bigoplus_{k=1}^{N-1}\sum_{b=n_{k+1}}^{n_{k}-1}\vk{\varphi_{b_k}}\vb{\varphi_{b_k}}\,.
\end{gather*} 
\end{corollary}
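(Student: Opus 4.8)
The plan is to derive both identities from the $N=2$ case already established in Proposition~\ref{prop:explicit-rep-absZ}, using the block-diagonal structure of $Z$. Recall that, by definition \eqref{eq:oplus-Z}, $Z=\bigoplus_{k=1}^{N-1}Z_k$ with $Z_k=Z_{k,k+1}$, and by \eqref{eq:Z-Zstar} and the remark following it, $\abs Z=Z^*Z=\bigoplus_{k=1}^{N-1}Z_k^*Z_k=\bigoplus_{k=1}^{N-1}\abs{Z_k}$, where each $\abs{Z_k}$ acts on $P_k\H=E_k$. Likewise $\abs Z^\perp=\bigoplus_{k=1}^{N-1}\abs Z_k^\perp$ by \eqref{eq:absZ-ortho}. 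So the whole statement reduces, summand by summand, to the two explicit representations of $\abs Z_{k,k+1}$ and $\abs Z_{k,k+1}^\perp$ furnished by \eqref{eq:explicit-rep-absZ} with $k'=k+1$.

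First I would invoke Proposition~\ref{prop:explicit-rep-absZ}, specializing $k'=k+1$, to get
\begin{gather*}
\abs{Z_k}=\abs Z_{k,k+1}=\sum_{b=0}^{n_{k+1}-1}\vk{\varphi_{b_k}}\vb{\varphi_{b_k}}\qquad\mbox{and}\qquad \abs Z_{k}^\perp=\abs Z_{k,k+1}^\perp=\sum_{b=n_{k+1}}^{n_k-1}\vk{\varphi_{b_k}}\vb{\varphi_{b_k}}\,,
\end{gather*}
for each $k=1,\dots,N-1$. Then I would simply take the orthogonal direct sum over $k$ of the first identity to obtain the claimed formula for $\abs Z$, and the direct sum of the second to obtain the formula for $\abs Z^\perp$; the sums are genuinely orthogonal because the $E_k$ are mutually orthogonal and each rank-one operator $\vk{\varphi_{b_k}}\vb{\varphi_{b_k}}$ lives in $E_k$ (indeed $\varphi_{b_k}\in E_k$ by \eqref{def-maximal-k-entangled}). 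One should also note that $\abs Z$ and $\abs Z^\perp$ as defined in \eqref{eq:Z-Zstar} and \eqref{eq:absZ-ortho} are understood to act as $0$ on $P_N\H$, which is consistent with the index ranges in the displayed direct sums (both run only to $N-1$).

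There is essentially no obstacle here: the corollary is a bookkeeping consequence of Proposition~\ref{prop:explicit-rep-absZ} together with the definition of $Z$ as a direct sum. The only point deserving a word of care is making the identification $\abs Z P_k=\abs{Z_k}=\abs Z_{k,k+1}$ explicit — but this is exactly the content of the sentence ``if $\abs Z_k\ceq\abs ZP_k$, then it is clear that $\abs Z_k=\abs{Z_k}$'' following \eqref{eq:Z-Zstar}, so it may be cited rather than reproved. Hence the proof is a two-line reduction: apply \eqref{eq:explicit-rep-absZ} at each level and sum.
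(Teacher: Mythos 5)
Your proof is correct and follows exactly the paper's own route: the authors likewise obtain the corollary directly from Proposition~\ref{prop:explicit-rep-absZ} together with \eqref{eq:Z-Zstar} and \eqref{eq:absZ-ortho}, i.e., by applying the two-level representations at each $k'=k+1$ and summing over the block decomposition. Your write-up merely spells out the bookkeeping the paper leaves implicit.
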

\begin{proof}
The proof is simple from Proposition~\ref{prop:explicit-rep-absZ}, \eqref{eq:Z-Zstar} and \eqref{eq:absZ-ortho}.
 \end{proof}

We shall partially address the matter of how the powers of $Z$ act on different levels.
\begin{corollary}\label{prop:powers-Z}
If $k\in\{1,\dots,N-1\}$, then for $m=1,2,\dots\leq (N-k)/2$, the following relations hold true:
\begin{align}\label{eq:0k-m-even}
Z^{2m-1}\vk{0_k}&=\prod_{j=0}^{m-1}\left(\frac{n_{k+2j+1}}{n_{k+2j}}\right)^{1/2}\varphi_{0_{k+2m-1}}
\,;\\[3mm]\label{eq:0k-m-odd}
Z^{2m}\vk{0_k}&=\prod_{j=0}^{m-1}\left(\frac{n_{k+2j+1}}{n_{k+2j}}\right)^{1/2}\vk{0_{k+2m}}\,.
\end{align}
\end{corollary}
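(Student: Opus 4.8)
The plan is to prove the two formulas simultaneously by induction on $m$, using the relations $Z\varphi_{a_k}=\vk{a_{k+1}}$ and the formula $Z\vk{0_k}=(n_{k+1}/n_k)^{1/2}\varphi_{0_{k+1}}$ from Proposition~\ref{prop:Z-ak}\ref{itZ-ak01}, together with $ZP_k=Z_k$.

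First I would establish the base case $m=1$. Applying $Z$ to $\vk{0_k}$ gives $Z\vk{0_k}=(n_{k+1}/n_k)^{1/2}\varphi_{0_{k+1}}$, which is exactly \eqref{eq:0k-m-even} for $m=1$ (the product has the single factor $j=0$). Then applying $Z$ once more and using \eqref{eq:basis-transitions-Z} with $a=0$, i.e. $Z\varphi_{0_{k+1}}=\vk{0_{k+2}}$, yields $Z^2\vk{0_k}=(n_{k+1}/n_k)^{1/2}\vk{0_{k+2}}$, which is \eqref{eq:0k-m-odd} for $m=1$. Note we need $0\leq n_{k+2}-1$ for the transition to make sense, i.e. $k+2\leq N$, which is guaranteed by $m\leq (N-k)/2$.

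For the inductive step, assume both formulas hold for some $m$ with $m+1\leq(N-k)/2$. Starting from \eqref{eq:0k-m-odd}, I apply $Z$ to $Z^{2m}\vk{0_k}=\big(\prod_{j=0}^{m-1}(n_{k+2j+1}/n_{k+2j})^{1/2}\big)\vk{0_{k+2m}}$; since $Z\vk{0_{k+2m}}=(n_{k+2m+1}/n_{k+2m})^{1/2}\varphi_{0_{k+2m+1}}$ by Proposition~\ref{prop:Z-ak}\ref{itZ-ak01}, the scalar factor for $j=m$ is introduced and we obtain \eqref{eq:0k-m-even} with $m$ replaced by $m+1$. Applying $Z$ again and invoking $Z\varphi_{0_{k+2m+1}}=\vk{0_{k+2m+2}}$ from \eqref{eq:basis-transitions-Z} produces \eqref{eq:0k-m-odd} with $m+1$ in place of $m$, completing the induction. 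Throughout, the constraint $m\leq(N-k)/2$ ensures every index $k+2j,\,k+2j+1$ stays within $\{1,\dots,N\}$ so that the vectors $\varphi_{0_{k+2j+1}}$ and $\vk{0_{k+2j}}$ are defined and the applications of $Z$ are legitimate (as $Z$ only acts nontrivially on levels $1,\dots,N-1$).

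I do not anticipate a serious obstacle here; the argument is a routine induction. The one point requiring mild care is bookkeeping the index ranges against the hypothesis $m\leq(N-k)/2$: when proving \eqref{eq:0k-m-even} one needs $k+2m-1\leq N-1$ (so that $\varphi_{0_{k+2m-1}}$ lives in a level on which the next $Z$ acts), and for \eqref{eq:0k-m-odd} one needs $k+2m\leq N$; both follow from $2m\leq N-k$. Organizing the two formulas as a single coupled induction, rather than proving them separately, is what keeps the proof clean.
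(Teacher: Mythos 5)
Your proof is correct and follows essentially the same route as the paper's: an induction on $m$ driven by the two facts $Z\vk{0_j}=(n_{j+1}/n_j)^{1/2}\varphi_{0_{j+1}}$ (Proposition~\ref{prop:Z-ak}) and $Z\varphi_{0_j}=\vk{0_{j+1}}$ from \eqref{eq:basis-transitions-Z}. The only cosmetic difference is that you run a coupled induction on both formulas, whereas the paper inducts on \eqref{eq:0k-m-even} alone and notes that \eqref{eq:0k-m-odd} follows immediately from \eqref{eq:basis-transitions-Z}; the underlying computation is identical.
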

\begin{proof}
For a fixed $k\in\{1,\dots,N-1\}$, the proof is carried out by induction on $m$ and we only need to show \eqref{eq:0k-m-even}, since \eqref{eq:0k-m-odd} follows from \eqref{eq:basis-transitions-Z}. From item \ref{itZ-ak01} of Proposition~\ref{prop:Z-ak}, with $k'=k+1$, one has  
\begin{align}\label{eq:01k-0}
\begin{split}
Z\vk{0_k}&=\left(\frac{n_{k+1}}{n_k}\right)^{1/2}\varphi_{0_{k+1}}\,.
\end{split}
\end{align}
We then may suppose that \eqref{eq:0k-m-even} holds true for $m-1$. Thus, a simple computation yields 
\begin{gather*}
Z^{2m-1}\vk{0_k}=ZZ^{2(m-1)}\vk{0_k}=\prod_{j=0}^{m-2}\left(\frac{n_{k+2j+1}}{n_{k+2j}}\right)^{1/2}Z\vk{0_{k+2(m-1)}}\,,
\end{gather*}
whence by virtue of \eqref{eq:01k-0}, we conclude the statement.
\end{proof}
\section{Application: energy transport in an open quantum system with $N$ energy levels}\setcounter{equation}{0}
\label{sec:N-energy-levels}
\subsection{N-level quantum transport model}
\label{sub-WCLT}
From now on, we enlarge the Hilbert space $\H$.  In addition to the conditions of  Subsection~\ref{subs:N-levels}, we introduce  two nondegenerate subspaces $E_{-}=\C\vk-$, $E_{+}=\C\vk+$, with associated energies $\varepsilon_{-}$, $\varepsilon_{+}$ and orthogonal projections $P_{-}=\vk-\vb-$, $P_{+}=\vk+\vb+$ of $\H$ onto $E_{\pm}$, respectively. All together, the Hilbert space now has the canonical orthonormal basis 
\begin{gather*}
\llb \vk-, \vk+, \vk{a_k}\,:\,0\leq a\leq n_k-1,\,1\leq k\leq N\rrb\,.
\end{gather*} 

It is well-known that the irreversible evolution $\rho \mapsto {\mathcal T}_{t}(\rho), \; t\geq 0$, of states of a quantum system interacting with environment (quantum open system) is described by a master equation 
\[\frac{d{\mathcal T}_{t}(\rho)}{dt}= {\mathcal L}\big({\mathcal T}_{t}(\rho)\big), \quad {\mathcal T}_{0}(\rho)=\rho\]
involving an infinitesimal generator ${\mathcal L}$ with the Gorini-Kossakowski-Sudarshan and Lindblad (GKSL) structure.  The family $({\mathcal T}_{t})_{t\geq 0}$ of completely positive operators acting on $L_{1}({\mathcal H})$ (the space of finite trace operators) is called \textit{quantum Markov semigroup} (QMS).

We shall consider a GKSL Markov generator ${\mathcal L}$ belonging to the class of weak coupling limit type with degenerate reference Hamiltonian
\begin{gather*}
H\ceq \varepsilon_-P_{-}+\varepsilon_+P_++\sum_{k=1}^N \varepsilon_kP_k \,.
\end{gather*}
The energies $\varepsilon_-,\varepsilon_+,\varepsilon_1,\dots,\varepsilon_N$ are supposed to be pairwise different and   
\begin{gather*}
\varepsilon_-<\varepsilon_{k+1}<\varepsilon_{k}<\varepsilon_+\,, \quad 1\leq k\leq N\,.
\end{gather*} 

We have $N+1$ positive \emph{Bohr frequencies} (q.v. \cite[Subsec.\,1.1.5]{MR1929788})
\begin{align*}
\omega_+&=\varepsilon_+-\varepsilon_1\\
\omega_k&=\varepsilon_k-\varepsilon_{k+1}\qquad	(1\leq k\leq N-1)\\
\omega_-&=\varepsilon_N-\varepsilon_-\,,
\end{align*}
which are assumed to be different. Levels and transitions can be arranged in a graph (see Fig. \ref{fig:graph}). 
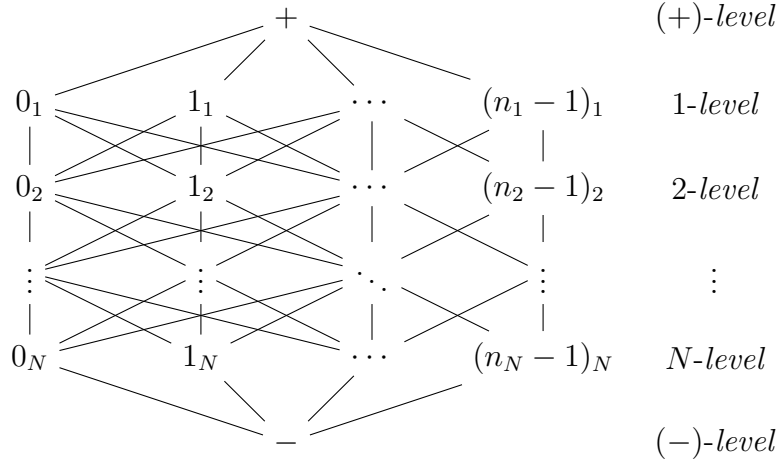
\begin{figure}[h]
\centering
\begin{tikzpicture}
  [scale=.75,auto=left,every node/.style={}]
  \node (n0) at (4.5,7.5) {+};
  \node (n1) at (0,6)  {$0_1$};
  \node (n2) at (3,6)  {$1_1$};
  \node (n3) at (6,6) {$\cdots$};
  \node (n4) at (9,6)  {$(n_1-1)_1$};
   \node (m1) at (0,4.5)  {$0_2$};
   \node (m2) at (3,4.5)  {$1_2$};
   \node (m3) at (6,4.5)  {$\cdots$};
   \node (m4) at (9,4.5)  {$(n_2-1)_2$};
   \node (p1) at (0,3)  {$\vdots$};
   \node (p2) at (3,3)  {$\vdots$};
   \node (p3) at (6,3)  {$\ddots$};
   \node (p4) at (9,3)  {$\vdots$};
    \node (N1) at (0,1.5)  {$0_N$};
   \node (N2) at (3,1.5)  {$1_N$};
   \node (N3) at (6,1.5)  {$\cdots$};
   \node (N4) at (9,1.5)  {$(n_N-1)_N$};
    \node (o0) at (4.5,0)  {$-$};
     \node (lM) at (12,7.5)  {$(+)$-\it level};
      \node (l1) at (12,6)  {$1$-\it level};
       \node (l2) at (12,4.5)  {$2$-\it level};
        \node (ld) at (12,3)  {$\vdots$};
         \node (lN) at (12,1.5)  {$N$-\it level};
          \node (lm) at (12,0)  {$(-)$-\it level};
  \foreach \from/\to in {n0/n1,n0/n2,n0/n3,n0/n4,o0/N1,o0/N2,o0/N3,o0/N4,n1/m1,n1/m2,n1/m3,n2/m1,n2/m2,n2/m3,n3/m1,n3/m2,n3/m3,n3/m4,n4/m3,n4/m4,
  m1/p1,m1/p2,m1/p3,m2/p1,m2/p2,m2/p3,m3/p1,m3/p2,m3/p3,m3/p4,m4/p3,m4/p4,
  p1/N1,p1/N2,p1/N3,p2/N1,p2/N2,p2/N3,p3/N1,p3/N2,p3/N3,p3/N4,p4/N3,p4/N4}
    \draw (\from) -- (\to);
\end{tikzpicture}
\caption{Graph of states and transitions}\label{fig:graph}
\end{figure}

The WCLT Markov generator ${\mathcal L}$ has the GKSL structure 
\begin{eqnarray*}
\begin{aligned}
\cA L(\rho)\ceq\sum_{\omega\in\atop{\{\omega_{-}, \omega_{+}, \omega_{k}:\atop\quad 1\leq k\leq N-1\}}} & -i[\Delta_\omega,\rho]+
  \left(L_{-, \omega} \rho L_{-, \omega}^{*}- \frac{1}{2} \llb L_{-, \omega}^* L_{-, \omega},\rho \rrb
  \right)  \\ &+
 \left(L_{+, \omega} \rho L_{+, \omega}^*-\frac{1}{2}\llb L_{+, \omega}^* L_{+, \omega}, \rho \rrb \right)
\end{aligned}
\end{eqnarray*}
and the dual generator 
\begin{eqnarray*}
\begin{aligned}
\cA L^*(x) \ceq \sum_{\omega\in\atop{\{\omega_{-}, \omega_{+}, \omega_{k}:\atop\quad 1\leq k\leq N-1\}}} & i[\Delta_\omega,x]+\left(L_{-, \omega}^* x L_{-, \omega}-\frac{1}{2}\llb L_{-, \omega}^*L_{-, \omega}x\rrb\right) \\ &+
\left(L_{+, \omega}^* x L_{+, \omega}-\frac{1}{2}\llb L_{+, \omega}^*L_{+, \omega} x \rrb\right)\,.
\end{aligned}
\end{eqnarray*}

The explicit form of the Kraus operators is given by  
\begin{align}\label{eq:Gamma-Krauss}
L_{-,\omega_+}&=\sqrt{n_1\Gamma_{-,\omega_+}}\vk{\varphi_{0_1}}\vb{+} &L_{+,\omega_+}&=\sqrt{n_1\Gamma_{+,\omega_+}}\vk{+}\vb{\varphi_{0_1}}\nonumber\\[3mm]
L_{-,\omega_k}&=\sqrt{\Gamma_{-,\omega_k}}Z_k &L_{+,\omega_k}&=\sqrt{\Gamma_{+,\omega_k}}Z_k^*\,, \quad 1\leq k\leq N-1\\[3mm]
L_{-,\omega_-}&=\sqrt{\Gamma_{-,\omega_-}}\vk{-}\vb{\varphi_{0_N}} &L_{+,\omega_-}&=0\nonumber
\end{align}  and the effective Hamiltonian is 
\begin{align*}
H_{\eff}\ceq \sum_{\omega} \Delta_{\omega}=\,&n_1\gamma_{-,\omega_+}P_+-n_1\gamma_{+,\omega_+}\vk{\varphi_{0_1}}\vb{\varphi_{0_1}}+\gamma_{-,\omega_-}\vk{\varphi_{0_N}}\vb{\varphi_{0_N}}\\&-\gamma_{+,\omega_-}P_-
+\sum_{k=1}^{N-1}\gamma_{-,\omega_k}\abs Z_k-
\gamma_{+,\omega_k}P_{k+1}\,.
\end{align*} 
The term $\gamma_{+,\omega_-}P_-$ is absent from the effective Hamiltonian in \cite{MR3860251}.

\subsection{Harmonic projections and invariant states}
\label{sub-harmonic}
The \emph{quantum Markov semigroup} $({\cA T}^*_{t})_{t\geq 0}$ of operators acting on the von Neumann algebra $B(\H)$ of all bounded operators on $\H$, is the adjoint semigroup of $({\mathcal T}_{t})_{t\geq 0}$ defined by the duality relation between states $\rho$ and observables $x$, 
\begin{gather*}
\tr \left(\rho{\cA T}_{t}^{*}(x)\right)=\tr\left({\cA T}_{t}(\rho)x\right)\,.
\end{gather*}

A selfadjoint operator $p\in\cA B(\H)$ is said to be \emph{subharmonic}  for $({\cA T}^*_{t})_{t\geq 0}$ if 
\begin{gather}\label{eq:subharmonic}
\cA T_{t}^{*}(p)\geq p\,,\quad\mbox{ for all }t\geq0\,.
\end{gather} 
We say that $p$ is \emph{superharmonic} if the inequality \eqref{eq:subharmonic} is reversed and \emph{harmonic} if the equality takes place, i.e., when $\cA L^*(p)=0$.

Subharmonic projections are deeply related with the stationary states of a  quantum Markov semigroup. Indeed, any support projection of an invariant state $\rho$, viz. the orthogonal projection of $\H$ onto the closure of $\ran\rho$, is subharmonic. Although the converse is not true, knowledge of the subharmonic projections gives useful insight on the invariant states.   
\begin{proposition}\label{lem:harmonic-perp}
A projection $P_E$ onto a subspace $E\subset \H$ is subharmonic if and only if the projection $P_{E^\perp}$ onto $E^\perp$ is superharmonic. Moreover, if $P_E$ is harmonic then so is $P_{E^\perp}$.
\end{proposition}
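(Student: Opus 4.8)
The plan is to use the identity $P_{E^\perp} = I - P_E$ together with the fact that $\cA T_t^*$ is a positive, \emph{identity-preserving} (unital) map for every $t \geq 0$, since $({\cA T}_t^*)_{t\geq 0}$ is the dual semigroup of a quantum Markov semigroup; in particular $\cA T_t^*(I) = I$. First I would write, for each fixed $t \geq 0$,
\begin{gather*}
\cA T_t^*(P_{E^\perp}) = \cA T_t^*(I - P_E) = I - \cA T_t^*(P_E)\,,
\end{gather*}
using linearity of $\cA T_t^*$ and unitality. Then $\cA T_t^*(P_E) \geq P_E$ holds for all $t \geq 0$ if and only if $I - \cA T_t^*(P_E) \leq I - P_E$ for all $t \geq 0$, i.e. $\cA T_t^*(P_{E^\perp}) \leq P_{E^\perp}$ for all $t \geq 0$, which is exactly the statement that $P_{E^\perp}$ is superharmonic. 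This gives the equivalence in both directions at once.

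For the ``moreover'' clause, the harmonic case, I would either specialize the same computation — $\cA L^*$ is linear and annihilates $I$ (equivalently $\cA T_t^*(I)=I$ for all $t$ differentiates to $\cA L^*(I)=0$), so $\cA L^*(P_E) = 0$ forces $\cA L^*(P_{E^\perp}) = \cA L^*(I) - \cA L^*(P_E) = 0$ — or simply note that a projection is harmonic iff it is both subharmonic and superharmonic, and apply the equivalence just proved in both directions: if $P_E$ is harmonic then it is both sub- and superharmonic, hence $P_{E^\perp}$ is both super- and subharmonic, hence harmonic.

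This is essentially a one-line argument; the only thing to be careful about is invoking the unitality of $\cA T_t^*$ (equivalently $\cA L^*(I) = 0$), which is part of the standing definition of a quantum Markov semigroup on $B(\H)$ and should perhaps be stated explicitly before the proof. There is no real obstacle here — the content is just the elementary order-reversing property of $x \mapsto I - x$ combined with linearity and conservativity of the dual semigroup. I would keep the proof to two or three sentences, pointing to the unital property of $({\cA T}_t^*)_{t\geq 0}$ as the only ingredient beyond linearity.
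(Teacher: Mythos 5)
Your proposal is correct and follows essentially the same route as the paper: both rest on the decomposition $P_{E^\perp}=I-P_E$ together with the identity-preserving property $\cA T_t^*(I)=I$ of the dual semigroup, and both handle the harmonic case by observing that the inequality becomes an equality (equivalently, via linearity and $\cA L^*(I)=0$). No gaps; nothing further is needed.
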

\begin{proof}
If $P_E$ is subharmonic then for $t\geq0$, using the identity preserving property, $\cA T_{t}^{*}(I)=I$, of the dual semigroup, we get  
\begin{align}\label{eq:superharmonic-property}
\cA T_{t}^{*}(P_{E^\perp})=\cA T_{t}^{*}(I)-\cA T_{t}^{*}(P_{E})\leq I-P_{E}=P_{E^\perp}\,,
\end{align}
which implies $P_{E^\perp}$ is superharmonic. We  prove the converse assertion, interchanging the roles of $P_{E},P_{E^\perp}$ and the inequality in \eqref{eq:superharmonic-property}. Now, if $P_{E}$ is harmonic then the equality \eqref{eq:superharmonic-property} holds true and, hence, $P_{E^\perp}$ is harmonic. 
\end{proof}

Conditions characterizing sub-harmonic projections were given in Theorem III.1 of \cite{MR1878987}. The proof that those conditions are sufficient for an operator to be harmonic is rather simple.    
\begin{lemma}\label{prop:harmonic-property-D}
A selfadjoint operator $p\in\cA B(\H)$ is harmonic for $({\mathcal T}_{t}^{*})_{t\geq 0}$ if    
\begin{gather}\label{eq:DKraus-subharmonic}
L_{-,\omega_{k}}p=pL_{-, \omega_{k}}p\,;\quad L_{-,\omega_{k}}^*p=pL_{-, \omega_{k}}^*p\,;\quad k=-, +, 1, \dots, N-1
\end{gather} 
\end{lemma}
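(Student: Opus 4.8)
The plan is to deduce from \eqref{eq:DKraus-subharmonic} that $p$ commutes with every noise operator and with the effective Hamiltonian, so that the right-hand side of the dual generator vanishes term by term. Two structural features of \eqref{eq:Gamma-Krauss} do most of the work. First, each creation operator is, up to a nonnegative scalar, the adjoint of the corresponding annihilation one: from \eqref{eq:Gamma-Krauss} one reads off $L_{+,\omega_{+}}=\bigl(\Gamma_{+,\omega_{+}}/\Gamma_{-,\omega_{+}}\bigr)^{1/2}L_{-,\omega_{+}}^{*}$, $L_{+,\omega_{k}}=\bigl(\Gamma_{+,\omega_{k}}/\Gamma_{-,\omega_{k}}\bigr)^{1/2}L_{-,\omega_{k}}^{*}$ for $1\leq k\leq N-1$, and $L_{+,\omega_{-}}=0$. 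Second, using Theorem~\ref{properties-Zk-kprime} one computes $L_{-,\omega_{+}}^{*}L_{-,\omega_{+}}=n_{1}\Gamma_{-,\omega_{+}}P_{+}$, $L_{-,\omega_{+}}L_{-,\omega_{+}}^{*}=n_{1}\Gamma_{-,\omega_{+}}\vk{\varphi_{0_{1}}}\vb{\varphi_{0_{1}}}$, $L_{-,\omega_{k}}^{*}L_{-,\omega_{k}}=\Gamma_{-,\omega_{k}}\abs{Z_{k}}$, $L_{-,\omega_{k}}L_{-,\omega_{k}}^{*}=\Gamma_{-,\omega_{k}}P_{k+1}$, $L_{-,\omega_{-}}^{*}L_{-,\omega_{-}}=\Gamma_{-,\omega_{-}}\vk{\varphi_{0_{N}}}\vb{\varphi_{0_{N}}}$ and $L_{-,\omega_{-}}L_{-,\omega_{-}}^{*}=\Gamma_{-,\omega_{-}}P_{-}$; matching these against the displayed expression for $H_{\eff}$ exhibits $H_{\eff}=\sum_{\omega}\Delta_{\omega}$ as a \emph{real linear combination} of the operators $L_{-,\omega_{k}}^{*}L_{-,\omega_{k}}$ and $L_{-,\omega_{k}}L_{-,\omega_{k}}^{*}$, $k=-,+,1,\dots,N-1$.

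Next I would turn the two identities in \eqref{eq:DKraus-subharmonic} into a commutation relation. The first gives $L_{-,\omega_{k}}p=pL_{-,\omega_{k}}p$; taking adjoints in the second gives $pL_{-,\omega_{k}}=pL_{-,\omega_{k}}p$. Combining the two, $L_{-,\omega_{k}}p=pL_{-,\omega_{k}}p=pL_{-,\omega_{k}}$, that is $[L_{-,\omega_{k}},p]=0$, and hence also $[L_{-,\omega_{k}}^{*},p]=0$, for every $k=-,+,1,\dots,N-1$. By the first structural observation this yields $[L_{+,\omega},p]=0$ as well, so $p$ commutes with every Kraus operator; by the second structural observation $p$ then commutes with each $L_{-,\omega_{k}}^{*}L_{-,\omega_{k}}$ and $L_{-,\omega_{k}}L_{-,\omega_{k}}^{*}$, hence with $H_{\eff}$, so that $i[H_{\eff},p]=0$.

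Finally I would substitute into $\cA L^{*}(p)=i[H_{\eff},p]+\sum_{\ell}\bigl(L_{\ell}^{*}pL_{\ell}-\tfrac12\llb L_{\ell}^{*}L_{\ell},p\rrb\bigr)$, the sum running over all Kraus operators $L_{\ell}=L_{\pm,\omega}$. For each $\ell$ the relation $pL_{\ell}=L_{\ell}p$ gives $L_{\ell}^{*}pL_{\ell}=L_{\ell}^{*}L_{\ell}p=pL_{\ell}^{*}L_{\ell}$, whence $L_{\ell}^{*}pL_{\ell}-\tfrac12\llb L_{\ell}^{*}L_{\ell},p\rrb=0$; combined with $i[H_{\eff},p]=0$ this gives $\cA L^{*}(p)=0$, i.e.\ $p$ is harmonic. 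No serious obstacle is expected: the substance of the argument is the observation that the two one-sided relations in \eqref{eq:DKraus-subharmonic} together amount to \emph{full} commutation of $p$ with each $L_{-,\omega_{k}}$, and that in the present model $H_{\eff}$ is assembled precisely from the products $L^{*}L$ and $LL^{*}$, so no separate hypothesis on the Hamiltonian is needed; the only care required is the routine bookkeeping matching the terms of $H_{\eff}$ with products of the Kraus operators of \eqref{eq:Gamma-Krauss}.
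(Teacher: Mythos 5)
Your proposal is correct and follows essentially the same route as the paper: the two one-sided relations in \eqref{eq:DKraus-subharmonic} are upgraded (via the adjoint of the second identity) to full commutation of $p$ with every $L_{-,\omega}$, hence with all Kraus operators and with $H_{\eff}$, after which $\cA L^{*}(p)=0$ follows term by term. The paper states these commutation facts without detail, so your explicit bookkeeping (each $L_{+,\omega}$ being a scalar multiple of $L_{-,\omega}^{*}$, and $H_{\eff}$ being a real combination of the products $L_{-,\omega}^{*}L_{-,\omega}$ and $L_{-,\omega}L_{-,\omega}^{*}$) simply fills in what the paper calls "readily" and "at once".
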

\begin{proof}
Relations \eqref{eq:DKraus-subharmonic} readily imply that $p$ commutes with all Kraus o\-pe\-ra\-tors $L_{\pm, \omega_{k}}$ and $H_{\eff}$. Therefore, one checks at once that $\cA L^*(p)=0$. This implies that $p$ is a fixed point for $({\mathcal T}_{t})_{t\geq 0}$, hence harmonic. 
\end{proof}

Let us set  
\begin{align}\label{eq:calV-harmonic}
\begin{split}
 V\ceq\{ &\vk-, \vk+, Z^n\varphi_{0_1}, {Z^*}^n\varphi_{0_N},{Z^*}^{s_m}\varphi_{0_{2m+1}}\,: \\ & 0\leq n\leq N-1, \, 1\leq m  \leq (N-1)/2,\, 1\leq s_m\leq 2m{\}}^\perp\,.
\end{split}
\end{align}

\begin{theorem} The projection $P_{V^\perp}$ of $\H$ onto $V^\perp$ is harmonic and, hence, so is the projection $P_V$ onto $V$.
\end{theorem}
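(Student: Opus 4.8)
The plan is to apply Lemma~\ref{prop:harmonic-property-D} to the projection $p\ceq P_{V^\perp}$. Let $S$ denote the finite family of vectors appearing in \eqref{eq:calV-harmonic}, so that $V=S^{\perp}$ and $V^\perp=\Span S$. For a selfadjoint projection $p$ and a bounded operator $L$, the pair of identities in \eqref{eq:DKraus-subharmonic} holds if and only if $Lp=pL$, that is, if and only if both $L$ and $L^{*}$ leave $\ran p=V^\perp$ invariant. Since $L_{+,\omega_{+}}$ and $L_{+,\omega_{k}}$ are positive scalar multiples of $L_{-,\omega_{+}}^{*}$ and $L_{-,\omega_{k}}^{*}$, and $L_{+,\omega_{-}}=0$, the theorem reduces to the finite verification that each of the operators $\vk{\varphi_{0_1}}\vb{+}$, $\vk{+}\vb{\varphi_{0_1}}$, $Z_{k}$, $Z_{k}^{*}$ (for $1\leq k\leq N-1$), $\vk{-}\vb{\varphi_{0_N}}$ and $\vk{\varphi_{0_N}}\vb{-}$ maps $\Span S$ into itself; granting this, Lemma~\ref{prop:harmonic-property-D} gives that $p=P_{V^\perp}$ is harmonic, and Proposition~\ref{lem:harmonic-perp} then gives that $P_{V}=I-P_{V^\perp}$ is harmonic as well.

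The four rank-one Kraus operators are handled at once, since their ranges are already spanned by vectors of $S$: $\vk{\varphi_{0_1}}\vb{+}$ has range $\C\varphi_{0_1}=\C\,Z^{0}\varphi_{0_1}$, its adjoint has range $\C\vk{+}$, $\vk{-}\vb{\varphi_{0_N}}$ has range $\C\vk{-}$, and $\vk{\varphi_{0_N}}\vb{-}$ has range $\C\varphi_{0_N}=\C\,{Z^*}^{0}\varphi_{0_N}$, all of which sit inside $\Span S$.

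For $Z_{k}$ and $Z_{k}^{*}$ I would use $Z_{k}P_{k}=Z_{k}$ and $Z_{k}^{*}P_{k+1}=Z_{k}^{*}$, so that it suffices to follow each generator of $S$ that lies in the single level $E_{k}$, respectively $E_{k+1}$. The vectors $\vk{-},\vk{+}$ are annihilated, and by iterating \eqref{eq:basis-transitions-Z} and \eqref{eq:01k-0} (or via Corollary~\ref{prop:powers-Z} applied with $k=2$) one has $Z^{n}\varphi_{0_1}\propto\varphi_{0_{n+1}}$ for $n$ even and $Z^{n}\varphi_{0_1}\propto\vk{0_{n+1}}$ for $n$ odd, so that $Z^{n}\varphi_{0_1}\in E_{n+1}$, while ${Z^*}^{n}\varphi_{0_N}\in E_{N-n}$ and ${Z^*}^{s}\varphi_{0_{2m+1}}\in E_{2m+1-s}$, each of these being concentrated in one level. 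Combining this with $ZZ^{*}=\bigoplus_{k=2}^{N}P_{k}$ from \eqref{eq:Z-Zstar}, one computes, up to positive scalar factors: $Z_{k}$ sends $Z^{k-1}\varphi_{0_1}\mapsto Z^{k}\varphi_{0_1}$, sends ${Z^*}^{N-k}\varphi_{0_N}\mapsto ZZ^{*}\bigl({Z^*}^{N-k-1}\varphi_{0_N}\bigr)={Z^*}^{N-k-1}\varphi_{0_N}$, and sends ${Z^*}^{2m+1-k}\varphi_{0_{2m+1}}$ to ${Z^*}^{2m-k}\varphi_{0_{2m+1}}$ when $2m-k\geq1$ and to $\varphi_{0_{2m+1}}\propto Z^{2m}\varphi_{0_1}$ when $k=2m$; dually, $Z_{k}^{*}$ sends $Z^{k}\varphi_{0_1}$ to $\varphi_{0_k}\propto Z^{k-1}\varphi_{0_1}$ for $k$ odd and to ${Z^*}^{1}\varphi_{0_{k+1}}$ for $k$ even, sends ${Z^*}^{N-k-1}\varphi_{0_N}\mapsto{Z^*}^{N-k}\varphi_{0_N}$, and sends ${Z^*}^{2m-k}\varphi_{0_{2m+1}}\mapsto{Z^*}^{2m+1-k}\varphi_{0_{2m+1}}$ when $1\leq k\leq 2m-1$. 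In each case, comparing with the index ranges $0\leq n\leq N-1$, $1\leq m\leq(N-1)/2$, $1\leq s_{m}\leq 2m$ of \eqref{eq:calV-harmonic} shows that the output is again a vector of $S$ (or is $0$), so $Z_{k}$ and $Z_{k}^{*}$ leave $\Span S=V^\perp$ invariant, completing the verification.

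I expect the bookkeeping of the previous paragraph to be the only real difficulty. The point that makes it close is the identification $Z^{2m}\varphi_{0_1}\propto\varphi_{0_{2m+1}}$: the ``upward'' $Z$-orbit of $\varphi_{0_1}$, upon reaching an odd level $2m+1$, lands exactly on the anchor $\varphi_{0_{2m+1}}$ of a ``downward'' $Z^{*}$-orbit, and this latter orbit is in general not contained in the span of the canonical basis vectors --- which is precisely why the vectors ${Z^*}^{s_{m}}\varphi_{0_{2m+1}}$ must be adjoined to $S$; the range $1\leq s_{m}\leq 2m$ is forced because ${Z^*}^{2m}\varphi_{0_{2m+1}}\in E_{1}$ and $Z^{*}$ annihilates $E_{1}$, so no further vectors can arise. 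One must also keep straight the parity alternation of $Z^{n}\varphi_{0_1}$ (entangled at odd levels, canonical at even levels) and the fact that $Z_{k}^{*}Z_{k}$ fixes $\varphi_{0_k}$ while in general it does not fix $\vk{0_k}$, and it is prudent to treat the boundary indices $n=0$, $n=N-1$, $s_{m}=1$, $s_{m}=2m$ and $m=(N-1)/2$ separately; beyond this there is no conceptual obstacle.
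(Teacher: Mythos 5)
Your proposal is correct and takes essentially the same route as the paper's proof: it invokes Lemma~\ref{prop:harmonic-property-D}, reduces the hypotheses \eqref{eq:DKraus-subharmonic} to the invariance of $V^\perp$ under the Kraus operators, dispatches the rank-one operators immediately, and checks $Z_k$, $Z_k^*$ generator by generator via \eqref{eq:basis-transitions-Z}, Corollary~\ref{prop:powers-Z} and the key identification $\varphi_{0_{2m+1}}\propto Z^{2m}\varphi_{0_1}$ (the paper's \eqref{eq:power-Z-varphi-odd}), concluding with Proposition~\ref{lem:harmonic-perp}. Your bookkeeping is if anything more explicit than the paper's (which only sketches the $Z_k^*$ cases), but the argument is the same.
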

\begin{proof} 
From Lemma~\ref{prop:harmonic-property-D}, we only need to show \eqref{eq:DKraus-subharmonic}, which in view of \eqref{eq:Gamma-Krauss}, holds if and only if, for all $u\in V^\perp$ and $k=1,\dots,N-1$,
\begin{enumerate}
\begin{multicols}{2}
\item\label{eq:01K-r} $\ip{+}{u}\varphi_{0_1}=\ip{+}{u}P_{ V^\perp}\varphi_{0_1}$
\item\label{eq:02K-r} $\ip{\varphi_{0_1}}{u}\vk+=\ip{\varphi_{0_1}}{u}P_{ V^\perp}\vk+$
\item \label{eq:03K-r}$\ip{\varphi_{0_N}}{u}\vk-=\ip{\varphi_{0_N}}{u}P_{ V^\perp}\vk-$
\item\label{eq:04K-r} $\ip{-}{u}\varphi_{0_N}=\ip{-}{u}P_{ V^\perp}\varphi_{0_N}$
\item\label{eq:05K-r} $Z_ku=P_{ V^\perp}Z_ku$
\item\label{eq:06K-r} $Z_k^*u=P_{ V^\perp}Z_k^*u$
\end{multicols}
\end{enumerate}
We see at once that the conditions \ref{eq:01K-r}-\ref{eq:04K-r} hold true. It remains to prove \ref{eq:05K-r}-\ref{eq:06K-r}, which will be satisfied once we show that $Z_ku,Z_k^*u\in V^\perp$, for all $u\in V^\perp$.  

It is immediate that $Z_ku\in V^\perp$, for all $u\in\Span\{\vk-,\vk+,Z^n\varphi_{0_1}\}$. From \eqref{eq:basis-transitions-Z} and Corollary~\ref{prop:powers-Z} it follows that  with $\beta(m,k)=\prod_{j=0}^{m-1}\left(\frac{n_{k+2j+1}}{n_{k+2j}}\right)^{-1/2}$ 
\begin{gather}\label{eq:power-Z-varphi-odd}
\varphi_{0_{2m+1}}=Z^{*}|0_{2m+2}\rangle = Z^{*} \beta(m,2) Z^{2m}|0_{2}\rangle = \beta(m,2)Z^{2m}\varphi_{0_1}\in V^\perp 
\end{gather}
Thus, for $k=2m+1-s_m$ (otherwise is zero) 
\begin{gather*}
 Z_k{Z^*}^{s_m}\varphi_{0_{2m+1}}=P_{k+1}{Z^*}^{(s_m-1)}\varphi_{0_{2m+1}}={Z^*}^{(s_m-1)}\varphi_{0_{2m+1}}\in  V^\perp\,.
\end{gather*}
Analogously, it can be shown that $Z_k{Z^*}^n\varphi_{0_N}\in V^\perp$. Therefore, we have that $Z_k^*u\in V^\perp$, for all $u\in\Span \{\vk-,\vk+,{Z^*}^n\varphi_{0_N},{Z^*}^{s_m}\varphi_{0_{2m+1}}\}$. Besides, for $n\neq0$ and $k=n+1$  (otherwise is trivial), if $n$ is even, we have again from \eqref{eq:basis-transitions-Z} and Corollary~\ref{prop:powers-Z}
\begin{gather*}
Z_k^*Z^{n}\varphi_{0_1}=Z_k^*Z^{n-1}\vk{0_2}=\beta(n/2,2)^{-1} Z_k^*\varphi_{0_{n+1}}\in V^\perp
\end{gather*}
 For the case $n$ odd, we use  again Corollary~\ref{prop:powers-Z} to show that for $k=n$ (otherwise is zero)
\begin{gather*}
Z_k^*Z^{n-1}\vk{0_2}= \beta\big((n-1)/2, 2\big)^{-1} Z_k^*\vk{0_{n+1}}=\beta\big((n-1)/2, 2\big)^{-1}\varphi_{0_{n}} \in V^\perp
\end{gather*}
this finishes the proof. Proposition~\ref{lem:harmonic-perp} implies that $P_V$ is harmonic.
\end{proof}

 The \emph{interaction-free} subspace defined by 
 \begin{gather*}
 W\ceq \bigcap_{{\omega\in\{\omega_{-}, \omega_{+}, \omega_{k}:\atop\:\:\: 1\leq k\leq N-1\}}}\left(\ker L_{\pm, \omega} \cap \ker L^*_{\pm, \omega}\right) \,,
 \end{gather*}
plays an important role in the structure of invariant states. An operator $x$ is supported on a subspace $E$ if $\cc{\ran x}\subset E$. 
\begin{remark}\label{rm:harmonic-invariant-W}
 It is a simple matter to see that a selfadjoint operator $x\in\cA B(\H)$ supported on $W$ satisfies $x= P_W x=x P_W$, where $P_W$ is the projection of $\H$ onto $W$. One easily verify that such an $x$ is harmonic and, moreover, any state $\rho$ supported on $W$ is $invariant$. 
 \end{remark}
 \begin{proposition}
The following identity holds true, 
\begin{eqnarray}\label{eq:characterization-Wd}
W=\ran \abs Z_1^\perp\,.
\end{eqnarray}
\end{proposition}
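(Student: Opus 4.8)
The plan is to show the double inclusion $W\subseteq\ran|Z_1|^\perp$ and $\ran|Z_1|^\perp\subseteq W$ by unwinding the definitions of both subspaces in terms of the entangled bases. First I would compute the kernels and co-kernels of every Kraus operator in \eqref{eq:Gamma-Krauss}. Using Corollary~\ref{cor:Ker-Zk-kprime}, Proposition~\ref{Zk-isomorphism} and \eqref{eq:basis-transitions-Zk}, one has $\ker Z_k=\ker|Z_k|=\Span\{\varphi_{a_k}:a=n_{k+1},\dots,n_k-1\}$ and $\ker Z_k^*=E_{k+1}^\perp$ (since $Z_kZ_k^*=P_{k+1}$ by Theorem~\ref{properties-Zk-kprime}\ref{eq:p-zk-k01}, so $Z_k^*$ is injective on $E_{k+1}$ and vanishes off it). The rank-one operators $\vk{\varphi_{0_1}}\vb{+}$, $\vk{+}\vb{\varphi_{0_1}}$ and $\vk{-}\vb{\varphi_{0_N}}$ have kernels $\vk{+}^\perp$, $\vk{\varphi_{0_1}}^\perp$ and $\vk{\varphi_{0_N}}^\perp$ respectively, with the analogous statements for their adjoints.

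Next I would intersect all these subspaces. A vector $u\in W$ must be orthogonal to $\vk{+}$, to $\vk{-}$, to $\vk{\varphi_{0_1}}$ and to $\vk{\varphi_{0_N}}$ (from the rank-one Kraus operators and their adjoints), and must lie in $\ker Z_k\cap\ker Z_k^*$ for every $k=1,\dots,N-1$. The condition $u\in\ker Z_k^*=E_{k+1}^\perp$ for all $k$ forces $u\perp E_j$ for $j=2,\dots,N$, so $u\in E_1\oplus E_-\oplus E_+$; combined with $u\perp\vk{\pm}$ this gives $u\in E_1$. Then $u\in\ker Z_1=\ker|Z_1|$ gives exactly $u\in\Span\{\varphi_{a_1}:a=n_2,\dots,n_1-1\}=\ran|Z_1|^\perp$ by Proposition~\ref{prop:explicit-rep-absZ}. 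The remaining conditions ($u\in\ker Z_k$ for $k\geq2$, and $u\perp\vk{\varphi_{0_1}}$, $u\perp\vk{\varphi_{0_N}}$) are automatically satisfied on this subspace: $u\in E_1$ kills all $Z_k$ with $k\geq 2$, and since $n_2<n_1$ forces $0\notin\{n_2,\dots,n_1-1\}$... actually one should note $\varphi_{0_1}\in\ran|Z_1|$ is orthogonal to $\ran|Z_1|^\perp$, and $\varphi_{0_N}\in E_N$ is orthogonal to $E_1$, so both orthogonality conditions hold. This proves $W=\ran|Z_1|^\perp$.

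For the reverse direction I would simply observe that the computation above is reversible: any $u\in\ran|Z_1|^\perp=\Span\{\varphi_{a_1}:a=n_2,\dots,n_1-1\}\subseteq E_1$ satisfies $Z_1u=0$ (by Corollary~\ref{cor:Ker-Zk-kprime}, since these $\varphi_{a_1}$ span $\ker Z_1$), $Z_k u=0$ and $Z_k^* u=0$ for $k\geq 2$ (as $u\in E_1$ and $\ran Z_k^*\subseteq E_k$, $\dom$-projection $Z_ku=Z_kP_ku=0$), $Z_1^* u=0$ (as $u\in E_1=\ker Z_1^*$ when $k=1$ since $Z_1^*=Z_1^*P_2$), and $u$ is orthogonal to $\vk{\pm}$, $\vk{\varphi_{0_1}}$, $\vk{\varphi_{0_N}}$ by the orthogonality remarks above. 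Hence $u\in W$.

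The only genuinely delicate point is bookkeeping the indices correctly: one must use the hypothesis $n_{k+1}\leq n_k$ throughout (so that Theorem~\ref{properties-Zk-kprime} applies), and one must be careful that $\varphi_{0_1}$ lies in $\ran|Z_1|$ rather than in $\ker|Z_1|$ — i.e., that $0<n_2$, which is immediate — so that the rank-one Kraus operators at the $\omega_+$ frequency do not spuriously enlarge or shrink $W$. Everything else is a routine verification of orthogonality relations using the explicit representation \eqref{eq:explicit-rep-absZ} of $|Z_1|^\perp$ and the transition formulas \eqref{eq:basis-transitions-Zk}.
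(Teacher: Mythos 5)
Your proposal is correct and follows essentially the same route as the paper: both compute the kernels of all Kraus operators in \eqref{eq:Gamma-Krauss} and of their adjoints via Corollary~\ref{cor:Ker-Zk-kprime} and Theorem~\ref{properties-Zk-kprime}, and identify the intersection with $\Span\{\varphi_{a_1}: a=n_2,\dots,n_1-1\}=\ran\abs Z_1^\perp$. Only a notational caution: viewed as operators on all of $\H$, $\ker Z_k$ is $\Span\{\varphi_{a_k}: a=n_{k+1},\dots,n_k-1\}\oplus E_k^\perp$ rather than just the span, but your argument only uses the kernel after restricting to $E_1$ (via $\ker Z_k^*=E_{k+1}^\perp$), so this does not affect the conclusion.
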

\begin{proof}
In view of \eqref{eq:Gamma-Krauss}, one verifies that $\ker L_{+,\omega_-}=\ker L^*_{+,\omega_-}=\H$ and 
\begin{align}\label{eq:Ker-DmDM}
\begin{split}\ker L_{-,\omega_-}\cap\ker L^*_{-,\omega_-}&=\llb\vk -,\varphi_{0_N}\rrb^\perp\,;\\
\ker L_{\pm,\omega_+}\cap\ker L^*_{\pm,\omega_+}&=\llb\vk+,\varphi_{0_1}\rrb^\perp\,.
\end{split}
\end{align}
Moreover, for $k=1,\dots,N-1$, one has $\ker L_{+, \omega_k}=\ker L^*_{-, \omega_k}=(P_{k+1}\H)^\perp$ and taking into account \eqref{eq:Ker-Zk-kprime}, we get  
\begin{gather*}
\ker L_{-, \omega_k}=\ker L^*_{+, \omega_k}=\oP{\Span\llb\vk-,\vk+\rrb}{\Span\llb\varphi_{t_k}\rrb_{t=n_{k+1}}^{n_k-1}}\hspace{-.5em}\bigoplus_{\tiny
\begin{array}{c}
   j=1\\j\neq k
  \end{array}}^N\hspace{-.3em}P_k\H\,.
\end{gather*} 
Thereby,
\begin{gather*}
\ker L_{\pm,\omega_k}\cap\ker L^*_{\pm,\omega_k}=\oP{\Span\llb\vk-,\vk+\rrb}{\Span\llb\varphi_{t_k}\rrb_{t=n_{k+1}}^{n_k-1}}\hspace{-.5em}\bigoplus_{\tiny
\begin{array}{c}
   j=1\\j\neq k,k+1
  \end{array}}^N\hspace{-.5em}P_k\H\,.
\end{gather*}
Hence, with \eqref{eq:Ker-DmDM},  
$W=\Span\{\varphi_{t_1}\}_{t=n_2}^{n_1-1}$ and Corollary~\ref{cor:Ker-Zk-kprime} implies \eqref{eq:characterization-Wd}.
\end{proof}
 The last result claims \begin{gather*}
 W^\perp=\oP{\Span\llb\vk-,\vk+\rrb}{\ran \abs Z_1}\bigoplus_{k=2}^N\ran P_{k}\,.
 \end{gather*}

The following assertion is adapted from \cite[Th.\,3.2]{MR4107240}.
\begin{proposition}\label{prop:convex-decomposition} 
A state $\rho$ commuting with $\abs Z_{1}^{\perp}$ is invariant if and only if there exist unique $\lambda\in[0,1]$ and invariant states $\eta,\tau\in\cA B(\H)$ which  commute with $\abs Z_1^\perp$ and supported on $W, W^\perp$, respectively, such that $\rho=\lambda \eta+(1-\lambda)\tau$.
\end{proposition}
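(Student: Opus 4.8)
The plan is to reduce the statement to a direct application of the cited result from \cite{MR4107240}, after verifying that the present $N$-level model satisfies the hypotheses under which that decomposition theorem holds. The key structural facts are that $\abs Z_1^\perp$ is a harmonic projection onto the interaction-free subspace $W=\ran\abs Z_1^\perp$ (established in the preceding proposition together with Remark~\ref{rm:harmonic-invariant-W}), and that $P_W=\abs Z_1^\perp$ and $P_{W^\perp}=I-\abs Z_1^\perp$ are therefore both harmonic by Proposition~\ref{lem:harmonic-perp}. Since any state $\rho$ that commutes with $\abs Z_1^\perp$ leaves invariant the orthogonal decomposition $\H=W\oplus W^\perp$, it block-decomposes as $\rho=\abs Z_1^\perp\rho\,\abs Z_1^\perp+(I-\abs Z_1^\perp)\rho\,(I-\abs Z_1^\perp)$, with no off-diagonal terms.

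First I would set $\lambda\ceq\tr(\abs Z_1^\perp\rho)=\tr(P_W\rho)$. If $\lambda\in(0,1)$, define $\eta\ceq\lambda^{-1}P_W\rho P_W$ and $\tau\ceq(1-\lambda)^{-1}P_{W^\perp}\rho P_{W^\perp}$; these are states supported on $W$, $W^\perp$ respectively, each commuting with $\abs Z_1^\perp$ (trivially, since $P_W$ commutes with itself and with $P_{W^\perp}$), and $\rho=\lambda\eta+(1-\lambda)\tau$. The degenerate cases $\lambda=0$ or $\lambda=1$ are handled separately with the convention that the undefined summand is an arbitrary state of the appropriate support; uniqueness then forces $\lambda$, and $\eta$ or $\tau$, to be the only meaningful data.

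Next I would verify the equivalence. For the ``if'' direction: a state supported on $W$ is invariant by Remark~\ref{rm:harmonic-invariant-W}, so $\eta$ is invariant; for $\tau$, one invokes \cite[Th.\,3.2]{MR4107240} — or argues directly — that because $P_{W^\perp}$ is harmonic and the generator $\cA L$ restricted to operators supported on $W^\perp$ is again a GKSL generator (the Kraus operators $L_{\pm,\omega}$ all map $W^\perp$ into itself, as $W=\ran\abs Z_1^\perp$ sits in the common kernel), invariance of $\tau$ within this reduced model is equivalent to $\cA L^*$-invariance in $\cA B(\H)$; thus $\rho=\lambda\eta+(1-\lambda)\tau$ is a convex combination of invariant states, hence invariant. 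For the ``only if'' direction: given $\rho$ invariant and commuting with $\abs Z_1^\perp$, harmonicity of $P_W$ and $P_{W^\perp}$ together with the commutation shows that $P_W\rho P_W$ and $P_{W^\perp}\rho P_{W^\perp}$ are each separately invariant (the QMS preserves the two-sided ideals generated by these harmonic projections), so normalizing gives invariant $\eta,\tau$ with the stated supports. Uniqueness of $\lambda$ is forced by $\lambda=\tr(P_W\rho)$, and uniqueness of $\eta,\tau$ then follows since the decomposition $\rho=P_W\rho P_W+P_{W^\perp}\rho P_{W^\perp}$ is canonical.

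The main obstacle is the ``only if'' direction's claim that invariance of $\rho$ implies separate invariance of its two diagonal blocks: this requires knowing that the QMS $(\cA T_t)_{t\ge0}$ maps states supported on $W$ to states supported on $W$ and likewise for $W^\perp$, i.e.\ that both $W$ and $W^\perp$ are enclosures. This is exactly where harmonicity of \emph{both} $P_W$ and $P_{W^\perp}$ (Proposition~\ref{lem:harmonic-perp}) is used, combined with the standard fact that a harmonic projection whose complement is also harmonic induces a reduction of the semigroup; one should cite the relevant structure theorem (e.g.\ the enclosure/decomposition results underlying \cite[Th.\,3.2]{MR4107240}) rather than reprove it. The remaining steps — the explicit block decomposition, the normalization, and the handling of the boundary values $\lambda\in\{0,1\}$ — are routine.
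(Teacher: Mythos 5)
Your proposal is correct, but it is worth noting that the paper itself offers no argument for this proposition at all: it is stated as ``adapted from \cite[Th.\,3.2]{MR4107240}'' and left at that, so your reduction-plus-direct-sketch is strictly more detailed than what the paper provides. Your overall route (block-decompose $\rho$ along $W\oplus W^\perp$ using $[\rho,\abs Z_1^\perp]=0$, set $\lambda=\tr(P_W\rho)$, normalize the diagonal blocks, and get uniqueness from $P_W\rho P_W=\lambda\eta$, $P_{W^\perp}\rho P_{W^\perp}=(1-\lambda)\tau$) is sound. Two remarks. First, the step you flag as the ``main obstacle'' --- that invariance of $\rho$ forces separate invariance of the two diagonal blocks --- does not actually need any enclosure or reduction theorem for $W^\perp$: by Remark~\ref{rm:harmonic-invariant-W} (and scaling) every positive operator supported on $W$ is a fixed point of $({\cA T}_t)_{t\geq0}$, so ${\cA T}_t(P_W\rho P_W)=P_W\rho P_W$, and then linearity gives ${\cA T}_t(P_{W^\perp}\rho P_{W^\perp})={\cA T}_t(\rho)-{\cA T}_t(P_W\rho P_W)=P_{W^\perp}\rho P_{W^\perp}$; your appeal to harmonicity of $P_{W^\perp}$ and to the structure results behind \cite[Th.\,3.2]{MR4107240} is therefore dispensable (though not wrong). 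Likewise, in the ``if'' direction nothing needs to be proved about $\tau$, since $\eta,\tau$ are assumed invariant and a convex combination of invariant states is invariant; your excursion through the reduced GKSL generator on $W^\perp$ is redundant. Second, in the boundary cases $\lambda\in\{0,1\}$ the complementary summand must still be an \emph{invariant} state with the stated support, not an arbitrary one; for $\lambda=1$ one can take $\tau=P_-$, which is invariant and supported on $W^\perp$, and uniqueness there holds only for $\lambda$ and the block actually present --- a looseness already inherent in the statement, so this is a cosmetic caveat rather than a gap.
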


In what follows, we shall work on the subspace $\Omega\ceq\{\vk-,\vk+, \varphi_{0_1},\varphi_{0_N}\}^\perp$ which contains $V$. One easily verifies  that if $\rho$ is a state supported on $\Omega$, then
\begin{align}\label{eq:predual-generator-invariant}
\begin{split}
\cA L(\rho)=\sum_{j=1}^{N-1}&\left(\eta_{-,\omega_j}\abs{Z}_j-
\eta_{+,\omega_j}P_{j+1}\right)\rho+\rho\left(\cc{\eta}_{-,\omega_j}\abs{Z}_j-
\cc{\eta}_{+,\omega_j}P_{j+1}\right)\\
&+\Gamma_{-,\omega_j}Z_j\rho Z_j^*+\Gamma_{+,\omega_j}Z_j^*\rho Z_j\,,
\end{split}
\end{align} with $\eta_{-,\omega_j}=-(\Gamma_{-,\omega_j}+2i\gamma_{-,\omega_j})/2$ and $\eta_{+,\omega_j}=(\Gamma_{+,\omega_j}-2i\gamma_{+,\omega_j})/2$. Besides,
\begin{gather}\label{eq:ker-rho-inOmega}
\llb\vk-,\vk+, \varphi_{0_1},\varphi_{0_N}\rrb\subset\ker \rho \quad \textrm{and} \quad \rho\sum_{k=1}^NP_k= \rho I=I  \rho=\sum_{k=1}^NP_k\rho 
\end{gather} where $I$ is the identity matrix.
From now on, we denote $\Gamma_{-,\omega_k}/\Gamma_{+,\omega_k}=e^{\beta_k}$,  with $\beta_k\ceq{\omega_k\beta(\omega_k)}$, for $k=1,\dots,N-1$.
\begin{theorem}\label{th:invariant-states}
Let  $\rho$ be a state supported on $\Omega$. Then $\rho$ is invariant if and only if  it commutes with $\abs Z_k, P_{k+1}$ and satisfies 
\begin{gather}\label{eq:detail-balance-state}
Z_k^*\rho Z_k=e^{\beta_k} \rho \abs Z_k\,,
\end{gather}
for all $k=1,\dots,N-1$. In this case $\rho$ belongs to $\llb H\rrb'$.
\end{theorem}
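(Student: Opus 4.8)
The plan is to prove the equivalence directly from the explicit form of the pre-dual generator \eqref{eq:predual-generator-invariant}, which is valid since $\rho$ is supported on $\Omega$. The key observation is that $\cA L(\rho)=0$ must hold, and the strategy is to show that the two ``types'' of terms appearing in \eqref{eq:predual-generator-invariant}—the commutator-like terms involving $\abs Z_j$ and $P_{j+1}$, and the Kraus terms $Z_j\rho Z_j^*$ and $Z_j^*\rho Z_j$—live on complementary parts of the decomposition $\H=\bigoplus_k P_k\H$ and hence must vanish separately. Indeed, $Z_j\rho Z_j^*$ is supported on $E_{j+1}$ while $Z_j^*\rho Z_j$ is supported on $E_j$, and similarly $\abs Z_j$ lives on $E_j$ while $P_{j+1}$ lives on $E_{j+1}$; using \eqref{eq:ker-rho-inOmega} to write $\rho=\sum_k P_k\rho P_k$ plus off-diagonal blocks, one compresses $\cA L(\rho)=0$ to each pair of levels $(j,j+1)$ and to each block.

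First I would compress the equation $\cA L(\rho)=0$ by the projections $P_k$ on both sides. Compressing by $P_k$ on the left and $P_\ell$ on the right with $|k-\ell|\ge 2$ kills all Kraus terms and forces the off-diagonal blocks $P_k\rho P_\ell$ to satisfy a homogeneous linear relation with nonzero (complex) coefficient, hence $P_k\rho P_\ell=0$; this already shows $\rho$ is ``block-tridiagonal.'' Next, compressing by $P_k$ on both sides and collecting, the diagonal block $P_k\rho P_k$ must satisfy a relation forcing it to commute with $\abs Z_{k-1}$ and $\abs Z_k$ (the projections that act nontrivially within $E_k$ and come from the real parts $\Gamma$ in $\eta_{\pm}$); taking real and imaginary parts of the coefficients $\eta_{\pm,\omega_j}$ separates the self-adjoint commutation conditions from the detailed-balance-type identity. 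Compressing $P_{k+1}\cA L(\rho)P_k=0$ (the adjacent off-diagonal block) yields, after using that $Z_k$ intertwines $E_k$ and $E_{k+1}$, that these blocks vanish too, so in fact $\rho$ is block-diagonal: $\rho=\sum_k P_k\rho P_k$. Finally, the surviving diagonal equation at the pair $(j,j+1)$ reads $\Gamma_{-,\omega_j}Z_j\rho Z_j^* - \Gamma_{+,\omega_j}P_{j+1}\rho - \cc{\eta}\text{-terms}=0$ on $E_{j+1}$ and the mirror equation on $E_j$; rearranging and dividing by $\Gamma_{+,\omega_j}$ gives precisely $Z_j^*\rho Z_j=e^{\beta_j}\rho\abs Z_j$ together with the commutations $[\rho,\abs Z_j]=[\rho,P_{j+1}]=0$. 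The converse is a direct substitution: if $\rho$ commutes with all $\abs Z_k,P_{k+1}$ and satisfies \eqref{eq:detail-balance-state}, plug into \eqref{eq:predual-generator-invariant} and check term-by-term cancellation, using $Z_j Z_j^*=P_{j+1}$ and $Z_j^*Z_j=\abs Z_j$ from \eqref{eq:Z-Zstar}.

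For the last sentence, that $\rho\in\{H\}'$: since $H=\varepsilon_-P_-+\varepsilon_+P_++\sum_k\varepsilon_k P_k$ and $\rho$ is block-diagonal with $P_-\rho=P_+\rho=0$, we have $\rho=\sum_k P_k\rho P_k$ and each $P_k$ commutes with $\rho$, hence $H\rho=\sum_k\varepsilon_k P_k\rho=\sum_k\varepsilon_k\rho P_k=\rho H$. I would state this as a one-line consequence of block-diagonality.

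The main obstacle I anticipate is bookkeeping the compressions cleanly: one must be careful that $\abs Z_k$ and $\abs Z_{k-1}$ both act on $E_k$ (the former as $Z_k^*Z_k\le P_k$, the latter as $Z_{k-1}Z_{k-1}^*=P_k$), so the diagonal block equation at level $k$ actually couples the $(k-1,k)$ and $(k,k+1)$ transition data; disentangling which commutation comes from which Bohr frequency, and separating the real part (giving $[\rho,\abs Z_k]=[\rho,P_{k+1}]=0$) from the full complex coefficient (giving \eqref{eq:detail-balance-state}), is the delicate step. Once the block structure $\rho=\sum_k P_k\rho P_k$ is established, everything else is linear algebra on each $E_k$ using the isometry properties of $Z_k$ from Proposition~\ref{Zk-isomorphism} and \eqref{eq:Z-Zstar}.
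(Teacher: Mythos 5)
Your overall strategy (compress $\cA L(\rho)=0$, valid on $\Omega$ by \eqref{eq:predual-generator-invariant}, against the level projections and their refinements, first killing the off-diagonal blocks and then reading off the detailed-balance identity; converse by direct substitution using $Z_kZ_k^*=P_{k+1}$, $Z_k^*Z_k=\abs Z_k$) is the same as the paper's, and the off-diagonal part and the statement $\rho\in\{H\}'$ are essentially fine. The genuine gap is in the step where you claim that the ``surviving diagonal equation at the pair $(j,j+1)$'' can simply be rearranged and divided by $\Gamma_{+,\omega_j}$ to give \eqref{eq:detail-balance-state}. The diagonal compressions do \emph{not} decouple pairwise: compressing at level $k$ (e.g. $\abs Z_k\cA L(\rho)\abs Z_k=0$) produces, besides $\Gamma_{+,\omega_k}Z_k^*\rho Z_k$ and the $\Gamma$-part of the $\eta$-terms, the extra Kraus input $\Gamma_{-,\omega_{k-1}}\,\abs Z_k Z_{k-1}\rho Z_{k-1}^*\abs Z_k$ coming from the transition below (this is the paper's \eqref{eq:absZ-L-absZ}), so three consecutive levels are coupled. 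You flag this coupling yourself as ``the delicate step,'' but you give no mechanism to disentangle it, and without one the derivation of \eqref{eq:detail-balance-state} does not go through. The paper resolves it by a recursion anchored at $k=1$: there the lower Kraus input is absent (the $\vk+\leftrightarrow\varphi_{0_1}$ operators are killed because $\rho$ is supported on $\Omega$), giving $Z_1^*\rho Z_1=e^{\beta_1}\abs Z_1\rho\abs Z_1$ outright, and then the inductive hypothesis is propagated upward via $Z_{k-1}\rho Z_{k-1}^*=e^{-\beta_{k-1}}\rho P_k$ (equations \eqref{eq:aux-induction-P_k} and \eqref{eq:aux-induction-P_k-rho}), which cancels the coupling term at level $k$.

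Two further points. First, the commutation $[\rho,\abs Z_k]=0$ is not obtained by ``separating real and imaginary parts of $\eta_{\pm,\omega_j}$'': in the paper it comes from the mixed compression $\abs Z_k\cA L(\rho)\abs Z_k^{\perp}=0$ \emph{after} substituting the recursively obtained relation \eqref{eq:aux-induction-P_k-rho}, so it too depends on the induction you have not supplied. Second, a notational slip: $\abs Z_{k-1}=Z_{k-1}^*Z_{k-1}\leq P_{k-1}$ acts within $E_{k-1}$, whereas the operator acting on $E_k$ is $Z_{k-1}Z_{k-1}^*=P_k$; conflating the two obscures exactly which terms appear in the level-$k$ diagonal equation, which is where the missing induction lives.
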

\begin{proof}
If $\rho$ is invariant then, by \eqref{eq:predual-generator-invariant} and since $P_k=\oP{\abs{Z}_k}{\abs Z^\perp_k}$,
\begin{equation*}
\begin{aligned}
0=P_k\cA L(\rho)P_N={}&\big((\eta_{-,\omega_k}-\eta_{+,\omega_{k-1}}-\cc{\eta}_{+,\omega_{N-1}})\abs{Z}_k\\&-(\eta_{+,\omega_{k-1}}+\cc{\eta}_{+,\omega_{N-1}})\abs Z^\perp_{k}\big)\rho P_N\,,
\end{aligned}\quad (k=1,\dots,N-1) 
\end{equation*}
with nonzero coefficients of $\abs{Z}_k$ and $\abs Z^\perp_k$. Due to orthogonality,  one has $\abs{Z}_k\rho P_N=\abs Z^\perp_k\rho P_N=0$, i.e.,  $P_k\rho P_N=P_N\rho P_k=0$. Thus, by \eqref{eq:ker-rho-inOmega}
\begin{gather}\label{eq:commutation-pn}
[\rho,P_N]=\sum_{k=1}^{N-1} P_k\rho P_N-P_N\rho\sum_{j=k}^{N-1} P_k=0\,,
\end{gather}
thence $\rho$ commutes with $P_N$. Besides, for $k,l=1,\dots,N-1$, with $k\neq l$,     
\begin{align*}
0=P_k\cA L(\rho)\abs{Z}_{l}={}&\big((\eta_{-,\omega_k}-\eta_{+,\omega_{k-1}}+\cc{\eta}_{-,\omega_{l}}-\cc{\eta}_{+,\omega_{l-1}})\abs{Z}_k
\\&-(\eta_{+,\omega_{k-1}}-\cc{\eta}_{-,\omega_{l}}+\cc{\eta}_{+,\omega_{l-1}})\abs Z^\perp_{k}\big)\rho\abs{Z}_{l}\,,
\end{align*}
whence $\abs{Z}_k\rho\abs{Z}_{l}=\abs Z^\perp_k\rho\abs{Z}_{l}=0$.  Similarly, identity $0=P_k\cA L(\rho)\abs Z^\perp_{l}$ implies that $\abs{Z}_k\rho \abs Z^\perp_l=\abs Z^\perp_k\rho \abs Z^\perp_l=0$. Hence, $P_k\rho P_l=0$. Thus, one obtains analogously to \eqref{eq:commutation-pn} that  $\rho$ commutes with $P_k$. So, $\rho\in\{H\}'$, since the commutation of $\rho$ with $P_+,P_-$ are evident.  Now, \eqref{eq:predual-generator-invariant} implies 
\begin{eqnarray}\label{eq:absZ-L-absZ}
\begin{aligned}
0=\abs Z_k \cA L(\rho)\abs Z_k ={}&\left(-\Gamma_{-,\omega_k}-\Gamma_{+,\omega_{k-1}}\right)\abs Z_k \rho\abs Z_k\\&+\Gamma_{-,\omega_{k-1}}\abs Z_k Z_{k-1}\rho Z_{k-1}^*\abs Z_k +\Gamma_{+,\omega_{k}}Z_k^*\rho Z_k\,.
\end{aligned}
\end{eqnarray}
So, we claim that 
\begin{gather}\label{eq:aux-induction-P_k}
Z_k^*\rho Z_k=e^{\beta_k} \abs Z_k\rho \abs Z_k\,,\quad k=1,2,\dots, N-1\,.
\end{gather}
Indeed, $Z_1^*\rho Z_1=e^{\beta_1}\abs Z_1\rho \abs Z_1$ and, by induction on $k$, assuming \eqref{eq:aux-induction-P_k} true for $k-1$, then left multiplication by $Z_{k-1}$, right multiplication by $Z_{k-1}^{*}$ and the fact that $\rho$ commutes with $P_k$ yield, 
\begin{gather}\label{eq:aux-induction-P_k-rho}
Z_{k-1}\rho Z_{k-1}^*=e^{-\beta_{k-1}}Z_{k-1}(Z_{k-1}^*\rho Z_{k-1})Z_{k-1}^*=e^{-\beta_{k-1}}\rho P_k\,.\end{gather}
Thus, replacing \eqref{eq:aux-induction-P_k-rho} in \eqref{eq:absZ-L-absZ},  one obtains \eqref{eq:aux-induction-P_k}.
Thereby, \eqref{eq:detail-balance-state} follows by \eqref{eq:aux-induction-P_k}, after proving that $\rho$ commutes with $\abs Z_k$, for all $k=1,\dots,N-1$. So, \begin{align*}
0=\abs Z_k \cA L(\rho)\abs Z^\perp_k=&{}\left(\eta_{-,\omega_k}-\Gamma_{+,\omega_{k-1}}\right)\abs Z_k \rho \abs Z^\perp_k\\&+\Gamma_{-,\omega_{k-1}}\abs Z_k Z_{k-1}\rho Z_{k-1}^*\abs Z^\perp_k\,,
\end{align*}
which by replacing \eqref{eq:aux-induction-P_k-rho}, one obtains $\abs Z_k \rho \abs Z^\perp_k= \abs Z^\perp_k\rho\abs Z_k=0$. Hence, since $\rho$ and $P_k$ commute,  
\begin{align*}
[\rho,\abs Z_k]&=\rho P_k\abs Z_k-\abs Z_k P_k\rho\\&=(\abs Z_k+\abs Z^\perp_k)\rho\abs Z_k-\abs Z_k\rho(\abs Z_k+\abs Z^\perp_k)=0\,.
\end{align*}

Conversely, left multiplication by $Z_{k-1}$, right multiplication by $Z_{k-1}^{*}$ and the fact that $\rho$ commutes with $P_{k+1}$ show that \eqref{eq:detail-balance-state} implies, 
\begin{gather}\label{eq:equivalency-rho-absZ}
Z_k\rho Z_k^*=e^{-\beta_k}\rho P_{k+1}\, \qquad(k=1,\dots,N-1)
\end{gather}
Thus, by replacing both equations in  \eqref{eq:predual-generator-invariant}, one gets that $\rho$ is invariant.
\end{proof}
\begin{remark}
For invariant states  supported on $\Omega$, properties \eqref{eq:detail-balance-state} and   
\eqref{eq:equivalency-rho-absZ} are equivalent.
\end{remark}
\begin{corollary}\label{cor:detailed-balance-property}
If $\rho$ is an invariant state supported on $\Omega$, then 
\begin{align}\label{eq:invariant-property}
\rho Z_k=e^{\beta_k} Z_k\rho\,,\quad \mbox{for }k=1,\dots,N-1\,.
\end{align}
\end{corollary}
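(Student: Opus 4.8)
The plan is to read off \eqref{eq:invariant-property} directly from the characterisation of invariance established in Theorem~\ref{th:invariant-states}. That theorem tells us that an invariant state $\rho$ supported on $\Omega$ commutes with $\abs Z_k$ and with $P_{k+1}$, and satisfies the detailed-balance relation \eqref{eq:detail-balance-state}, namely $Z_k^*\rho Z_k=e^{\beta_k}\rho\abs Z_k$, for every $k=1,\dots,N-1$. So the corollary is really just an algebraic rearrangement of \eqref{eq:detail-balance-state}, and the idea is to left-multiply that identity by $Z_k$ and simplify both sides using the relations satisfied by $Z_k=Z_{k,k+1}$.

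Concretely, first I would recall $Z_kZ_k^*=P_{k+1}$, which is Theorem~\ref{properties-Zk-kprime}\eqref{eq:p-zk-k01} with $k'=k+1$ (permissible since we assume $n_{k+1}\leq n_k$), together with the ``absorption'' identities $P_{k+1}Z_k=Z_k$ (from the relations listed right after Definition~\ref{def:transition-operator}) and $Z_k\abs Z_k=Z_kZ_k^*Z_k=P_{k+1}Z_k=Z_k$. Left-multiplying \eqref{eq:detail-balance-state} by $Z_k$ gives
\begin{gather*}
P_{k+1}\rho Z_k=Z_kZ_k^*\rho Z_k=e^{\beta_k}Z_k\rho\abs Z_k\,.
\end{gather*}
On the left-hand side, since $[\rho,P_{k+1}]=0$ and $P_{k+1}Z_k=Z_k$, this is $P_{k+1}\rho Z_k=\rho P_{k+1}Z_k=\rho Z_k$; on the right-hand side, since $[\rho,\abs Z_k]=0$ and $Z_k\abs Z_k=Z_k$, it is $Z_k\rho\abs Z_k=Z_k\abs Z_k\rho=Z_k\rho$. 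Putting the two together yields $\rho Z_k=e^{\beta_k}Z_k\rho$, which is exactly \eqref{eq:invariant-property}.

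There is no genuine obstacle here; the only thing requiring a little care is to pick out the correct commutation relations furnished by Theorem~\ref{th:invariant-states} and to notice the two idempotent-type identities $P_{k+1}Z_k=Z_k$ and $Z_k\abs Z_k=Z_k$ that let the left multiplication by $Z_k$ collapse both sides. I would present the argument in a couple of lines, uniformly in $k=1,\dots,N-1$.
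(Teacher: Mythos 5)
Your argument is correct and is essentially the paper's own proof: both rearrange the detailed-balance relation \eqref{eq:detail-balance-state} using the commutation of $\rho$ with $\abs Z_k$ and $P_{k+1}$ from Theorem~\ref{th:invariant-states} together with $Z_kZ_k^*=P_{k+1}$, $P_{k+1}Z_k=Z_k$ and $Z_k\abs Z_k=Z_k$, the only difference being that you left-multiply the identity by $Z_k$ while the paper starts from $\rho Z_k=\rho P_{k+1}Z_k$ and substitutes.
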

\begin{proof}
From Theorem~\ref{th:invariant-states}, $\rho$ commutes with $\abs Z_k$ and $P_{k+1}$. Hence, \eqref{eq:detail-balance-state} implies 
$\rho Z_k=\rho P_{k+1}Z_k=Z_kZ_k^*\rho Z_k=e^{\beta_k} Z_k\abs Z_k\rho$,
which yields  \eqref{eq:invariant-property}.
\end{proof}
\begin{remark} As a consequence of Corollary~\ref{cor:detailed-balance-property} and by \eqref{eq:ker-rho-inOmega}, an invariant state $\rho$ supported on $\Omega$ satisfies $\rho L_{-,\omega_k}=e^{\beta_k} L_{-,\omega_k}\rho$, for $k=+,-,1,\dots,N-1$, i.e., $\rho$ is \emph{detailed balance} (cf. \cite[Sec.\,3.2]{MR4107240}).
\end{remark}
\subsection{Structure of invariant states}
\label{sub-invstates}
We shall turn out our attention to the subspace $ V\subset\Omega$,  given in \eqref{eq:calV-harmonic}.
\begin{proposition}\label{prop:omega-vs-V}If an invariant state $\rho$ is supported on $\Omega$ then so is on $ V$. 
\end{proposition}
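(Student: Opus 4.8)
The plan is to show that an invariant state $\rho$ supported on $\Omega$ must, in fact, vanish on the complementary directions $Z^n\varphi_{0_1}$, ${Z^*}^n\varphi_{0_N}$, and ${Z^*}^{s_m}\varphi_{0_{2m+1}}$ that distinguish $V$ from $\Omega$; once those vectors lie in $\ker\rho$, the support of $\rho$ is contained in $V$. First I would invoke Theorem~\ref{th:invariant-states}: since $\rho$ is invariant and supported on $\Omega$, it commutes with every $P_k$, with every $\abs Z_k$ and $\abs Z_k^\perp$, and it satisfies the detailed-balance relations \eqref{eq:detail-balance-state}, \eqref{eq:equivalency-rho-absZ}, and the covariance $\rho Z_k=e^{\beta_k}Z_k\rho$ from Corollary~\ref{cor:detailed-balance-property}. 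These are the tools that let me push $\varphi_{0_1}$-type kernel vectors around the graph.

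The key step is to propagate the kernel inclusions in \eqref{eq:ker-rho-inOmega}, namely $\varphi_{0_1},\varphi_{0_N}\in\ker\rho$, through the transport operator. Because $\rho$ commutes with $\abs Z_k$ and, by \eqref{eq:equivalency-rho-absZ}, $Z_k\rho Z_k^*=e^{-\beta_k}\rho P_{k+1}$, I can compute $\rho Z\varphi_{0_1}$ directly: using $Z\varphi_{0_k}=\vk{0_{k+1}}$ (this is \eqref{eq:basis-transitions-Z}) together with the covariance relation, $\rho Z\varphi_{0_1}=e^{\beta_1}Z\rho\varphi_{0_1}=0$, so $Z\varphi_{0_1}\in\ker\rho$; iterating, and using Corollary~\ref{prop:powers-Z} to identify the powers $Z^n\varphi_{0_1}$ (alternately proportional to $\varphi_{0_{k}}$ or $\vk{0_k}$), one gets $Z^n\varphi_{0_1}\in\ker\rho$ for all $0\leq n\leq N-1$. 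The analogous argument with $Z^*$ in place of $Z$ — using $\rho Z_k^*=e^{-\beta_k}Z_k^*\rho$, which follows from taking adjoints in \eqref{eq:invariant-property} since $\rho$ is selfadjoint, together with $Z^*\vk{0_{k+1}}=\varphi_{0_k}$ — gives ${Z^*}^n\varphi_{0_N}\in\ker\rho$. Finally, for the vectors ${Z^*}^{s_m}\varphi_{0_{2m+1}}$: by \eqref{eq:power-Z-varphi-odd} one has $\varphi_{0_{2m+1}}=\beta(m,2)Z^{2m}\varphi_{0_1}$, which already lies in $\ker\rho$ by the first part, and then applying $Z^*$ repeatedly (and using that $\rho$ commutes with the relevant $P_{k}$ so the covariance relation keeps applying) shows ${Z^*}^{s_m}\varphi_{0_{2m+1}}\in\ker\rho$ for $1\leq s_m\leq 2m$. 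Combining all of these with \eqref{eq:ker-rho-inOmega} itself, every vector in the spanning set whose orthogonal complement defines $V$ lies in $\ker\rho$; hence $\cc{\ran\rho}=(\ker\rho)^\perp\subset V$, i.e., $\rho$ is supported on $V$.

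The main obstacle I anticipate is bookkeeping rather than conceptual: one must be careful that at each application of the covariance relation the intermediate vectors $Z^j\varphi_{0_1}$ genuinely sit on a single level $E_{k}$ (so that $\rho$'s commutation with $P_k$ and $\abs Z_k$ can legitimately be used), and that the index ranges $0\leq n\leq N-1$, $1\leq m\leq(N-1)/2$, $1\leq s_m\leq 2m$ are exactly matched by how far the propagation can be carried before a power of $Z$ annihilates the vector (which happens precisely when it reaches the top or bottom level). Corollary~\ref{prop:powers-Z} controls this, but one should check that $Z_k^*\varphi_{0_{k+1}}$ is handled correctly when $n_{k+1}<n_k$, i.e., that $\varphi_{0_{k+1}}$ is not accidentally in $\ker Z_k^*$ — and indeed it is not, since the index $0$ is always $\le n_{k+1}-1$, so the transitions in Proposition~\ref{cor:Ker-Z} apply. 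Once this indexing is pinned down the argument is a short computation.
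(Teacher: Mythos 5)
Your proposal is correct and follows essentially the same route as the paper: both use the covariance relation $\rho Z_k=e^{\beta_k}Z_k\rho$ from Corollary~\ref{cor:detailed-balance-property} together with $\varphi_{0_1},\varphi_{0_N}\in\ker\rho$ from \eqref{eq:ker-rho-inOmega} (and \eqref{eq:power-Z-varphi-odd} for the odd-level vectors) to show that every spanning vector of $V^\perp$ lies in $\ker\rho$. The only cosmetic difference is that you propagate the kernel level by level via $\rho Z_k$ and $\rho Z_k^*$, while the paper phrases the same iteration through inner products $\ip{\rho f}{Z^n\varphi_{0_1}}$ and an iterated relation $Z^{*n}\rho=\alpha_n\rho Z^{*n}$; your care about each intermediate vector sitting in a single level is exactly what justifies that step.
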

\begin{proof}
For $n=0,\dots,N-1$, by virtue of \eqref{eq:invariant-property}, there exists $\alpha_n>0$ such that $Z^{*n}\rho=\alpha_n\rho Z^{*n}$. In this fashion, in view of \eqref{eq:ker-rho-inOmega}, for any $f\in\H$,
\begin{gather}\label{eq:invarian-in-V}
\ip{\rho f}{Z^n\varphi_{0_1}}=\alpha_n\ip{\rho Z^{*n} f}{\varphi_{0_1}}=0\,.
\end{gather}
Similarly, $\ip{\rho f}{Z^{*n}\varphi_{0_N}}=0$. Moreover, with suitable $s_m,m\in\N$, \eqref{eq:power-Z-varphi-odd} and \eqref{eq:invarian-in-V} imply  $\ip{\rho f}{{Z^*}^{s_m}\varphi_{0_{2m+1}}}=\alpha_{s_m}\ip{\rho Z^{s_m}f}{Z^{2m}\varphi_{0_1}}=0$, with $\alpha_{s_m}>0$. Hence, $V^\perp\subset\ker \rho$,  which implies our assertion. 
\end{proof}
\begin{remark}\label{rm:W-in-V}
The interaction-free subspace $W$ is contained in $ V$. Indeed, if $P_W$ is the projection of $\H$ onto $W\subset\Omega$, then $(\tr P_W)^{-1} P_W$ is an invariant state supported on $\Omega$, hence, by  Proposition~\ref{prop:omega-vs-V} one infers that $W\subset  V$.
\end{remark}
Let us make the following stratification of $V$. For $k=1,\dots,N$, we consider the subspaces $V_k\ceq P_kV$. Explicitly,   
\begin{gather*}
V_k=P_{k}\H \ominus\llb Z^{k-1}\varphi_{0_1}, {Z^*}^{N-k}\varphi_{0_N},{Z^*}^{2m-k+1}\varphi_{0_{2m+1}}\,:\,1\leq m \leq \frac{(k-1)}{2}\rrb\,.
\end{gather*}
It is evident from Remark~\ref{rm:W-in-V} that $W\subset V_1$.
\begin{lemma}\label{lem:properties-trasport-operator}
For $k=1,\dots,N-1$, the following statements are true.
\begin{enumerate}
\item\label{it:isom-V01} The transport operator satisfies 
\begin{gather}\label{eq:ZV-properties01}
Z^jV_k=V_{k+j}\,,\quad j=1,\dots,N-k-1\,.\\\label{eq:ZV-properties02}
\mbox{Moreover,}\hspace{3.5em} Z^kV=\bigoplus_{j=k+1}^{N}V_{j}\qquad\mbox{and}\qquad V=\bigoplus_{j=0}^{N-1}Z^jV_1\,.{\hspace{2.5em}}\end{gather}
\item\label{it:isom-V02} $Z_k\colon\abs Z_k V\to V_{k+1}$ and $Z^*_k\colon V_{k+1}  \to \abs Z_k V$ are isometric isomorphisms.
\end{enumerate}
\end{lemma}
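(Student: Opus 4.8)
The plan is to treat the two items separately, deducing everything from the already-established structural facts about the transition operators, Corollary~\ref{prop:powers-Z}, and the explicit description of $V$ in \eqref{eq:calV-harmonic}.

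For item~\ref{it:isom-V01}, I would first unwind the definition $V_k = P_k V$ together with the orthogonal-complement description of $V$: a vector in $P_k\H$ lies in $V_k$ exactly when it is orthogonal to each of the finitely many vectors $Z^{k-1}\varphi_{0_1}$, ${Z^*}^{N-k}\varphi_{0_N}$, ${Z^*}^{2m-k+1}\varphi_{0_{2m+1}}$ that happen to live on level $k$. The key computation is that multiplication by $Z$ sends this spanning list on level $k$ to the corresponding spanning list on level $k+1$: indeed $Z(Z^{k-1}\varphi_{0_1}) = Z^k\varphi_{0_1}$, and the relation \eqref{eq:power-Z-varphi-odd} lets me rewrite the remaining generators as scalar multiples of powers of $Z$ (resp.\ $Z^*$) applied to $\varphi_{0_1}$ or $\varphi_{0_N}$, after which Corollary~\ref{prop:powers-Z} and \eqref{eq:basis-transitions-Z} make the bookkeeping routine. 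Since by Proposition~\ref{cor:Ker-Z}(\ref{it.core-Z02}) $Z\rEu{\ran|Z|}$ is an isometric isomorphism onto $ZZ^*\H = \bigoplus_{k=2}^N P_k\H$, and since each $V_k \subset \ran\abs Z_k \subset \ran|Z|$ for $k\leq N-1$ (because the span being removed from $P_k\H$ always contains the ``missing'' entangled vectors $\varphi_{t_k}$, $t\geq n_{k+1}$, that make up $\ker|Z|$ on level $k$ — here $W\subset V_1$ is the relevant sanity check), $Z$ carries $V_k$ isometrically onto $V_{k+1}$. Iterating gives \eqref{eq:ZV-properties01}; then $Z^k V = \bigoplus_j Z^k V_j = \bigoplus_{j\geq k+1} V_j$ using that $Z$ annihilates $V_N$ and shifts the rest, and the decomposition $V = \bigoplus_{j=0}^{N-1} Z^j V_1$ follows because $V = \bigoplus_{k=1}^N V_k$ and $V_k = Z^{k-1}V_1$.

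For item~\ref{it:isom-V02}, I would use that $V_k \subset \ran\abs Z_k$, so $\abs Z_k V_k = V_k$ and hence $\abs Z_k V$ decomposes compatibly; by Proposition~\ref{Zk-isomorphism}, $Z_k\rEu{\ran\abs Z_k}$ and $Z_k^*$ are mutually inverse isometric isomorphisms between $\ran\abs Z_k$ and $E_{k+1} = P_{k+1}\H$. Restricting this pair to the subspace $V_k \subset \ran\abs Z_k$ and noting $Z_k V_k = Z V_k = V_{k+1}$ from item~\ref{it:isom-V01} (the case $j=1$) gives the claim, since $Z_k^*$ then sends $V_{k+1}$ back onto $\abs Z_k V_k = V_k = \abs Z_k V$ restricted to that level.

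The main obstacle I anticipate is purely combinatorial: checking that the spanning set defining $V^\perp$ intersects each level $k$ in exactly the vectors $Z^{k-1}\varphi_{0_1}$, ${Z^*}^{N-k}\varphi_{0_N}$, ${Z^*}^{2m-k+1}\varphi_{0_{2m+1}}$ with $1\leq m\leq (k-1)/2$ — i.e.\ matching the index ranges in \eqref{eq:calV-harmonic} against the level on which each vector sits, and verifying that the $Z$-images and $Z^*$-images land on the correct adjacent level with the right index shifts (so that no generator is created or lost). Once that bookkeeping is pinned down, everything else is an application of the isometry statements in Propositions~\ref{Zk-isomorphism} and~\ref{cor:Ker-Z}.
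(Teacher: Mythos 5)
There is a genuine gap: your pivotal claim that $V_k\subset\ran\abs Z_k$ --- justified by asserting that the span removed from $P_k\H$ contains the kernel vectors $\varphi_{t_k}$, $t\geq n_{k+1}$ --- is false, and at level $k=1$ it is exactly backwards. By \eqref{eq:characterization-Wd} one has $W=\ran\abs Z_1^{\perp}$, i.e.\ $W=\ker Z_1$, while $W\subset V_1$ (Remark~\ref{rm:W-in-V}, and the proposition preceding the lemma gives $V_1=\oP{W}{M}$). So whenever $n_2<n_1$ the removed span $\{\varphi_{0_1},{Z^*}^{N-1}\varphi_{0_N}\}$ lies inside $\ran\abs Z_1$ and the kernel vectors sit inside $V_1$, not inside the removed span; the ``sanity check'' $W\subset V_1$ you invoke actually refutes your containment. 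Two of your conclusions then fail. First, $Z$ does not carry $V_1$ isometrically onto $V_2$: its restriction to $V_1$ has kernel $W$, so only the set-theoretic equality $ZV_k=V_{k+1}$ is available, and it needs a different justification --- the paper obtains it by checking $ZV_k^{\perp}=V_{k+1}^{\perp}$ for the relative complements $V_k^{\perp}=\oM{P_k\H}{V_k}$ (exactly the spanning-list bookkeeping you describe, via \eqref{eq:power-Z-varphi-odd}, Corollary~\ref{prop:powers-Z} and \eqref{eq:basis-transitions-Z}) and then iterating. Second, your argument for item~\ref{it:isom-V02} collapses, because $\abs Z_kV\neq V_k$ in general: at $k=1$, $\abs Z_1V=\oM{V_1}{W}\subsetneq V_1$ whenever $W\neq0$. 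This is precisely why the lemma states the isometric isomorphism with domain $\abs Z_kV$ rather than $V_k$.

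The repair for item~\ref{it:isom-V02} is the paper's route: from $ZV_k=V_{k+1}$ write $V_{k+1}=ZP_kV=Z_k(\abs Z_k+\abs Z_k^{\perp})V=Z_k\abs Z_kV$, since $Z_k\abs Z_k^{\perp}=0$; then $Z_k^*V_{k+1}=\abs Z_k\abs Z_kV=\abs Z_kV$, and item~\ref{it.core-Z02} of Proposition~\ref{cor:Ker-Z} (equivalently Proposition~\ref{Zk-isomorphism}) gives the two isometric isomorphisms between $\abs Z_kV$ and $V_{k+1}$. Your treatment of the generators of $V^{\perp}$ level by level coincides with the paper's first step, so the proposal is salvageable, but only after deleting the false containment and the isometry claim for $V_k$ itself.
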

\begin{proof} We check at once that $ZV_k^\perp=V_{k+1}^\perp$, where $V_k^\perp=\oM{P_k\H}{V_k}$, which implies $ZV_k=V_{k+1}$. Thus, by induction on $j$, suppose $Z^{j-1}V_k=V_{k+j-1}$ and thence 
$Z^jV_k=ZZ^{j-1}V_k=ZV_{k+j-1}$, which yields \eqref{eq:ZV-properties01}. Besides, 
\begin{align*}
Z^kV=\bigoplus_{j=1}^{N}Z^kV_{j}=\bigoplus_{j=1}^{N-k}V_{j+k}=\bigoplus_{j=k+1}^{N}V_{j}\,.
\end{align*}
Moreover, $\bigoplus_{j=0}^{N-1}Z^jV_1=\bigoplus_{j=0}^{N-1}V_{j+1}=V$, which proves the statement \ref{it:isom-V01}.

Now, by virtue of \eqref{eq:ZV-properties01} and item \ref{it.core-Z01} of Proposition~\ref{cor:Ker-Z},  
\begin{align*}
V_{k+1}=ZP_k V=Z_k(\abs Z_k+\abs Z_k^\perp)V=Z_k\abs Z_kV\,.
\end{align*}
This expression satisfies $Z_k^*V_{k+1}=\abs Z_kV$, whence from item \ref{it.core-Z02} of Proposition~\ref{cor:Ker-Z}, the statement \ref{it:isom-V02} readily follows.
\end{proof}

On account of the second identity in \eqref{eq:ZV-properties02},
we are motivated to give a representation for the subspace $V_1$.
\begin{proposition}
The subspace $V_1$ can be written as $V_1=\oP{W}{M}$, where 
\begin{align*}
M=\Span\Big\{&\frac{1}{\cc{\mu}_{b_1}}\varphi_{(b_1)_1}-\frac{1}{\cc{\mu}_{b_1+1}}\varphi_{(b_1+1)_1}\,:\\
&\mu_{b_1}=\prod_{j=2}^Nn_j^{-\frac12}\sum_{b_{N-1},\dots,b_{2}=0}^{n_{N}-1,\dots,n_3-1}\zeta_{N-1}^{-b_{N-2}b_{N-1}}\dots\zeta_{2}^{-b_{1}b_{2}}\in\C \Big\}_{b_1=1}^{n_2-2}
\end{align*}
\end{proposition}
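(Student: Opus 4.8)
The plan is to reduce the statement to one explicit computation --- that of the single vector ${Z^*}^{N-1}\varphi_{0_N}$ written in the $1$-entangled basis of $E_1$ --- and then to conclude by a short linear-algebra argument. First I would specialize the stratification formula for $V_k$ to $k=1$: there the third family of removed vectors, $\{{Z^*}^{2m-k+1}\varphi_{0_{2m+1}}:1\le m\le(k-1)/2\}$, is empty and $Z^{k-1}\varphi_{0_1}=\varphi_{0_1}$, so that $V_1=E_1\ominus\Span\{\varphi_{0_1},\,{Z^*}^{N-1}\varphi_{0_N}\}$. Next, recall from the identification $W=\ran\abs Z_1^\perp$ together with Proposition~\ref{prop:explicit-rep-absZ} and Corollary~\ref{cor:Ker-Zk-kprime} that $W=\Span\{\varphi_{t_1}:n_2\le t\le n_1-1\}$; hence the entangled basis of $E_1$ splits it orthogonally as $E_1=\C\varphi_{0_1}\oplus U\oplus W$, where $U\ceq\Span\{\varphi_{(b_1)_1}:1\le b_1\le n_2-1\}$ has dimension $n_2-1$.

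The computational heart is to prove that ${Z^*}^{N-1}\varphi_{0_N}=\sum_{b_1=0}^{n_2-1}\mu_{b_1}\varphi_{(b_1)_1}$ with $\mu_{b_1}$ the scalar appearing in the statement. I would establish this by descending one level at a time, exactly in the spirit of the proof of Corollary~\ref{prop:powers-Z} and of \eqref{eq:power-Z-varphi-odd}. Starting from $\varphi_{0_N}=F_N^*\vk{0_N}=n_N^{-1/2}\sum_{c=0}^{n_N-1}\vk{c_N}$, apply $Z^*$ termwise using \eqref{eq:basis-transitions-Z} in the form $Z^*\vk{b_j}=\varphi_{b_{j-1}}$ --- which is valid over the whole relevant summation range, so that no term is annihilated --- to get $Z^*\varphi_{0_N}=n_N^{-1/2}\sum_{c=0}^{n_N-1}\varphi_{c_{N-1}}$; then re-expand each $\varphi_{c_{N-1}}$ in the canonical basis of $E_{N-1}$ via \eqref{def-maximal-k-entangled}, apply $Z^*$ again, and iterate. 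After $N-1$ such passages one is left with the nested sum over $b_2,\dots,b_{N-1}$ of the phases $\zeta_{N-1}^{-b_{N-2}b_{N-1}}\cdots\zeta_2^{-b_1b_2}$ paired with $\varphi_{(b_1)_1}$ and carrying the normalization $\prod_{j=2}^N n_j^{-1/2}$, which is precisely $\mu_{b_1}$. I expect the only genuine care here to be the index bookkeeping: one must verify that at the descent from level $j$ to level $j-1$ the summation index that becomes the new entangled label runs over $0,\dots,n_j-1$, which is exactly the range on which \eqref{eq:basis-transitions-Z} applies, and keep these ranges consistent through all $N-1$ steps. This is where the main obstacle lies.

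Granting that formula, the remainder is routine. Since $\varphi_{0_1}$ is removed from $E_1$ in any case, we may replace ${Z^*}^{N-1}\varphi_{0_N}$ by its component $v\ceq\sum_{b_1=1}^{n_2-1}\mu_{b_1}\varphi_{(b_1)_1}\in U$ orthogonal to $\varphi_{0_1}$, so $V_1=\{\varphi_{0_1},v\}^\perp\cap E_1$. Writing $x=a\varphi_{0_1}+u+w$ with $u\in U$ and $w\in W$, orthogonality to $\varphi_{0_1}$ forces $a=0$, and orthogonality to $v$ (which lies in $U$) is equivalent to $u\in U\cap\{v\}^\perp$; hence $V_1=W\oplus(U\cap\{v\}^\perp)$, which in passing re-proves $W\subset V_1$ as in Remark~\ref{rm:W-in-V}. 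It remains to identify $U\cap\{v\}^\perp$ with $M$: each displayed vector $\tfrac{1}{\cc{\mu}_{b_1}}\varphi_{(b_1)_1}-\tfrac{1}{\cc{\mu}_{b_1+1}}\varphi_{(b_1+1)_1}$ lies in $U$ and is orthogonal to $v$, since the first-slot-antilinear pairing with $v$ telescopes to $1-1=0$; these $n_2-2$ vectors ($1\le b_1\le n_2-2$) are manifestly linearly independent by their staircase pattern, and $\dim(U\cap\{v\}^\perp)=(n_2-1)-1=n_2-2$, so they form a basis of it. This yields $V_1=W\oplus M$, and note that $\dim V_1=n_1-2=(n_1-n_2)+(n_2-2)$ is consistent.

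Finally I would make explicit the one hypothesis tacitly built into the description of $M$: the scalars $\mu_{b_1}$ for $1\le b_1\le n_2-1$ must all be nonzero --- otherwise $\cc{\mu}_{b_1}^{-1}$ is meaningless, and if they all vanished then ${Z^*}^{N-1}\varphi_{0_N}\in\C\varphi_{0_1}$ and $\dim V_1$ would be $n_1-1$ rather than $n_1-2$. Verifying $\mu_{b_1}\neq 0$ reduces to a small arithmetic fact about the partial exponential sums $\sum\zeta_j^{-(\cdot)(\cdot)}$ defining $\mu_{b_1}$, which holds under suitable (e.g.\ coprimality-type) conditions on the dimensions $n_k$; together with the index tracking in the level-by-level descent, it is the only part of the argument that is not mechanical.
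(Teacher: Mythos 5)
Your proposal is correct and follows essentially the same route as the paper: specialize the stratification to get $V_1=E_1\ominus\Span\{\varphi_{0_1},{Z^*}^{N-1}\varphi_{0_N}\}$, compute ${Z^*}^{N-1}\varphi_{0_N}=\sum_{b_1=0}^{n_2-1}\mu_{b_1}\varphi_{(b_1)_1}$ by iterating $Z^*$ level by level, and then identify $M$ via the telescoping orthogonality together with a dimension count (the paper phrases the count as $\dim(V_1\ominus W)=\dim M$ plus $M\subset V_1$, which is the same argument). Your closing remark that the description of $M$ tacitly requires $\mu_{b_1}\neq 0$ is a fair observation about a hypothesis the paper's proof also uses without comment (the partial exponential sums defining $\mu_{b_1}$ can indeed vanish for special dimension choices), but it does not affect the comparison of the two arguments.
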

\begin{proof}
We see at once that $V_1=\oM{P_{1}\H} {\{\varphi_{0_1},{Z^*}^{N-1}\varphi_{0_N}\}}$ and inasmuch as $W\subset V_1$, one has $\dim \big(\oM{V_1}{W}\big)=\dim M$. Thus, we only need to show that $M\subset V_1$. It follows from \eqref{def-maximal-k-entangled}, \eqref{eq:basis-transitions-Z} that 
$Z^*\varphi_{0_N}=n_N^{-1/2}\sum_{b=0}^{n_N-1}\varphi_{b_{N-1}}$, and iterating we get
$Z^{*2}\varphi_{0_N}=(n_N n_{N-1})^{-1/2}\sum_{b_2, b_1 =0}^{n_{N}-1, n_{N-1}-1}\zeta_{N-1}^{-b_1 b_2}\varphi_{(b_1)_{N-2}}$ Then, by iterating $(N-1)$-times,   
\begin{align*}
Z^{*N-1}\varphi_{0_N}=\! \prod_{j=2}^Nn_j^{-\frac12}\!\sum_{b_{N-1},\dots,b_{1}=0}^{n_{N}-1,\dots,n_2-1}\!\!\zeta_{N-1}^{-b_{N-2}b_{N-1}}\dots\zeta_{2}^{-b_{1}b_{2}}\varphi_{(b_1)_1}\!=\!\!\sum_{b_1=0}^{n_2-1}\mu_{b_1}\varphi_{(b_1)_1}\,.
\end{align*}
Now, $
\ip{\varphi_{a_1}/\cc{\mu}_{a}-\varphi_{(a+1)_1}/\cc{\mu}_{a+1}}{\sum_{b_1=0}^{n_2-1}\mu_{b_1}\varphi_{(b_1)_1}}
=0$, for $a=1,\dots,n_2-2$,
i.e., $M\perp \{ Z^{*N-1}\}$. Of course, $M\perp\{\varphi_{0_1}\}$, which finishes the proof. 
\end{proof}

 For any subspace $A_1\subset\oM{V_1}{W}$ and  $n=0,\dots,N-1$, we denote 
\begin{gather}\label{eq:aks-powers-Z}
A_{n+1}\ceq Z^{n}A_1\subset V_{n+1} 
\end{gather} and $P_{A_{n+1}}$ the corresponding orthogonal projection.
We observe from \eqref{eq:ZV-properties02} that $\bigoplus_{j=1}^{N}A_j\subset \oM{V}{W}$.
\begin{lemma}
For $k=1,\dots,N-1$, the projection $P_{A_{k+1}}$ holds 
\begin{gather}
\label{eq:aux-projection}
Z^*P_{A_{k+1}}Z=\abs Z P_{A_k}
\end{gather}
\end{lemma}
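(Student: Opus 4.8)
The plan is to verify the identity $Z^*P_{A_{k+1}}Z = \abs Z P_{A_k}$ by checking that both sides act identically on the orthogonal decomposition $P_k\H = \abs Z_k V \oplus \abs Z_k^\perp V \oplus V_k^\perp$ (using $V_k = \abs Z_k V \oplus \abs Z_k^\perp V$ from Lemma~\ref{lem:properties-trasport-operator}, together with $P_k\H = V_k \oplus V_k^\perp$). Since by definition $A_k \subset V_k$ and, more precisely, $A_k = Z^{k-1}A_1 \subset Z^{k-1}V_1 = V_k$, I first want to locate $A_k$ relative to $\abs Z_k V$. The key observation is that $Z_k$ restricted to $\abs Z_k V$ is an isometric isomorphism onto $V_{k+1}$ (item~\ref{it:isom-V02} of Lemma~\ref{lem:properties-trasport-operator}), and $A_{k+1} = Z A_k = Z_k A_k \subset V_{k+1}$; combining this with the fact that $Z_k$ kills $\abs Z_k^\perp V$, one should be able to argue that $A_k \subset \abs Z_k V$, so that $P_{A_k} = \abs Z_k P_{A_k} = P_{A_k}\abs Z_k$ and $Z_k^*Z_k P_{A_k} = \abs Z_k P_{A_k} = P_{A_k}$.

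The main computation is then the following. On vectors $u \in \abs Z_k V$, write $u = Z_k^* v$ with $v = Z_k u \in V_{k+1}$ (legitimate since $Z_k^* Z_k = \abs Z_k$ acts as the identity on $\abs Z_k V$). Then $Z^* P_{A_{k+1}} Z u = Z_k^* P_{A_{k+1}} Z_k u = Z_k^* P_{A_{k+1}} v$. Because $Z_k^* \colon V_{k+1} \to \abs Z_k V$ and $Z_k \colon \abs Z_k V \to V_{k+1}$ are mutually inverse isometric isomorphisms and $A_{k+1} = Z_k A_k$, the projection $P_{A_{k+1}}$ on $V_{k+1}$ is conjugate via $Z_k$ to the projection $P_{A_k}$ on $\abs Z_k V$; hence $Z_k^* P_{A_{k+1}} Z_k = P_{A_k}$ on $\abs Z_k V$. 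This matches $\abs Z P_{A_k} = \abs Z_k P_{A_k}$, which equals $P_{A_k}$ there. On the complementary pieces $\abs Z_k^\perp V$ and $V_k^\perp$ we have $Z u = Z_k u = 0$ (since $\ker Z_k = \abs Z_k^\perp$-range and $Z$ vanishes off $V$ there — more carefully, $Zu=0$ on $\ker|Z|_k$ and $u\in V_k^\perp$ also maps to $V_{k+1}^\perp$, but in either case $\abs Z_k P_{A_k}u = 0$ too because $A_k \subset \abs Z_k V$ forces $P_{A_k}u$ to lie in $\abs Z_k V$, and then $\abs Z_k$ acts — wait, need care), so the left side is $0$; and the right side $\abs Z_k P_{A_k} u$ is also $0$ since $P_{A_k} u = 0$ for $u \perp A_k$, using $A_k \subset \abs Z_k V \subset V_k$. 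Thus both sides agree on all of $P_k\H$, and since both sides annihilate $(P_k\H)^\perp$ (the left side because $Z P_j = Z_j$ only for the appropriate index and $Z^*$ ranges in $P_k\H$; the right side because $\abs Z_k = \abs Z P_k$), the identity holds on $\H$.

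The step I expect to be the main obstacle is the claim $A_k \subset \abs Z_k V$, i.e., that the subspace $A_k = Z^{k-1}A_1$ avoids $\ker Z_k = \abs Z_k^\perp V \cap \ker(\text{everything})$; this needs to be extracted from the construction of $A_1 \subset \oM{V_1}{W}$ together with the relations $Z^j V_1 = V_{j+1}$ and the behavior of $Z$ on the harmonic subspace $V$. The cleanest route is probably to note that $Z_k$ maps $V_k$ onto $V_{k+1}$ with kernel exactly $\abs Z_k^\perp \cap V_k$, and since $A_{k+1} = Z_k(A_k)$ has $\dim A_{k+1} = \dim A_1 = \dim A_k$ (as $Z$ acts injectively level-to-level on $V$ by item~\ref{it:isom-V02}), no nontrivial part of $A_k$ can lie in $\ker Z_k$; hence $A_k \cap \ker Z_k = \{0\}$, and combined with $A_k \subset V_k = \abs Z_k V \oplus \abs Z_k^\perp V$ and a dimension/isometry count this upgrades to $A_k \subset \abs Z_k V$. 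Once that is secured, everything else is the routine conjugation-of-projections argument sketched above, and I would present it by simply evaluating both operators on an arbitrary $u \in \H$ decomposed along $\abs Z_k V \oplus \abs Z_k^\perp V \oplus V_k^\perp \oplus (P_k\H)^\perp$.
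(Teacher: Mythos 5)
Your overall strategy---reducing \eqref{eq:aux-projection} to a conjugation-of-projections identity for the partial isometry $Z_k$ and checking both sides on an orthogonal decomposition of $P_k\H$---is close in spirit to the paper's proof, which verifies that $Z^*P_{A_{k+1}}Z$ is a projection and computes its range to be $\abs Z A_k$ using $A_{k+1}=ZA_k$ and $A_{k+1}^\perp=ZA_k^\perp$. The problem is the step you yourself flag as the main obstacle, namely $A_k\subset\abs Z_k V$: your proposed justification does not work. First, the injectivity you invoke is not available: item \ref{it:isom-V02} of Lemma~\ref{lem:properties-trasport-operator} makes $Z_k$ isometric (hence injective) only on $\abs Z_k V$, not ``level-to-level on $V$''; on $V_k$ it annihilates $\abs Z_k^\perp V_k$ (for $k=1$ it kills all of $W\subset V_1$). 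So asserting $\dim A_{k+1}=\dim A_k$ from that item presupposes $A_k\cap\ker Z_k=\{0\}$, which is precisely what you are trying to prove---the argument is circular. Second, even granting $A_k\cap\ker Z_k=\{0\}$, no ``dimension/isometry count'' upgrades this to the containment $A_k\subset\abs Z_k V$: a subspace can meet $\ker\abs Z_k$ trivially without lying in $\ran\abs Z_k$ (think of a line in the plane transverse to both coordinate axes). Yet it is the containment (equivalently, commutation of $P_{A_k}$ with $\abs Z_k$) that your computation genuinely needs: without it $\abs Z P_{A_k}$ is not even selfadjoint, so it cannot equal the selfadjoint left-hand side, and your claim that $Z$ maps $V_k^\perp$ into vectors orthogonal to $A_{k+1}$ also rests on the same compatibility. (Likewise, the decomposition $V_k=\abs Z_kV\oplus\abs Z_k^\perp V$ is not literally contained in Lemma~\ref{lem:properties-trasport-operator} and already presumes that $V_k$ reduces $\abs Z_k$.)

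What is missing is an argument tied to the definition \eqref{eq:calV-harmonic} of $V$ and to \eqref{eq:aks-powers-Z}: the claim $A_k=Z^{k-1}A_1\subset\ran\abs Z_k$ is equivalent to $A_1\perp{Z^*}^{k-1}\varphi_{a_k}$ for $a=n_{k+1},\dots,n_k-1$. For $k=1$ this is immediate, since $A_1\subset\oM{V_1}{W}$ and $P_1\H\ominus W=\ran\abs Z_1$ by \eqref{eq:characterization-Wd}; for $k\geq2$ it must be extracted from the vectors ${Z^*}^{n}\varphi_{0_N}$ and ${Z^*}^{s_m}\varphi_{0_{2m+1}}$ removed in forming $V$, and this is where the real work lies. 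The paper's proof is terse on exactly this point (it asserts $A_{k+1}^\perp=ZA_k^\perp$ ``at once'' and then identifies the projection onto $\abs ZA_k$ with $\abs ZP_{A_k}$), but it does not attempt the flawed dimension argument you sketch. Once the containment $A_k\subset\abs Z_kV$ is actually established, the rest of your outline (conjugation on $\abs Z_kV$, vanishing of both sides on the complementary pieces) goes through and essentially reproduces the paper's computation.
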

\begin{proof}
We check at once that  $A_{k+1}=ZA_k$ and $A_{k+1}^\perp=ZA_k^\perp$, where $A_k^\perp=\oM{P_k\H}{A_k}$. Thus, $Z^*A_{k+1}=\abs ZA_k$ and $Z^*A_{k+1}^\perp=\abs ZA_k^\perp$. 
 Moreover, one easily verify that $Z^*P_{A_{k+1}}Z$ is a projection and  
\begin{align*}
Z^*P_{A_{k+1}}Z\H&={Z_k^*}P_{A_{k+1}}Z_kP_k\H
\\&={Z_k^*}P_{A_{k+1}}Z_kA_k\oplus{Z_k^*}P_{A_{k+1}}Z_kA_k^\perp\\
&={Z_k^*}P_{A_{k+1}}A_{k+1}\oplus{Z_k^*}P_{A_{k+1}}A_{k+1}^\perp=\abs ZA_k\,,
\end{align*} 
which yields \eqref{eq:aux-projection}.
\end{proof}

For convenience of notation, from now on set $\beta_0=0$.
\begin{theorem}\label{th:characterisation-invariant-state}
If $\rho$ is an invariant state supported on $\oM{V}{W}$, then 
\begin{gather}\label{eq:characterisation-invariant-state}
\rho= \tr(|Z|_{1}\rho)\sum_{n=0}^{N-1}e^{\sum_{j=0}^{n}\beta_j}Z^n \tau Z^{*n} 
\end{gather} where $\tau=\frac{1}{\tr(|Z|_{1}\rho)}\abs Z_1\rho$ is a state supported on $\oM{V_1}{W}$.
\end{theorem}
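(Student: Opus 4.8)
The plan is to recover $\rho$ from its block in the first level by means of the detailed–balance relations already at our disposal. Since $\oM{V}{W}\subset V\subset\Omega$, the hypotheses of Theorem~\ref{th:invariant-states} and Corollary~\ref{cor:detailed-balance-property} are met, so $\rho$ lies in $\{H\}'$ (hence commutes with every $P_{k}$), commutes with every $\abs Z_{k}$, and
\begin{gather*}
\rho Z_{k}=e^{\beta_{k}}Z_{k}\rho\,,\qquad k=1,\dots,N-1\,.
\end{gather*}
I would then run an induction on the levels.

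First I would put $\rho$ in block form along the grading $\bigoplus_{k=1}^{N}P_{k}\H$. Commutation of $\rho$ with the $P_{k}$, together with $\cc{\ran\rho}\subset V\subset\bigoplus_{k=1}^{N}P_{k}\H$, kills the off-diagonal blocks, so $\rho=\sum_{k=1}^{N}\rho_{k}$ with $\rho_{k}\ceq P_{k}\rho P_{k}=P_{k}\rho\geq0$. Using $V=\bigoplus_{k=1}^{N}V_{k}$ (Lemma~\ref{lem:properties-trasport-operator}) and $W\subset V_{1}$ (Remark~\ref{rm:W-in-V}), one has $\oM{V}{W}=(\oM{V_{1}}{W})\oplus\bigoplus_{k=2}^{N}V_{k}$, so each $\rho_{k}$ is supported on $V_{k}$ and in particular $\ran\rho_{1}\subset(\oM{V}{W})\cap P_{1}\H=\oM{V_{1}}{W}\subset\ran\abs Z_{1}$, the last inclusion because $W=\ran\abs Z_{1}^{\perp}$ and hence $\ran\abs Z_{1}=\oM{P_{1}\H}{W}$. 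Since $\rho$ commutes with $\abs Z_{1}\leq P_{1}$ we get $\abs Z_{1}\rho=\abs Z_{1}\rho_{1}=\rho_{1}$; consequently $\tau=\tr(\abs Z_{1}\rho)^{-1}\abs Z_{1}\rho=\tr(\rho_{1})^{-1}\rho_{1}$ is a bona fide state supported on $\oM{V_{1}}{W}$ (its normalizing constant is positive because, by the recursion below, $\rho_{1}=0$ would force $\rho=0$).

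The heart of the argument is the one-step recursion
\begin{gather*}
\rho_{k+1}=e^{\beta_{k}}Z_{k}\rho_{k}Z_{k}^{*}\,,\qquad k=1,\dots,N-1\,.
\end{gather*}
This is immediate: from $Z_{k}\rho=e^{-\beta_{k}}\rho Z_{k}$ we get $Z_{k}\rho Z_{k}^{*}=e^{-\beta_{k}}\rho\,Z_{k}Z_{k}^{*}=e^{-\beta_{k}}\rho P_{k+1}=e^{-\beta_{k}}\rho_{k+1}$, while $Z_{k}\rho Z_{k}^{*}=(Z_{k}P_{k})\rho(P_{k}Z_{k}^{*})=Z_{k}\rho_{k}Z_{k}^{*}$. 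Unfolding the recursion, with $\beta_{0}=0$, yields for $n=0,\dots,N-1$
\begin{gather*}
\rho_{n+1}=e^{\sum_{j=0}^{n}\beta_{j}}\,Z_{n}\cdots Z_{1}\,\rho_{1}\,Z_{1}^{*}\cdots Z_{n}^{*}\,.
\end{gather*}
Finally I would collapse the telescoping products of level maps into powers of the transport operator: since $ZP_{k}=Z_{k}$ and $\ran Z_{k}\subset P_{k+1}\H$, one has $Z^{n}P_{1}=Z_{n}\cdots Z_{1}$, and taking adjoints $P_{1}Z^{*n}=Z_{1}^{*}\cdots Z_{n}^{*}$; as $\rho_{1}=P_{1}\rho_{1}P_{1}$ this gives $Z^{n}\rho_{1}Z^{*n}=Z_{n}\cdots Z_{1}\,\rho_{1}\,Z_{1}^{*}\cdots Z_{n}^{*}$. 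Summing over the levels,
\begin{gather*}
\rho=\sum_{n=0}^{N-1}\rho_{n+1}=\sum_{n=0}^{N-1}e^{\sum_{j=0}^{n}\beta_{j}}Z^{n}\rho_{1}Z^{*n}=\tr(\abs Z_{1}\rho)\sum_{n=0}^{N-1}e^{\sum_{j=0}^{n}\beta_{j}}Z^{n}\tau Z^{*n}\,,
\end{gather*}
which is \eqref{eq:characterisation-invariant-state}.

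The computations themselves are short; the only genuine care lies in the bookkeeping that makes $Z_{n}\cdots Z_{1}$ acting on the $P_{1}$-block $\rho_{1}$ agree with $Z^{n}$ (i.e.\ that no spurious inter-level contributions survive), and in checking $\ran\rho_{1}\subset\ran\abs Z_{1}$ so that $\tau$ is genuinely a state supported on $\oM{V_{1}}{W}$. I expect this is where the main, albeit modest, obstacle sits.
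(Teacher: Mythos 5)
Your proposal is correct and follows essentially the same route as the paper: you invoke Theorem~\ref{th:invariant-states} to get commutation with the level projections and the relation $Z_k\rho Z_k^*=e^{-\beta_k}\rho P_{k+1}$, use it recursively to express each block $P_{k+1}\rho$ as $e^{\sum_{j\le k}\beta_j}Z^k(\abs Z_1\rho)Z^{*k}$, and sum over levels, exactly as in the paper's proof. Your extra bookkeeping (collapsing $Z_n\cdots Z_1$ into $Z^nP_1$ and checking $\abs Z_1\rho=P_1\rho$ via $W=\ran\abs Z_1^{\perp}$) just makes explicit steps the paper handles with $ZP_1=Z\abs Z_1$ and $W\subset\ker\rho$.
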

\begin{proof}
By virtue of Theorem~\ref{th:invariant-states}, for $k=1,\dots,N-1$, $\rho$ commutes with $\abs Z_k,P_{k+1}$ and satisfies \eqref{eq:equivalency-rho-absZ}, which used recursively yields 
\begin{align}\label{eq:powers-Z-recursively}
\begin{split}
 P_{k+1}\rho&= e^{\beta_k}ZP_k \rho Z^*=\cdots\\
&=e^{\sum_{j=0}^k\beta_j}Z^kP_1\rho  Z^{*k}=e^{\sum_{j=0}^k\beta_j}Z^k\abs Z_1\rho Z^{*k}\,,
\end{split}
\end{align}
since $ZP_1=Z\abs Z_1$. Observe that  $W\subset \ker\rho$. Hence, one infers from \eqref{eq:ker-rho-inOmega} that $\rho=\abs Z_1\rho+\sum_{k=1}^{N-1}P_{k+1}\rho$, whence by \eqref{eq:powers-Z-recursively}one arrives at \eqref{eq:characterisation-invariant-state}. Clearly, $\tau$ is a state since $\rho$ is. 
\end{proof}
\begin{remark}\label{rm:no-supported-absZ} Theorem~\ref{th:characterisation-invariant-state} implies that no invariant states supported on $\oM{V}{\ran P_1}$ exist. Otherwise, in \eqref{eq:characterisation-invariant-state} $\tau=0$, i.e., $\rho=0$, a contradiction. 
\end{remark}
\begin{corollary}
If an invariant state is supported on $\oM{V}{\ran \abs Z_1}$ then so is on $W$.
\end{corollary}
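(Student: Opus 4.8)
The plan is to combine Remark~\ref{rm:no-supported-absZ} with the convex decomposition result of Proposition~\ref{prop:convex-decomposition}. Let $\rho$ be an invariant state supported on $\oM{V}{\ran\abs Z_1}$. Since $\oM{V}{\ran\abs Z_1}\subset\oM{P_1\H}{\ran\abs Z_1}\oplus\bigoplus_{k=2}^N P_k\H$, and $\ran\abs Z_1^\perp = \oM{P_1\H}{\ran\abs Z_1}$, the first thing I would check is that $\rho$ automatically commutes with $\abs Z_1^\perp$: this follows because $\rho$ is supported on $\Omega\supset V$ and therefore, by Theorem~\ref{th:invariant-states}, commutes with every $\abs Z_k$ and $P_{k+1}$, in particular with $\abs Z_1$ and hence with $\abs Z_1^\perp = P_1-\abs Z_1$.

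With commutation in hand, Proposition~\ref{prop:convex-decomposition} applies: there are unique $\lambda\in[0,1]$ and invariant states $\eta,\tau$ commuting with $\abs Z_1^\perp$, supported on $W$ and $W^\perp$ respectively, with $\rho=\lambda\eta+(1-\lambda)\tau$. The key step is to argue that $\tau$ must be supported on $\oM{V}{\ran P_1}$, so that Remark~\ref{rm:no-supported-absZ} forces $(1-\lambda)\tau=0$, leaving $\rho=\eta$ supported on $W$. To see this, note that the support projection of $\tau$ is dominated by that of $\rho$ (as $\rho=\lambda\eta+(1-\lambda)\tau$ with both terms positive), so $\tau$ is supported on $\oM{V}{\ran\abs Z_1}$; combined with $\tau$ being supported on $W^\perp = \oM{\H'}{W}$ where $W=\ran\abs Z_1^\perp$, and since $\oM{V}{\ran\abs Z_1}$ already lies inside $W^\perp$, one gets that $\tau$ is an invariant state supported on $\oM{V}{\ran\abs Z_1}\subset\oM{V}{\ran P_1}$ up to the $\ran\abs Z_1^\perp$-part, which is killed by the support condition. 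Hence $\tau$ is supported on $\oM{V}{\ran P_1}$, and Remark~\ref{rm:no-supported-absZ} gives $\tau=0$ unless $1-\lambda=0$. Either way $\rho=\lambda\eta$ is supported on $W$.

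The main obstacle I anticipate is bookkeeping the subspace inclusions carefully: one must be sure that ``supported on $\oM{V}{\ran\abs Z_1}$'' together with the decomposition really does pin $\tau$ down to a subspace on which Remark~\ref{rm:no-supported-absZ} bites — i.e. that removing $\ran\abs Z_1$ from $V$ leaves nothing in $\ran P_1$ that the state $\tau$ can see, given that $\tau$ is also orthogonal to $W=\ran\abs Z_1^\perp$. Concretely, $V_1=P_1V=\oP{W}{M}$ with $M\subset\ran\abs Z_1$, so $\oM{V_1}{\ran\abs Z_1}\subset W$; thus a state supported on $\oM{V}{\ran\abs Z_1}$ has trivial $P_1$-block once its $W$-part is split off, which is exactly the hypothesis needed to invoke Remark~\ref{rm:no-supported-absZ} on the $W^\perp$-component. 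After that the conclusion is immediate.
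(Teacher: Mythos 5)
Your proof is correct and takes essentially the same route as the paper: commutation with $\abs Z_1^\perp$ via Theorem~\ref{th:invariant-states}, the convex splitting of Proposition~\ref{prop:convex-decomposition}, and Remark~\ref{rm:no-supported-absZ} to eliminate the $W^\perp$-component, with the only difference being that you spell out the support-domination step (that $\tau$ inherits support in $\oM{V}{\ran P_1}$ because $\oM{V_1}{\ran\abs Z_1}\subset W$) which the paper leaves implicit. One cosmetic quibble: since $\tau$ is a state it cannot equal zero, so the correct conclusion of the final step is $1-\lambda=0$, hence $\rho=\eta$ is supported on $W$.
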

\begin{proof}
Since $\rho$ is supported on $\oM{V}{\ran \abs Z_1}$, then from Theorem~\ref{th:invariant-states}, it commutes with $\abs Z_1^\perp$ and  due to Proposition~\ref{prop:convex-decomposition}, it decomposes into a convex combination of two invariant sates $\eta,\tau$, supported on $W$ and $\oM{V}{\ran P_1}$,  respectively. Hence, we infer from Remark~\ref{rm:no-supported-absZ} that $\rho=\eta$.
\end{proof}

Let us proceed to prove the converse of Theorem~\ref{th:characterisation-invariant-state}. Any state  decomposes into a convex combination of pure states, which certainly are rank-one projections \cite[Sec.\,2.1.3]{MR2012610}.
 \begin{theorem}\label{th:invariant-states-from-statesV1}
If $\tau$ is a state supported on $\oM{V_1}{W}$, then 
\begin{gather}\label{eq:characterisation-invariant-state-level1}
c \sum_{n=0}^{N-1}e^{\sum_{j=0}^{n}\beta_j}Z^n\tau Z^{*n}
\end{gather}
is an invariant state supported on $\oM{V}{W}$, where $c$ is a normalization constant. Besides, the range of \eqref{eq:characterisation-invariant-state-level1} is $\oM{V}{W}$, when $\ran \tau=\oM{V_1}{W}$.
\end{theorem}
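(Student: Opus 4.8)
The plan is to verify directly that the operator in \eqref{eq:characterisation-invariant-state-level1}, call it $\rho\ceq c\sum_{n=0}^{N-1}e^{\sum_{j=0}^{n}\beta_j}Z^n\tau Z^{*n}$, is a state supported on $\oM VW$ satisfying the hypotheses of the converse direction of Theorem~\ref{th:invariant-states}, namely that $\rho$ commutes with $\abs Z_k$ and $P_{k+1}$ and obeys the detailed–balance relation \eqref{eq:detail-balance-state} for every $k=1,\dots,N-1$. First I would record that $\tau$ is supported on $\oM{V_1}{W}\subset V_1=P_1\H\cap V$, so each summand $Z^n\tau Z^{*n}$ is supported on $Z^nV_1=V_{n+1}$ by \eqref{eq:ZV-properties02}; since the $V_{n+1}$ are mutually orthogonal (they live in the orthogonal $P_{n+1}\H$), the sum telescopes into a block-diagonal positive operator with $P_{n+1}\rho P_{n+1}=ce^{\sum_{j=0}^{n}\beta_j}Z^n\tau Z^{*n}$ and off-diagonal blocks zero, which immediately gives that $\rho$ commutes with every $P_{k+1}$ and that $\rho\geq 0$ with finite trace, so $c$ can be chosen to normalise it. Supportedness on $\oM VW$ follows because each $V_{n+1}\subset V$ and because $Z^n\tau Z^{*n}$ annihilates $W$: indeed $\tau$ kills $W\subset V_1$ and, using item~\ref{it:isom-V02} of Lemma~\ref{lem:properties-trasport-operator} (or \eqref{eq:aux-projection} with $A_1=\oM{V_1}{W}$), the $Z$-images stay inside $\oplus_{j}A_j\subset\oM VW$.

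Next I would check the commutation with $\abs Z_k$ and the detailed–balance identity. Fixing $k$, the only blocks of $\rho$ that interact with $\abs Z_k$ and $Z_k$ are $P_k\rho P_k$ and $P_{k+1}\rho P_{k+1}$, i.e. the summands $n=k-1$ and $n=k$. Writing $\sigma\ceq Z^{k-1}\tau Z^{*(k-1)}$ (supported on $V_k$), we have $P_k\rho P_k=ce^{\sum_{j=0}^{k-1}\beta_j}\sigma$ and $P_{k+1}\rho P_{k+1}=ce^{\sum_{j=0}^{k}\beta_j}Z_k\sigma Z_k^*$. Commutation $[\rho,\abs Z_k]=0$ reduces to $[\sigma,\abs Z_k]=0$: since $\sigma$ is supported on $V_k$ and $\abs Z_k V_k=\abs Z_kV$ is a reducing subspace for this block (its complement inside $V_k$ being $\abs Z_k^\perp V_k\subset W$, which $\sigma$ kills because $\tau$ kills $W$ and $Z$ preserves the $W$–stratification), $\sigma$ is supported on $\abs Z_kV\subset\ran\abs Z_k$, hence $\abs Z_k\sigma=\sigma=\sigma\abs Z_k$. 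The detailed–balance relation \eqref{eq:detail-balance-state} at level $k$ then reads $Z_k^*\rho Z_k=e^{\beta_k}\rho\abs Z_k$; the left side is $Z_k^*(P_{k+1}\rho P_{k+1})Z_k=ce^{\sum_{j=0}^{k}\beta_j}Z_k^*Z_k\sigma Z_k^*Z_k=ce^{\sum_{j=0}^{k}\beta_j}\abs Z_k\sigma\abs Z_k=ce^{\sum_{j=0}^{k}\beta_j}\sigma$, while the right side is $e^{\beta_k}(ce^{\sum_{j=0}^{k-1}\beta_j}\sigma)\abs Z_k=ce^{\sum_{j=0}^{k}\beta_j}\sigma$, using $\abs Z_k\sigma=\sigma$ and $\beta_k+\sum_{j=0}^{k-1}\beta_j=\sum_{j=0}^{k}\beta_j$. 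So the two agree and Theorem~\ref{th:invariant-states} yields invariance; that $\rho$ lies in $\{H\}'$ is then automatic.

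Finally, for the range statement I would argue that if $\ran\tau=\oM{V_1}{W}$ then, block by block, $\ran(P_{n+1}\rho P_{n+1})=\ran(Z^n\tau Z^{*n})=Z^n\ran\tau=Z^n(\oM{V_1}{W})=A_{n+1}$ using that $Z^n$ restricted to $\oM{V_1}{W}$ is an isometric isomorphism onto $A_{n+1}$ (Lemma~\ref{lem:properties-trasport-operator}\ref{it:isom-V02} applied stepwise, combined with \eqref{eq:aks-powers-Z}), so that $\ran\rho=\bigoplus_{n=0}^{N-1}A_{n+1}=\oplus_j A_j$, and the latter equals $\oM VW$ precisely when $A_1=\oM{V_1}{W}$, by \eqref{eq:ZV-properties02} and the stratification $V=\oplus_{j}V_j$, $W\subset V_1$.

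The main obstacle I expect is the bookkeeping showing that $\sigma=Z^{k-1}\tau Z^{*(k-1)}$ is genuinely supported inside $\abs Z_kV$ (equivalently, that the telescoping respects both the $V$–stratification and the orthogonal splitting $V_k=\abs Z_kV\oplus\abs Z_k^\perp V_k$ with $\abs Z_k^\perp V_k\subset W$), since this is what makes $\abs Z_k$ act as the identity on each block and is the linchpin of both the commutation and the detailed–balance computations; everything else is the orthogonality of the $P_{n+1}\H$ blocks and the exponent arithmetic $\beta_k+\sum_{j=0}^{k-1}\beta_j=\sum_{j=0}^{k}\beta_j$.
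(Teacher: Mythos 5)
Your overall strategy (check the sufficient conditions of Theorem~\ref{th:invariant-states} for the block-diagonal operator, level by level) is the same as the paper's, but the step you yourself single out as the linchpin is a genuine gap. You justify $\abs Z_k\sigma=\sigma$ for $\sigma=Z^{k-1}\tau Z^{*(k-1)}$ by asserting that $V_k$ splits as $\abs Z_kV\oplus\abs Z_k^\perp V_k$ with $\abs Z_k^\perp V_k\subset W$. For $k\geq 2$ this inclusion could only hold if $\abs Z_k^\perp V_k=\{0\}$, since $W\subset P_1\H$ while $\abs Z_k^\perp V_k\subset P_k\H$; and $\abs Z_k^\perp V_k=\{0\}$ would mean $V_k\perp\ker Z_k$, which a dimension count rules out as soon as $n_k-n_{k+1}$ exceeds the small codimension of $V_k$ inside $P_k\H$ (that codimension is the number of excluded vectors $Z^{k-1}\varphi_{0_1}$, ${Z^*}^{N-k}\varphi_{0_N}$, ${Z^*}^{2m-k+1}\varphi_{0_{2m+1}}$). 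Concretely, take $N=3$, $n_1=n_2=5$, $n_3=2$: then $W=\{0\}$, $V_2=P_2\H\ominus\Span\{\vk{0_2},\varphi_{0_2}+\varphi_{1_2}\}$ is three-dimensional, while $\ker Z_2=\Span\{\varphi_{2_2},\varphi_{3_2},\varphi_{4_2}\}$ is also three-dimensional, so $V_2$ contains vectors such as $v=\tfrac12(\varphi_{0_2}-\varphi_{1_2}+\varphi_{2_2}-\varphi_{3_2})$ with nonzero components in both $\ran\abs Z_2$ and $\ker Z_2$. For $\tau=\vk u\vb u$ with $u=Z_1^*v\in \oM{V_1}{W}$, the level-two block of your operator is $\vk v\vb v$, which neither commutes with $\abs Z_2$ nor satisfies $\abs Z_2\sigma=\sigma$, so exactly the commutation and detailed-balance verifications you build on this claim collapse.

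The paper proceeds differently at precisely this point: it first reduces to rank-one $\tau=P_{A_1}$ by convexity, propagates projections via \eqref{eq:Zn-PA1-Znstar}, and then invokes the auxiliary identity \eqref{eq:aux-projection}, $Z^*P_{A_{k+1}}Z=\abs Z P_{A_k}$, to obtain $Z_k^*\hat\tau Z_k=e^{\beta_k}\abs Z_k\hat\tau$; the commutation $[\hat\tau,\abs Z_k]=0$ is then extracted from the self-adjointness of the left-hand side, not from $\abs Z_k$ acting as the identity on the level blocks. So to repair your argument you must either prove the inclusion $Z^{k-1}(\oM{V_1}{W})\subset\ran\abs Z_k$ for all $k$ — which is a nontrivial assertion about $V$, not a formal consequence of the stratification, and is essentially the compatibility that the paper's Lemma packages through the step $ZA_k^\perp=A_{k+1}^\perp$ in its proof (the example above shows this compatibility genuinely needs an argument) — or abandon the claim $\abs Z_k^\perp V_k\subset W$ and route the commutation and the relation \eqref{eq:detail-balance-state} through \eqref{eq:aux-projection} as the paper does. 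As written, the key step of your proposal is unsupported, and in examples with $n_{k+1}<n_k$ at some level $k\geq2$ it is false.
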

\begin{proof}
Due to linearity, we only need to consider the case when $\tau$ is a one-rank projection $P_{A_1}$ of $\H$ onto a one-dimensional subspace $A_1\subset \oM{V_1}{W}$. It is clear that \eqref{eq:characterisation-invariant-state-level1} is a state. So, for $k=1,\dots,N-1$, by means of \eqref{eq:aks-powers-Z} and \eqref{eq:aux-projection}, $P_{A_{k+1}}=Z\abs Z P_{A_k}Z^*=ZP_{A_k}Z^*$, whence recursively one obtains 
\begin{gather}\label{eq:Zn-PA1-Znstar}
Z^{n}P_{A_1}Z^{*n}=P_{A_{n+1}}\,,\quad n=0,\dots,N-1\,.
\end{gather}
Thus, denoting \eqref{eq:characterisation-invariant-state-level1} by $\hat\tau$, one computes from \eqref{eq:aux-projection} and \eqref{eq:Zn-PA1-Znstar} that
\begin{align*}
Z_k^*\hat\tau Z_k= c e^{\sum_{j=0}^{k}\beta_{j}} Z_k^*P_{A_{k+1}} Z_k =e^{\beta_k}\left(c e^{\sum_{j=0}^{k-1}\beta_j}  \abs Z_kP_{A_k}\right) =e^{\beta_k}\abs Z_k\hat\tau 
\end{align*}
wherefrom $\hat\tau$ commutes with $\abs Z_k$. Besides by \eqref{eq:Zn-PA1-Znstar}, $\hat\tau$ also commutes with $P_{k+1}$ and $\ran \hat\tau\subset\oM VW$. Therefore, $\hat\tau$ is invariant since satisfies all conditions in   Theorem~\ref{th:invariant-states}. If $A_1=\oM{V_1}{W}=\abs ZV_1$,  then item \ref{it:isom-V02} of Lemma~\ref{lem:properties-trasport-operator} yields $A_{k+1}=V_{k+1}$ and  \eqref{eq:ZV-properties02}, \eqref{eq:Zn-PA1-Znstar} imply $\ran \hat\tau=\oM{V}{W}$.\end{proof}

States which cannot be represented as a nontrivial convex combination of two different states are called extremal. Analogously, a state $\rho$ is called \emph{invariant-extremal}, if it is invariant and no nontrivial convex combination of different invariant states exists which coincides with $\rho$. 

\begin{lemma}\label{lem:invariant-extremal-vs-extremal}
An invariant state $\rho$ supported on $\oM{V}{W}$, is invariant-extremal if and only if $\tau= \frac{1}{\tr(\abs Z_1\rho)}\abs Z_1\rho$ is extremal.
\end{lemma}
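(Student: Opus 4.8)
The plan is to exploit the bijective correspondence between invariant states $\rho$ supported on $\oM{V}{W}$ and states $\tau$ supported on $\oM{V_1}{W}$ furnished by Theorems~\ref{th:characterisation-invariant-state} and~\ref{th:invariant-states-from-statesV1}. Write $\Phi(\tau)\ceq c\sum_{n=0}^{N-1}e^{\sum_{j=0}^{n}\beta_j}Z^n\tau Z^{*n}$ for the map of Theorem~\ref{th:invariant-states-from-statesV1}, with $c$ the normalization constant; by those two theorems $\Phi$ is a bijection from the states supported on $\oM{V_1}{W}$ onto the invariant states supported on $\oM{V}{W}$, whose inverse is $\rho\mapsto\tfrac{1}{\tr(\abs Z_1\rho)}\abs Z_1\rho$. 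The key structural point is that $\Phi$ is \emph{affine}: since $\tau\mapsto\sum_n e^{\sum_{j\le n}\beta_j}Z^n\tau Z^{*n}$ is linear and trace is preserved up to the constant $c$ (which itself depends affinely-reciprocally on $\tau$), convex combinations are mapped to convex combinations, and one checks that the normalization constant for $\lambda\tau_1+(1-\lambda)\tau_2$ is the harmonic-type average that makes $\Phi(\lambda\tau_1+(1-\lambda)\tau_2)$ a genuine convex combination of $\Phi(\tau_1)$ and $\Phi(\tau_2)$ — more precisely, writing $c(\tau)^{-1}=\sum_n e^{\sum_{j\le n}\beta_j}\tr(Z^n\tau Z^{*n})$, a short computation shows $\Phi$ carries the line segment $[\tau_1,\tau_2]$ onto the segment $[\Phi(\tau_1),\Phi(\tau_2)]$ (with a reparametrized convexity weight). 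Likewise the inverse map $\rho\mapsto\tfrac{1}{\tr(\abs Z_1\rho)}\abs Z_1\rho$ is affine in the same sense, because $\rho\mapsto\abs Z_1\rho$ is linear and $\tr(\abs Z_1\rho)$ is linear in $\rho$.

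Granting this, the argument is then purely formal. Suppose $\tau$ is extremal in the convex set of states supported on $\oM{V_1}{W}$, and suppose $\rho=\Phi(\tau)$ decomposes as $\rho=\mu\rho_1+(1-\mu)\rho_2$ with $\rho_1,\rho_2$ invariant and $\mu\in(0,1)$. Since $\rho$ is supported on $\oM{V}{W}$ and $\rho_1,\rho_2\ge0$, each $\rho_i$ is supported on $\oM{V}{W}$ as well (the support of a convex combination of positive operators contains each summand's support). Applying the inverse map and using its affineness gives $\tau=\mu'\,\tau_1+(1-\mu')\,\tau_2$ for suitable $\tau_i$ supported on $\oM{V_1}{W}$ and $\mu'\in(0,1)$; extremality of $\tau$ forces $\tau_1=\tau_2=\tau$, hence $\rho_1=\rho_2=\Phi(\tau)=\rho$ by injectivity of $\Phi$. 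Thus $\rho$ is invariant-extremal. Conversely, if $\rho$ is invariant-extremal and $\tau=\mu\tau_1+(1-\mu)\tau_2$ with $\mu\in(0,1)$ and $\tau_i$ states supported on $\oM{V_1}{W}$, apply $\Phi$ and its affineness to write $\rho$ as a nontrivial convex combination of the invariant states $\Phi(\tau_i)$; invariant-extremality forces $\Phi(\tau_1)=\Phi(\tau_2)$, and injectivity of the inverse map gives $\tau_1=\tau_2$, so $\tau$ is extremal.

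The one point requiring care — and the likely main obstacle — is verifying that the convexity weights transform consistently, i.e.\ that $\Phi$ and its inverse really do map segments onto segments with weights in $(0,1)$ rather than merely mapping the convex hull into the convex hull; this is where the dependence of the normalization constant $c(\tau)$ on $\tau$ must be tracked explicitly. The clean way to organize it is to drop normalization entirely: work with the \emph{unnormalized} cone maps $\tilde\Phi(\sigma)=\sum_{n=0}^{N-1}e^{\sum_{j\le n}\beta_j}Z^n\sigma Z^{*n}$ and $\sigma\mapsto\abs Z_1\sigma$, which are genuinely linear and mutually inverse (up to positive scalars) on the relevant cones by \eqref{eq:Zn-PA1-Znstar} and \eqref{eq:aux-projection}; extremality of a state is equivalent to extremality of the ray it generates in the cone, and linear cone-isomorphisms preserve extreme rays. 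I would also remark that one must check $\tr(\abs Z_1\rho)>0$ whenever $\rho$ is a nonzero invariant state supported on $\oM{V}{W}$, which is exactly the content of Remark~\ref{rm:no-supported-absZ}, so the inverse map is everywhere defined and the ray correspondence is genuinely bijective.
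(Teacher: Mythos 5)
Your proposal is correct and follows essentially the same route as the paper: both rely on the bijective correspondence $\tau\mapsto c\sum_n e^{\sum_{j\le n}\beta_j}Z^n\tau Z^{*n}$ and $\rho\mapsto\frac{1}{\tr(\abs Z_1\rho)}\abs Z_1\rho$ from Theorems~\ref{th:characterisation-invariant-state} and~\ref{th:invariant-states-from-statesV1}, transferring convex decompositions in both directions by tracking the renormalization weights (the paper does this by explicit contrapositive computation of the coefficients, you by the equivalent affine/cone-isomorphism packaging). Your explicit appeal to injectivity and to Remark~\ref{rm:no-supported-absZ} for $\tr(\abs Z_1\rho_i)>0$ is a slightly more careful rendering of the same argument, not a different one.
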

\begin{proof}
The proof carries out by contraposition. If $\tau$ is no extremal then $\tau=\lambda_1\tau_1+\lambda_2\tau_2$, where $\tau_1,\tau_2$ are states and  $\lambda_1,\lambda_2>0$, with $\lambda_1+\lambda_2=1$. Besides, by virtue of Theorem~\ref{th:characterisation-invariant-state},
\begin{align}\label{eq:decomposition-rho-invariant-states}
\rho=\sum_{l=1}^2\tr (\abs Z_1\rho)\beta_l\lambda_l\left(\frac{1}{\beta_l}
\sum_{n=0}^{N-1}e^{\sum_{j=0}^{n}\beta_j}Z^n\tau_l Z^{*n}\right)\,,
\end{align}
with $\beta_l=\tr(\sum_{n=0}^{N-1}e^{\sum_{j=0}^{n}\beta_j}Z^n\tau_l Z^{*n})$. Note, $\sum_{l=1}^2\tr (\abs Z_1\rho)\beta_l\lambda_l=1$. Hence, Theorem~\ref{th:invariant-states-from-statesV1} and \eqref{eq:decomposition-rho-invariant-states} imply that $\rho$ is decomposed into a convex combination of two invariant states, i.e., it is no 
invariant-extremal. 
Conversely, suppose that $\rho$ is a nontrivial convex combination $\lambda_1\rho_1+\lambda_2\rho_2$, of two invariant states $\rho_1,\rho_2$. Then, 
\begin{align*}
\tau=\frac{\lambda_1\tr(\abs Z_1\rho_1)}{\tr(\abs Z_1\rho)}\left(\frac{\abs Z_1\rho_1}{\tr(\abs Z_1\rho_1)}\right)+\frac{\lambda_2\tr(\abs Z_1\rho_2)}{\tr(\abs Z_1\rho)}\left(\frac{\abs Z_1\rho_2}{\tr(\abs Z_1\rho_2)}\right)\,,
\end{align*}
which is a convex combination of two states. Therefore, $\tau$ is no extremal.\end{proof}

The following assertion is a simple consequence of Theorem~\ref{th:invariant-states-from-statesV1}, Lemma~\ref{lem:invariant-extremal-vs-extremal} and the fact that pure states are extremal.
\begin{corollary}
If $u$ is a unit vector in $\oM{V_1}{W}$,  then
\begin{gather*}
c \left(\sum_{n=0}^{N-1}e^{\sum_{j=0}^{n}\beta_j}Z^n\vk u\vb u Z^{*n}\right)
\end{gather*}
is an invariant-extremal state supported on $\oM{V}{W}$, where $c$ is a normalization constant.
\end{corollary}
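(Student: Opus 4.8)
The plan is to combine the three results immediately preceding the statement. First I would take $u\in\oM{V_1}{W}$ a unit vector and form the rank-one projection $\tau\ceq\vk u\vb u$. This is a state, and its range is the one-dimensional subspace $A_1\ceq\C u\subset\oM{V_1}{W}$, so in particular $\tau$ is a state supported on $\oM{V_1}{W}$. Applying Theorem~\ref{th:invariant-states-from-statesV1} with this $\tau$ immediately gives that
\[
\rho\ceq c\left(\sum_{n=0}^{N-1}e^{\sum_{j=0}^{n}\beta_j}Z^n\vk u\vb u Z^{*n}\right)
\]
is an invariant state supported on $\oM{V}{W}$, with $c$ the stated normalization constant. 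So the only thing left to verify is that this particular invariant state is invariant-extremal.

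For that I would invoke Lemma~\ref{lem:invariant-extremal-vs-extremal}, which says that an invariant state $\rho$ supported on $\oM VW$ is invariant-extremal if and only if the associated state $\tau'\ceq\frac{1}{\tr(\abs Z_1\rho)}\abs Z_1\rho$ is extremal. The key computation is to identify $\abs Z_1\rho$ with (a positive multiple of) $\vk u\vb u$. Indeed, in the sum defining $\rho$ only the $n=0$ term survives under left multiplication by $\abs Z_1$: for $n\geq 1$ we have $Z^n\vk u\vb u Z^{*n}=P_{A_{n+1}}$ by \eqref{eq:Zn-PA1-Znstar} (with $A_{n+1}\subset V_{n+1}$, i.e. $\ran P_{A_{n+1}}\subset P_{n+1}\H$), and since $\abs Z_1=\abs Z_1 P_1$ kills everything orthogonal to $P_1\H$, those terms vanish; the $n=0$ term is $\vk u\vb u$ with $u\in V_1=P_1V\subset P_1\H$ fixed by $\abs Z_1$ (recall $W\subset V_1$ and $u\perp W$, so $u\in\ran\abs Z_1$, whence $\abs Z_1\vk u\vb u=\vk u\vb u$). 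Thus $\abs Z_1\rho=c\,\vk u\vb u$, and after normalization $\tau'=\vk u\vb u$, a pure state. Pure states are extremal by \cite[Sec.\,2.1.3]{MR2012610} (they are rank-one projections and admit no nontrivial convex decomposition), so Lemma~\ref{lem:invariant-extremal-vs-extremal} yields that $\rho$ is invariant-extremal.

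The only mild subtlety — and the place I would be most careful — is the claim $\abs Z_1\vk u\vb u=\vk u\vb u$, i.e. that $u$ lies in $\ran\abs Z_1$ when $u\in\oM{V_1}{W}$. This follows from the identification $W=\ran\abs Z_1^\perp$ (the proposition establishing \eqref{eq:characterization-Wd}): since $P_1\H=\ran\abs Z_1\oplus\ran\abs Z_1^\perp=\ran\abs Z_1\oplus W$ and $V_1\subset P_1\H$ contains $W$ with $V_1=\oP{W}{(\oM{V_1}{W})}$, the orthogonal complement $\oM{V_1}{W}$ must sit inside $\ran\abs Z_1$. One should also note that $\abs Z_1$ is a projection (by Theorem~\ref{properties-Zk-kprime}\ref{eq:p-zk-k02}), so "$\abs Z_1 u=u$" is equivalent to "$u\in\ran\abs Z_1$". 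Everything else is a direct citation of the preceding theorem and lemma, so no further obstacle arises.
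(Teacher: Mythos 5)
Your proposal is correct and follows essentially the same route as the paper, which states this corollary as a direct consequence of Theorem~\ref{th:invariant-states-from-statesV1}, Lemma~\ref{lem:invariant-extremal-vs-extremal} and the extremality of pure states. Your additional verification that $\abs Z_1\rho=c\,\vk u\vb u$ (using $W=\ran\abs Z_1^\perp$ and the support of the terms with $n\geq 1$) correctly fills in the detail the paper leaves implicit.
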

\begin{remark}
Clearly, a pure invariant state is invariant-extremal. Particularly, any pure state supported on $W$, is invariant-extremal.\end{remark}
\begin{theorem}
If $\rho$ is an invariant-extremal state supported on $V$, then one of the following conditions is true:
\begin{enumerate}
\item\label{item:extremal-01} There exists a unit vector $u\in W$, such that  $\rho=\vk u\vb u$.
\item\label{item:extremal-02} There exists a vector $u\in \oM{V_1}{W}$ such that  $\dS\rho=\sum_{n=0}^{N-1}e^{\sum_{j=0}^{n}\beta_j}Z^n\vk u\vb u Z^{*n}$.
\end{enumerate}
\end{theorem}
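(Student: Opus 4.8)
The plan is to feed the given state into the convex decomposition of Proposition~\ref{prop:convex-decomposition} and then isolate the two extremal possibilities it allows. Since $V\subset\Omega$ and $\rho$ is invariant and supported on $V$, Theorem~\ref{th:invariant-states} guarantees that $\rho$ commutes with every $\abs Z_k$ and $P_{k+1}$; in particular $\rho$ commutes with $\abs Z_1^{\perp}=P_1-\abs Z_1$. Hence Proposition~\ref{prop:convex-decomposition} applies and produces a unique $\lambda\in[0,1]$ together with invariant states $\eta,\tau'$, supported on $W$ and $W^{\perp}$ respectively, such that $\rho=\lambda\eta+(1-\lambda)\tau'$.

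Next I would use invariant-extremality to force $\lambda\in\{0,1\}$. If $0<\lambda<1$, then $\eta$ (invariant by Remark~\ref{rm:harmonic-invariant-W}) and $\tau'$ (invariant by Proposition~\ref{prop:convex-decomposition}) are genuine invariant states, and they are distinct because $\tr(\eta P_W)=1\neq0=\tr(\tau' P_W)$; this would exhibit $\rho$ as a nontrivial convex combination of two different invariant states, contradicting invariant-extremality. (When $W=\{0\}$ or $\oM{V}{W}=\{0\}$ the value of $\lambda$ is forced for trivial reasons, so no case is lost.)

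In the case $\lambda=1$, $\rho$ is supported on $W$. Here I would invoke the triviality of the dynamics on $W$: every state supported on $W$ is invariant (Remark~\ref{rm:harmonic-invariant-W}), and if $\rho=\lambda_1\rho_1+\lambda_2\rho_2$ with $\lambda_i>0$, then $\rho\geq\lambda_i\rho_i$ forces each $\rho_i$ to be supported on $W$ as well, hence invariant. Thus invariant-extremality of $\rho$ collapses to extremality among all states, i.e.\ purity, so $\rho=\vk u\vb u$ for a unit vector $u\in W$, which is conclusion~\ref{item:extremal-01}. In the case $\lambda=0$, $\rho$ is supported on $W^{\perp}\cap V=\oM{V}{W}$, so Theorem~\ref{th:characterisation-invariant-state} gives $\rho=\tr(\abs Z_1\rho)\sum_{n=0}^{N-1}e^{\sum_{j=0}^{n}\beta_j}Z^{n}\tau Z^{*n}$ with $\tau=\tr(\abs Z_1\rho)^{-1}\abs Z_1\rho$ a state supported on $\oM{V_1}{W}$, while Lemma~\ref{lem:invariant-extremal-vs-extremal} says this $\tau$ is extremal, hence a pure state $\tau=\vk v\vb v$ with $v\in\oM{V_1}{W}$ a unit vector. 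Setting $u\ceq\tr(\abs Z_1\rho)^{1/2}\,v\in\oM{V_1}{W}$ and absorbing the scalar into the ket yields $\rho=\sum_{n=0}^{N-1}e^{\sum_{j=0}^{n}\beta_j}Z^{n}\vk u\vb u Z^{*n}$, which is conclusion~\ref{item:extremal-02}.

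The argument is essentially an assembly of results already in hand, so there is no deep obstacle; the only point requiring a little care is the $\lambda=1$ reduction, where one must justify that invariant-extremality may be replaced by purity because the dynamics restricted to $W$ is trivial, and the degenerate sub\-space cases should be dispatched with a one-line remark.
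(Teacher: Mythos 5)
Your proof is correct and follows essentially the same route as the paper: use Theorem~\ref{th:invariant-states} to get commutation with $\abs Z_1^\perp$, invoke Proposition~\ref{prop:convex-decomposition} and invariant-extremality to force the support into either $W$ or $\oM{V}{W}$, handle the $W$ case by reducing invariant-extremality to purity (since every state supported on $W$ is invariant), and handle the $\oM{V}{W}$ case via Theorem~\ref{th:characterisation-invariant-state} together with Lemma~\ref{lem:invariant-extremal-vs-extremal} and absorption of the scalar $\sqrt{\tr(\abs Z_1\rho)}$ into the vector. Your write-up is in fact slightly more careful than the paper's (explicit treatment of $\lambda\in\{0,1\}$ and of the degenerate subspace cases, and correctly placing the unit vector in $\oM{V_1}{W}$ rather than $\oM{V}{W}$), but the substance is identical.
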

\begin{proof}
From Theorem~\ref{th:invariant-states}, $\rho$ commutes with $\abs Z_1^\perp$ which by Proposition~\ref{prop:convex-decomposition}, must be supported on either $W$ or $\oM{V}{W}$, since otherwise it is no invariant-extremal. Thus,  if $\ran\rho\subset W$, then $\rho$ is a convex combination of pure states supported on $W$, which are  invariant. Hence, $\rho$ is pure, since it is invariant-extremal and it satisfies item \ref{item:extremal-01}.

On the other hand, if $\ran\rho\subset \oM{V}{W}$, then by Theorem~\ref{th:characterisation-invariant-state}, the state $\rho$ holds \eqref{eq:characterisation-invariant-state} and by Lemma~\ref{lem:invariant-extremal-vs-extremal}, $\frac{1}{\tr(\abs Z_1\rho)}\abs Z_1\rho=\vk {\tilde u}\vb {\tilde u}$, where $\tilde u\in\oM{V}{W}$ is a unit vector. Hence, denoting $u=\sqrt{\tr(\abs Z_1\rho)}\,\tilde u$, one gets item \ref{item:extremal-02}.
\end{proof}

Now, we work with states supported on $\Omega^\perp=\Span\{\vk-,\vk+, \varphi_{0_1},\varphi_{0_N}\}$. It is a simple matter to compute that if $\rho$ is a state supported on $\Omega^\perp$, then
\begin{align*}
\cA L(\rho)=&{}\eta_-[P_-,\rho]+\eta_{+}P_+\rho+\cc{\eta}_{+}\rho P_++\eta_{0_1}P_{\varphi_{0_1}}\rho+\cc{\eta}_{0_1}\rho P_{\varphi_{0_1}}+\eta_{0_N}P_{\varphi_{0_N}}\rho\\&+\cc{\eta}_{0_N}\rho P_{\varphi_{0_N}}
+n_1\Gamma_{-,\omega_+}\vk{\varphi_{0_1}}\vb{+}\rho\vk{+}\vb{\varphi_{0_1}}+n_1\Gamma_{+,\omega_+}\vk{+}\vb{\varphi_{0_1}}\rho\vk{\varphi_{0_1}}\vb{+}\\
&+\Gamma_{-,\omega_-}\vk{-}\vb{\varphi_{0_N}}\rho\vk{\varphi_{0_N}}\vb{-}+\Gamma_{-,\omega_1}Z_1\rho Z_1^*+\Gamma_{+,\omega_{N-1}}Z_{N-1}^*\rho Z_{N-1}\,, 
\end{align*}
where $P_a$ is the orthogonal of $\H$ projection onto $a\in\{\vk-,\vk+, \varphi_{0_1},\varphi_{0_N}\}$ and 
\begin{align*}
\eta_{-}&=i\gamma_{+,\omega_-}&
\eta_{0_1}&=i(n_1\gamma_{+,\omega_+}-\gamma_{-,\omega_1})-\frac{n_1\Gamma_{+,\omega_+}+\Gamma_{-,\omega_1}}2\\
\eta_{+}&=-\frac{n_1}{2}(i2\gamma_{-,\omega_+}+\Gamma_{-,\omega_+})&
\eta_{0_N}&=i(\gamma_{+,\omega_{N-1}}-\gamma_{-,\omega_-})-\frac{\Gamma_{+,\omega_{N-1}}+\Gamma_{-,\omega_-}}2
\end{align*}
\begin{remark} It is easily seen  that $\cA L(P_-)=0$, i.e., $P_-$ is an invariant state.
\end{remark}
\begin{theorem}\label{omega-perp}
If $\rho$ is an invariant state supported on $\Omega^\perp$ then $\rho=P_-$.
\end{theorem}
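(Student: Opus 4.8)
The plan is to exploit the explicit formula for $\cA L(\rho)$ valid on states supported on $\Omega^\perp$ and to track each of its summands through the orthogonal decomposition $\H=E_-\oplus E_+\oplus\bigoplus_{k=1}^{N}E_k$. Write $c_1\ceq\ip{\varphi_{0_1}}{\rho\varphi_{0_1}}\geq0$ and $c_N\ceq\ip{\varphi_{0_N}}{\rho\varphi_{0_N}}\geq0$. The first observation is that every term of $\cA L(\rho)$ other than the two ``channel'' contributions $\Gamma_{-,\omega_1}Z_1\rho Z_1^*$ and $\Gamma_{+,\omega_{N-1}}Z_{N-1}^*\rho Z_{N-1}$ is again supported on $\Span\{\vk-,\vk+,\varphi_{0_1},\varphi_{0_N}\}$: the commutator and the $P_\pm$-, $P_{\varphi_{0_j}}$-terms visibly are, and the remaining rank-one terms are scalar multiples of $P_{\varphi_{0_1}}$, $P_+$, $P_-$. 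By contrast, using $\Omega\subset\ker\rho$ together with $Z_1\varphi_{0_1}=\vk{0_2}$ (Proposition~\ref{cor:Ker-Z}, item~\ref{it.core-Z03}) one computes $Z_1\rho Z_1^*=c_1\vk{0_2}\vb{0_2}$, supported on $E_2$; likewise $Z_{N-1}^*\rho Z_{N-1}=c_N\vk\psi\vb\psi$, where $\psi\ceq Z_{N-1}^*\varphi_{0_N}$ is a unit vector in $E_{N-1}$.

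Next I would compress the identity $\cA L(\rho)=0$ onto $E_2$ and onto $E_{N-1}$. Since both channel operators are positive semidefinite and orthogonal to the common support of all the other summands (and, in the degenerate configurations in which $E_2$, $E_{N-1}$ overlap with one another or with $\Omega^\perp$, one invokes instead the fact that a sum of positive semidefinite operators vanishes only if each summand does), this forces $c_1=c_N=0$. Because $\rho\geq0$, the vanishing of $c_1$ and $c_N$ yields $\rho\varphi_{0_1}=\rho\varphi_{0_N}=0$; hence $\rho$ is supported on $\Span\{\vk-,\vk+\}$.

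On this two-dimensional support all $\varphi_{0_1}$-, $\varphi_{0_N}$- and channel terms drop out, leaving $\cA L(\rho)=\eta_-[P_-,\rho]+\eta_+P_+\rho+\cc{\eta}_+\rho P_++n_1\Gamma_{-,\omega_+}\ip{+}{\rho+}P_{\varphi_{0_1}}$. Taking the diagonal matrix element in the direction $\varphi_{0_1}$ annihilates the first three terms (they live in $\Span\{\vk-,\vk+\}$, which is orthogonal to $\varphi_{0_1}$), so $n_1\Gamma_{-,\omega_+}\ip{+}{\rho+}=0$; since $n_1\Gamma_{-,\omega_+}>0$ this gives $\ip{+}{\rho+}=0$ and, by positivity again, $\rho\vk+=0$. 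Thus $\rho$ is supported on $\Span\{\vk-\}$ and, being a state, $\rho=P_-$.

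The computations involved are routine; the step that needs care is the bookkeeping of which subspace each summand of $\cA L(\rho)$ occupies, together with the repeated passage from a vanishing diagonal entry of $\rho$ to a vector in $\ker\rho$, which relies on $\rho\geq0$ — and, as noted, the verification of the few small-$N$ (or $\dim E_k=1$) configurations in which the auxiliary subspaces $E_2$ and $E_{N-1}$ fail to be disjoint from $\Omega^\perp$ or from each other, where the positive-semidefiniteness argument replaces the plain orthogonality argument.
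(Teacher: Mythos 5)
Your argument is correct for $N\geq 3$ and takes a genuinely different route from the paper. The paper never leaves $\Omega^\perp$: it sandwiches $\cA L(\rho)=0$ between the rank-one projections $P_a$, $a\in\{\vk-,\vk+,\varphi_{0_1},\varphi_{0_N}\}$, obtaining identities of the form $(\eta_a+\cc{\eta}_b)P_a\rho P_b=0$ that first kill $\rho P_{\varphi_{0_N}}$, then give $[\rho,P_+]=0$, and finally solves the small linear system \eqref{eq:mas-mas} to kill $\rho P_+$ and $\rho P_{\varphi_{0_1}}$. You instead compress onto the levels $E_2$ and $E_{N-1}$, which for $N\geq3$ are orthogonal to $\Omega^\perp$, so that only the completely positive channel terms $\Gamma_{-,\omega_1}Z_1\rho Z_1^*=c_1\vk{0_2}\vb{0_2}$ and $\Gamma_{+,\omega_{N-1}}Z_{N-1}^*\rho Z_{N-1}=c_N\vk{\psi}\vb{\psi}$ survive; positivity then forces $c_1=c_N=0$ at once (and for $N=3$, where $E_2=E_{N-1}$, your sum-of-PSD remark does close it), after which your endgame on $\Span\{\vk-,\vk+\}$ via the single matrix element $\ip{\varphi_{0_1}}{\cA L(\rho)\varphi_{0_1}}$ replaces \eqref{eq:mas-mas}. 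This trades the paper's bookkeeping of the coefficients $\eta_a$ (and the check that $\eta_a+\cc{\eta}_b\neq0$) for positivity of the Kraus terms, which is a real simplification.

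The one place the sketch does not hold up as written is the degenerate case $N=2$, where $E_2=E_N\ni\varphi_{0_N}$ and $E_{N-1}=E_1\ni\varphi_{0_1}$. There your fallback is not applicable, because the compression onto $E_2$ is \emph{not} a sum of positive semidefinite operators: the dissipative part $\eta_{0_N}P_{\varphi_{0_N}}\rho+\cc{\eta}_{0_N}\rho P_{\varphi_{0_N}}$ also survives and contributes $2\,\mathrm{Re}(\eta_{0_N})\,c_N P_{\varphi_{0_N}}=-(\Gamma_{+,\omega_1}+\Gamma_{-,\omega_-})c_N P_{\varphi_{0_N}}$, which is negative semidefinite. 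The compressed equation becomes $\Gamma_{-,\omega_1}c_1\vk{0_2}\vb{0_2}=(\Gamma_{+,\omega_1}+\Gamma_{-,\omega_-})c_N P_{\varphi_{0_2}}$. If $n_2\geq2$ the two rank-one operators have different ranges, so both coefficients must vanish and your conclusion survives with a range-comparison argument in place of the PSD one; but if $n_2=1$ (so $\varphi_{0_2}=\vk{0_2}$) this yields only a linear relation between $c_1$ and $c_N$, and additional matrix elements (e.g. the paper's off-diagonal compressions) are needed to finish. In fairness, the identities displayed in the paper's own proof also tacitly require $N\geq3$ (for $N=2$ the channel terms contaminate $P_{\varphi_{0_N}}\cA L(\rho)P_{\varphi_{0_N}}$), so this is a boundary case rather than a defect of your main argument; for $N\geq3$ your proof is complete.
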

\begin{proof}
It suffices to prove that $\rho P_a=0$, for $a\in\{\vk+,\varphi_{0_1},\varphi_{0_N}\}$. Note that $0=P_a\cA L(\rho)P_{\varphi_{0_N}}=(\eta_{a}+\cc{\eta}_{0_N})P_a\rho P_{\varphi_{0_N}}$, which implies $P_a\rho P_{\varphi_{0_N}}=0$, for $a\in\{\vk-,\vk+,\varphi_{0_1},\varphi_{0_N}\}$. Then, 
\begin{eqnarray}\label{eq:rho-varphiN}
\rho P_{\varphi_{0_N}}&=\dS\sum_{\substack{a\in \\ \{\vk-,\vk+,\varphi_{0_1}, \varphi_{0_N}\}}}P_a\rho P_{\varphi_{0_N}}=&0\,.
\end{eqnarray}
Beside, for $a\in\{\vk-,\varphi_{0_1}\}$, one has  $0=P_a\cA L(\rho)P_{+}=(\eta_{a}+\cc{\eta}_{+})P_a\rho P_{+}$, which yields $P_a\rho P_{+}=0$ and 
\begin{eqnarray*}
[\rho,P_+]&=\dS\sum_{\substack{a\in \\ \{\vk-,\vk+,\varphi_{0_1}\}}}P_a(\rho P_+-P_+\rho)\sum_{\substack{a\in \\ \{\vk-,\vk+,\varphi_{0_1}\}}}P_a=&0\,,
\end{eqnarray*}
whence it follows that $\rho$ commutes with $P_+$. Moreover,
\begin{align}\label{eq:mas-mas}
\begin{split}
0&=P_+\cA L(\rho)P_{+}=-n_1\Gamma_{-,\omega_+}\rho P_++n_1\Gamma_{+,\omega_+}\vk{+}\vb{\varphi_{0_1}}\rho\vk{\varphi_{0_1}}\vb{+}\,;\\[3mm]
0&=\vk{+}\vb{\varphi_{0_1}}\cA L(\rho)\vk{\varphi_{0_1}}\vb{+}\\&=n_1\Gamma_{-,\omega_+}\rho P_+-(n_1\Gamma_{+,\omega_+}+\Gamma_{-,\omega_1})\vk{+}\vb{\varphi_{0_1}}\rho\vk{\varphi_{0_1}}\vb{+}\,.
\end{split}
\end{align}
Thus, \eqref{eq:mas-mas} yields $\rho P_+=0$ and $\vk{+}\vb{\varphi_{0_1}}\rho\vk{\varphi_{0_1}}\vb{+}=0$. This also produces $P_{\varphi_{0_1}}\rho P_{\varphi_{0_1}}=0$. Note that $0=P_-\cA L(\rho)P_{\varphi_{0_1}}$ implies $P_-\rho P_{\varphi_{0_1}}=0$. Hence, analogously to \eqref{eq:rho-varphiN}, we conclude that $\rho P_{\varphi_{0_1}}=0$.
\end{proof}
\begin{remark}
On account of Proposition~\ref{prop:omega-vs-V}, Theorem~\ref{th:invariant-states-from-statesV1} and Remark~\ref{rm:W-in-V}, the biggest support of an invariant state supported on $\Omega$ is $V$. On the other hand according to Theorem~\ref{omega-perp} the only invariant state supported on $\Omega^\bot$ is $P_-$. 
\end{remark}

In view that it remains unknown whether invariant states with nontrivial support on both $\Omega$ and $\Omega^\perp$ exist, we finalize with the following conjecture.
Recall that the fast recurrent subspace is the biggest support of invariant states (cf. \cite{MR4107240}).\\[3mm]
\begin{conjecture}
The fast recurrent subspace of $\cA L$  is $V\oplus\C\vk{-}$.
\end{conjecture}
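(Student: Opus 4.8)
The plan is to prove the two inclusions $V\oplus\C\vk-\subseteq\cA R$ and $\cA R\subseteq V\oplus\C\vk-$, where $\cA R$ is the fast recurrent subspace of $\cA L$, i.e. the largest support of an invariant state, equivalently the closed span of the ranges of all invariant states (cf. \cite{MR4107240}). The first inclusion is the easy one: I would exhibit an invariant state supported on all of $V\oplus\C\vk-$. By Remark~\ref{rm:harmonic-invariant-W} the normalised projection $(\tr P_W)^{-1}P_W$ is invariant and supported on $W$; by Theorem~\ref{th:invariant-states-from-statesV1} applied to a state $\tau$ with $\ran\tau=\oM{V_1}{W}$ there is an invariant state $\hat\tau$ with $\ran\hat\tau=\oM{V}{W}$; and $P_-$ is an invariant state (supported on $\C\vk-$). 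A strictly convex combination of these three, discarding any that vanish (e.g. $\hat\tau$ when $\oM{V_1}{W}=\{0\}$, in which case $V=W$ by Proposition~\ref{cor:Ker-Z}), is again invariant and has range $W\oplus(\oM{V}{W})\oplus\C\vk-=V\oplus\C\vk-$; hence $\cA R\supseteq V\oplus\C\vk-$.

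For the reverse inclusion set $K\ceq V\oplus\C\vk-$ and $R'\ceq K^\perp$. Since $P_V$ is harmonic (as proved above) and $P_-$ is subharmonic (the support projection of the invariant state $P_-$), the projection $P_K=P_V+P_-$ is subharmonic, so $K$ is an $\cA L$-enclosure. The restriction of $(\cA T_t^*)$ to the enclosure $\cA R$ possesses a faithful invariant state, so on $\cA R$ every subharmonic projection is harmonic; applying this to $P_K$ (legitimate, as $K\subseteq\cA R$) shows $P_K$ is harmonic for the restricted semigroup, whence so is $P_{\cA R}-P_K$, and therefore $\cA R\ominus K=\ran(P_{\cA R}-P_K)$ is an $\cA L$-enclosure contained in $K^\perp=R'$. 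Thus $\cA R=K=V\oplus\C\vk-$ would follow once one shows that $R'$ contains no nonzero $\cA L$-enclosure.

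To establish the latter, note that $R'$ is spanned by the vectors $\vk+$, $Z^n\varphi_{0_1}$, ${Z^*}^n\varphi_{0_N}$ and ${Z^*}^{s_m}\varphi_{0_{2m+1}}$ of \eqref{eq:calV-harmonic} (all orthogonal to $\vk-$, the only vector of $V^\perp$ removed on passing to $R'$), and for each of them I would produce a word in the Kraus operators \eqref{eq:Gamma-Krauss} carrying it to a nonzero multiple of the absorbing vector $\vk-$. Using $ZZ^*=\bigoplus_{k\geq2}P_k$ from \eqref{eq:Z-Zstar} one telescopes $Z^nZ^{*n}\varphi_{0_N}=\varphi_{0_N}$ and $Z^{s_m}Z^{*s_m}\varphi_{0_{2m+1}}=\varphi_{0_{2m+1}}$, so iterating $L_{-,\omega_k}=\sqrt{\Gamma_{-,\omega_k}}\,Z_k$ returns ${Z^*}^n\varphi_{0_N}$ to $\varphi_{0_N}$ and ${Z^*}^{s_m}\varphi_{0_{2m+1}}$ to $\varphi_{0_{2m+1}}\propto Z^{2m}\varphi_{0_1}$ (by \eqref{eq:power-Z-varphi-odd}); from $\varphi_{0_1}$, further applications of the $L_{-,\omega_k}$ and \eqref{eq:01k-0} reach the level-$N$ vector $Z^{N-1}\varphi_{0_1}$, which is proportional to $\varphi_{0_N}$ or to $\vk{0_N}$ according to the parity of $N$ and in either case has nonzero overlap with $\varphi_{0_N}$, so $L_{-,\omega_-}=\sqrt{\Gamma_{-,\omega_-}}\,\vk-\vb{\varphi_{0_N}}$ sends it to a nonzero multiple of $\vk-$; finally $\vk+$ is sent by $L_{-,\omega_+}$ to a multiple of $\varphi_{0_1}$, already treated. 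Since any enclosure $F$ is invariant under every Kraus operator and under the drift $G=-iH_{\eff}-\tfrac12\sum_j L_j^*L_j$ (Theorem III.1 of \cite{MR1878987}), and $G$ is block-diagonal with respect to the level decomposition $\H=E_-\oplus E_+\oplus\bigoplus_k E_k$ with a very restricted spectrum on each block (separating $\ran\abs Z_k$ from $\ker\abs Z_k$ and $\varphi_{0_1},\varphi_{0_N}$ from the remaining level vectors), an enclosure $F\subseteq R'$ is constrained enough that the routing above applies to each of its vectors; as no enclosure can meet $\vk-$, this forces $F=\{0\}$.

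The hard part is precisely this last implication — passing rigorously from ``each listed generator of $R'$ reaches $\vk-$'' to ``$R'$ contains no nonzero enclosure''. One must control arbitrary (not merely basis) vectors of $R'$ by means of the $G$-invariance and block structure of enclosures, rule out coincidences among the coefficients $\gamma_{\pm,\omega}$, $\Gamma_{\pm,\omega}$ that could spoil the level grading, and deal with the boundary levels $k=1,N$ (and the parity of $N$) in the telescoping identities. This coupling of $\Omega$ and $\Omega^\perp$ is exactly what the remarks preceding the conjecture leave open: an invariant state may a priori have nontrivial support in $\Omega^\perp$, but the argument shows that the only vector of $\Omega^\perp$ it can charge is the absorbing one $\vk-$.
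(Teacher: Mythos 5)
You should first be clear that the paper contains no proof of this statement: it is posed as a conjecture precisely because the authors could not rule out invariant states charging $\Omega$ and $\Omega^\perp$ simultaneously, and they explicitly defer it to later work. So your text must stand entirely on its own, and at present it does not. The parts that do work: the inclusion $V\oplus\C\vk-\subseteq\cA R$ follows as you say from Remark~\ref{rm:harmonic-invariant-W}, Theorem~\ref{th:invariant-states-from-statesV1} and the invariance of $P_-$, and the reduction of the converse to ``$R'=(V\oplus\C\vk-)^\perp$ contains no nonzero enclosure'' is a standard finite-dimensional manoeuvre (subharmonicity of $P_V+P_-$, a faithful invariant state on $\cA R$, block-diagonality of the Kraus operators and of the drift with respect to a harmonic projection), though each of those steps is invoked rather than proved and should be written out.

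The genuine gap is the one you name yourself, and it is not a formality. An enclosure $F\subseteq R'$ consists of arbitrary linear combinations of $\vk+$, $Z^n\varphi_{0_1}$, ${Z^*}^n\varphi_{0_N}$, ${Z^*}^{s_m}\varphi_{0_{2m+1}}$; applying a fixed word in the Kraus operators to such a combination can produce cancellations, so routing each spanning vector separately to $\vk-$ proves nothing about a general vector of $F$. The separation device you propose is invariance of $F$ under the drift $G=-iH_{\eff}-\tfrac12\sum_j L_j^*L_j$, but $H_{\eff}$ and the dissipative part act on each level essentially through $\abs Z_k$, $P_k$ and the rank-one pieces attached to $\varphi_{0_1},\varphi_{0_N}$; inside $\ran\abs Z_k$ the drift is scalar, while a single level $k$ may contain several of the removed vectors ($Z^{k-1}\varphi_{0_1}$, ${Z^*}^{N-k}\varphi_{0_N}$ and various ${Z^*}^{2m+1-k}\varphi_{0_{2m+1}}$), so the drift cannot separate them and the separation must come from a genuinely combinatorial analysis of the $Z_k,Z_k^*$ action. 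Moreover, ``ruling out coincidences among $\gamma_{\pm,\omega},\Gamma_{\pm,\omega}$'' is not available to you: these are arbitrary positive parameters of the model, so either the argument works uniformly in them or you have only a generic statement, not the conjecture. Until that final step is carried out (including the boundary levels $k=1,N$ and the parity of $N$ in the telescoping identities), what you have is a plausible strategy — essentially a proposed proof outline for an open problem — not a proof.
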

This issue exceeds the scope of this work and will be tackled in a forthcoming paper.
\begin{acknowledgments}
The financial support from CONACYT-Mexico (Grant 221873), PRODEP Red de An\'alisis Italia-UAM M\'exico and  posdoctoral fellowship 511-6/2019.-10951, is gratefully acknowledged.  
\end{acknowledgments}

\def\cprime{$'$} \def\lfhook#1{\setbox0=\hbox{#1}{\ooalign{\hidewidth
  \lower1.5ex\hbox{'}\hidewidth\crcr\unhbox0}}} \def\cprime{$'$}
  \def\cprime{$'$} \def\cprime{$'$} \def\cprime{$'$} \def\cprime{$'$}
  \def\cprime{$'$} \def\cprime{$'$}
\providecommand{\bysame}{\leavevmode\hbox to3em{\hrulefill}\thinspace}
\providecommand{\MR}{\relax\ifhmode\unskip\space\fi MR }
\providecommand{\MRhref}[2]{%
  \href{http://www.ams.org/mathscinet-getitem?mr=#1}{#2}
}
\providecommand{\href}[2]{#2}

\end{document}